\newtheorem{problem}{Problem}
\title{Games on graphs with a public signal monitoring\footnote{This
    work has been supported by ERC project EQualIS (FP7-308087).}}
\author{Patricia Bouyer}
\affil{LSV, CNRS, ENS Paris-Saclay, Universit\'e Paris-Saclay, France}
\subjclass{}
\keywords{}
\begin{document}

\maketitle

\begin{abstract}
  We study pure Nash equilibria in games on graphs with an imperfect
  monitoring based on a public signal. In such games, deviations and
  players responsible for those deviations can be hard to detect and
  track.  We propose a generic epistemic game abstraction, which
  conveniently allows to represent the knowledge of the players about
  these deviations, and give a characterization of Nash equilibria in
  terms of winning strategies in the abstraction. We then use the
  abstraction to develop algorithms for some payoff functions.
\end{abstract}

\section{Introduction}

Multiplayer concurrent games over graphs allow to model rich
interactions between players. Those games are played as follows.  In a
state, each player chooses privately and independently an action,
defining globally a move (one action per player); the next state of
the game is then defined as the successor (on the graph) of the
current state using that move; players continue playing from that new
state, and form a(n infinite) play. Each player then gets a reward
given by a payoff function (one function per player). In particular,
objectives of the players may not be contradictory: those games are
non-zero-sum games, contrary to two-player games used for controller
or reactive synthesis~\cite{thomas02,henzinger05}.

The problem of distributed synthesis~\cite{PR79} can be formulated
using multiplayer concurrent games. In this setting, there is a global
objective $\Phi$, and one particular player called Nature. The
question then is whether the so-called grand coalition (all players
except Nature) can enforce $\Phi$, whatever Nature does. While the
players (except Nature) cooperate (and can initially coordinate),
their choice of actions (or strategy) can only depend on what they see
from the play so far. When modelling distributed synthesis as
concurrent games, information players receive is given via a partial
observation function of the states of the game. When the players have
perfect monitoring of the play, the distributed synthesis problem
reduces to a standard two-player zero-sum game. Distributed synthesis
is a fairly hot topic, both using the formalization via concurrent
games we have already described and using the formalization via an
architecture of processes~\cite{PR90}. The most general decidability
results in the concurrent game setting are under the assumption of
hierarchical observation~\cite{MW05,BMV17} (information received by
the players is ordered) or more recently under recurring common
knowledge~\cite{BM17}.

While distributed synthesis involves several players, this remains
nevertheless a zero-sum question.  Using solution concepts borrowed
from game theory, one can go a bit further in describing the
interactions between the players, and in particular in describing
rational behaviours of selfish players. One of the most basic solution
concepts is that of Nash equilibria~\cite{nash50}. A Nash equilibrium
is a strategy profile where no player can improve her payoff by
unilaterally changing her strategy. The outcome of a Nash equilibrium
can therefore be seen as a rational behaviour of the system.
While very much studied by game theoretists (e.g. over matrix games),
such a concept (and variants thereof) has been only rather recently
studied over games on graphs. Probably the first works in that
direction are~\cite{CMJ04,CHJ06,ummels06,ummels08}. Several series of
works have followed. To roughly give an idea of the existing results,
pure Nash equilibria always exist in turn-based games for
$\omega$-regular objectives~\cite{UW11} but not in concurrent games;
they can nevertheless be computed for large classes of
objectives~\cite{UW11,BBMU15,brenguier16}. The problem becomes harder
with mixed (that is, stochastic) Nash equilibria, for which we often
cannot decide the existence~\cite{UW11a,BMS14}.

Computing Nash equilibria requires to (i) find a good behaviour of the
system; (ii) detect deviations from that behaviour, and identify
deviating players (called deviators); (iii) punish them. This simple
characterization of Nash equilibria is made explicit in~\cite{CFGR16}.
Variants of Nash equilibria require slightly different ingredients,
but they are mostly of a similar vein.

In (almost) all these works though, perfect monitoring is implicitly
assumed: in all cases, players get full information on the states
which are visited; a slight imperfect monitoring is assumed in some
works on concurrent games (like~\cite{BBMU15}), where actions which
have been selected are not made available to all the players (we speak
of hidden actions). This can yield some uncertainties for detecting
deviators but not on states the game can be in, which is rather
limited and can actually be handled.

In this work, we integrate imperfect monitoring into the problem of
deciding the existence of pure Nash equilibria and computing
witnesses. We choose to model imperfect monitoring via the notion of
signal, which, given a joint decision of the players together with the
next state the play will be in, gives some information to the
players. To take further decisions, players get information from the
signals they received, and have perfect recall about the past (their
own actions and the signals they received).  We believe this is a
meaningful framework. Let us give an example of a wireless network in
which several devices try to send data: each device can modulate its
transmission power, in order to maximise its bandwidth and reduce
energy consumption as much as possible. However there might be a
degradation of the bandwidth due to other devices, and the
satisfaction of each device is measured as a compromise between energy
consumption and allocated bandwidth, and is given by a quantitative
payoff function.\footnote{This can be expressed by
  $\payoff_{\text{player}\ i} = \frac{R}{\mathsf{power}_i} \Big(1-
  e^{-0.5 \gamma_i}\Big)^L$ where $\gamma_i$ is the
  signal-to-interference-and-noise ratio for player $i$, $R$ is the
  rate at which the wireless system transmits the information and $L$
  is the size of the packets~\cite{SMG99}.}  In such a problem, it is
natural to assume that a device only gets a global information about
the load of the network, and not about each other device which is
connected to the network. This can be expressed using imperfect
monitoring via public signals.

Following~\cite{tomala98} in the framework of repeated matrix games,
we put forward a notion of \emph{public signal} (inspired
by~\cite{tomala98}). A signal will be said public whenever it is
common to all players. That is, after each move, all the players get
the same information (their own action remains of course private). We
will also distinguish several kinds of payoff functions, depending on
whether they are publicly visible (they only depend on the public
signal), or privately visible (they depend on the public signal and on
private actions: the corresponding player knows his payoff!), or
invisible (players may not even be sure of their payoff).

The payoff functions we will focus on in this paper are Boolean
$\omega$-regular payoff functions and mean payoff functions. Some of
the decidability results can be extended in various directions, which
we will mention along the way.

As initial contributions of the paper, we show some undecidability
results, and in particular that the hypothesis of public signal solely
is not sufficient to enjoy all nice decidability results: for mean
payoff functions, which are privately visible, one cannot decide the
constrained existence of a Nash equilibrium. Constrained existence of
a Nash equilibrium asks for the existence of a Nash equilibrium whose
payoff satisfies some given constraint.

The main contribution of the paper is the construction of a so-called
\emph{epistemic game abstraction}. This abstraction is a two-player
turn-based game in which we show that winning strategies of one of the
players (\Eve) actually correspond to Nash equilibria in the original
game. The winning condition for \Eve is rather complex, but can be
simplified in the case of publicly visible payoff functions.  The
epistemic game abstraction is inspired by both the epistemic unfolding
of~\cite{BKP11} used for distributed synthesis, and the suspect game
abstraction of~\cite{BBMU15} used to compute Nash equilibria in
concurrent games with hidden actions.
In our abstraction, we nevertheless not fully formalize epistemic
unfoldings, and concentrate on the structure of the knowledge which is
useful under the assumption of public signals; we show that several
subset constructions (as done initially in~\cite{reif84}, and since
then used in various occasions, see
e.g.~\cite{CDHR07,DEG10,DDG+10,DR11}) made in parallel, are sufficient
to represent the knowledge of all the players. The framework
of~\cite{BBMU15} happens to be a special case of the public signal
monitoring framework of the current paper. This construction can
therefore be seen as an extension of the suspect game abstraction.

This generic construction can be applied to several frameworks with
publicly visible payoff functions. We give two such applications, one
with Boolean $\omega$-regular payoff functions and one with mean
payoff functions.

\paragraph*{Further Related Works.}  We have already discussed
several kinds of related works. Let us give some final remarks on
related works.

We have mentioned earlier that one of the problems for computing Nash
equilibria is to detect deviations and players who deviated. Somehow,
the epistemic game abstraction tracks the potential deviators,
and even though players might not know who exactly is responsible for
the deviation (there might be several suspects), they can try to
punish all potential suspects. And that what we do here.  Very
recently,~\cite{BR17} discusses the detection of deviators, and give
some conditions for them to become common knowledge of the other
players. In our framework, even though deviators may not become fully
common knowledge, we can design mechanisms to punish the relevant
ones.

Recently imperfect information has also been introduced in the setting
of multi-agent temporal logics~\cite{DEG10,DT11,BMM17,BMM+17}, and the
main decidability results assume hierarchical information. However,
while those logics allow to express rich interactions, it can somehow
only consider qualitative properties. Furthermore, no tight complexity
bounds are provided.

In~\cite{brenguier16}, a deviator game abstraction is proposed. It
twists the suspect game abstraction~\cite{BBMU15} to allow for more
general solution concepts (so-called robust equilibria), but it
assumes visibility of actions (hence remove any kind of
uncertainties). Relying on results of~\cite{BR15}, this deviator game
abstraction allows to compute equilibria with mean payoff
functions. Our algorithms for mean payoff functions will also rely on
the polyhedron problem of~\cite{BR15}.

\section{Definitions}

Throughout the paper, if $\bbS \subseteq \bbR$, we write
$\overline{\bbS}$ for $\bbS \cup \{-\infty,+\infty\}$.

\subsection{Concurrent multiplayer games with signals}

We consider the model of concurrent multi-player games, based on the
two-player model of~\cite{AHK02}. This model of games was used for
instance in~\cite{BBMU15}.  We equip games with \emph{signals}, which
will give information to the players.

\begin{definition}
  A \emph{concurrent game with signals} is a tuple
  \[
  \calG = \tuple{V,v_\init,\Agt,\Act,\Sigma,\Allow,\Tab,(\ell_A)_{A \in
      \Agt}, (\payoff_A)_{A \in \Agt}}
  \] 
  where 
  \begin{itemize}
  \item $V$ is a finite set of vertices, 
  \item $v_\init \in V$ is the initial vertex, 
  \item $\Agt$ is a finite set of players, 
  \item $\Act$ is a finite set of actions, 
  \item $\Sigma$ is a finite alphabet, 
  \item $\Allow\colon V \times \Agt \to 2^\Act\setminus\{\emptyset\}$
    is a mapping indicating the actions available to a given player in
    a given vertex, \footnote{This condition ensures that the game is
      non-blocking.}  $\Tab\colon V \times \Act^{\Agt} \to V$
    associates, with a given vertex and a given action tuple the
    target vertex,
  \item for every $A \in \Agt$, $\ell_A \colon \left(\Act^\Agt \times
      V\right) \to \Sigma$ is a signal, and 
  \item for every $A \in \Agt$, $\payoff_A \colon V \cdot
    \left(\Act^\Agt \cdot V\right)^\omega \to \bbD$ is a payoff
    function with values in a domain $\bbD \subseteq \overline{\bbR}$.
  \end{itemize}
  We say that the game has \emph{public signal} if there is $\ell
  \colon \left(\Act^\Agt \times V\right) \to \Sigma$ such that for
  every $A \in \Agt$, $\ell_A = \ell$.
\end{definition}
The signals will help the players monitor the game: for taking
decisions, a player will have the information given by her signal and
the action she played earlier. A public signal will be a common
information given to all the players. Our notion of public signal is
inspired by~\cite{tomala98} and encompasses the model of~\cite{BBMU15}
where only action names were hidden to the players. Note that
monitoring by public signal does not mean that all the players have
the same information: they have private information implied by their
own actions. 

An element of $\Act^\Agt$ is called a move. When an explicit order is
given on the set of players $\Agt = \{A_1,\ldots,A_{|\Agt|}\}$, we
will write a move $m = (m_A)_{A \in \Agt}$ as
$\tuple{m_{A_1},\ldots,m_{A_{|\Agt|}}}$. If $m \in \Act^\Agt$ and $A
\in \Agt$, we write $m(A)$ for the $A$-component of $m$ and $m(-A)$
for all but the $A$ components of $m$. In particular, we write $m(-A)
= m'(-A)$ whenever $m(B) = m'(B)$ for every $B \in \Agt \setminus
\{A\}$. 

A \emph{full history} $h$ in $\calG$ is a finite sequence
\[
v_0 \cdot m_0 \cdot v_1 \cdot m_1 \ldots m_{k-1} \cdot v_k \in V \cdot
\left(\Act^\Agt \cdot V\right)^*
\]
such that for every $0 \le i < k$, $m_i \in \Allow(v_{i})$ and
$v_{i+1} = \Tab(v_i,m_i)$. For readability we will also write $h$ as
\[
v_0 \xrightarrow{m_0} v_1 \xrightarrow{m_1} \ldots
\xrightarrow{m_{k-1}} v_k
\]

We write $\last(h)$ for the last vertex of $h$, that is, $v_k$. If $i
\le k$, we also write $h_{\ge i}$ (resp. $h_{\le i}$) for the suffix
$v_i \cdot m_i \cdot v_{i+1} \cdot m_{i+1} \ldots m_{k-1} \cdot v_k$
(resp. prefix $v_0 \cdot m_0 \cdot v_1 \cdot m_1 \ldots m_{i-1} \cdot
v_i$).  We write $\Hist_\calG(v_0)$ (or simply $\Hist(v_0)$ if $\calG$
is clear in the context) for the set of full histories in $\calG$ that
start at $v_0$.  If $h \in \Hist(v_0)$ and $h' \in \Hist(\last(h))$,
then we write $h \cdot h'$ for the obvious concatenation of both
histories (it then belongs to $\Hist(v_0)$).

Let $A \in \Agt$ be a player. 
The projection of $h$ for $A$ is denoted $\pi_A(h)$ and is defined by
\[
v_0 \cdot (m_1(A),\ell_A(m_1,v_1)) \cdot (m_2(A),\ell_A(m_2,v_2))
\ldots (m_k(A),\ell_A(m_k,v_k)) \in V \times \left(\Act \times
  \Sigma\right)^*
\]
This will be the information available to player $A$: it contains both
the actions she played so far and the signal she received. Note that
we assume perfect recall, that is, while playing, $A$ will remember
all her past knowledge, that is, all of $\pi_A(h)$ if $h$ has been
played so far.  We define the \emph{undistinguishability relation
  $\sim_A$} as the equivalence relation over full histories induced by
$\pi_A$: for two histories $h$ and $h'$, $h \sim_A h'$ iff $\pi_A(h) =
\pi_A(h')$. While playing, if $h \sim_A h'$, $A$ will not be able to
know whether $h$ or $h'$ has been played.  We also define the
$A$-label of $h$ as $\ell_A(h) = \ell_A(m_0,v_1) \cdot \ell_A(m_1,v_2)
\ldots \ell_A(m_{k-1},v_k)$.\footnote{Note that standardly (see
  e.g.~\cite{MW05,BKP11,DR11}), the undistinguishability relation for
  the players is defined according to $\ell_A$, not $\pi_A$ as we do
  here. The difference is that we formally integrate actions of the
  players in the memory. While this actually makes no difference for
  problems like distributed synthesis or even in our framework here
  (where strategies are essentially quantified from the start of the
  game), we believe this is what should be done when studying complex
  interactions between players. For instance this will make a
  difference for subgame-perfect equilibria. We discuss this in
  Appendix~\ref{app:discuss}.}

We extend all the above notions to infinite sequences in a
straightforward way and to the notion of \emph{full play}. We write
$\Plays_\calG(v_0)$ (or simply $\Plays(v_0)$ if $\calG$ is clear in
the context) for the set of full plays in $\calG$ that start at $v_0$.

In the following, we will say that the game $\calG$ has \emph{publicly
  (resp. privately) visible payoffs} if for every $A \in \Agt$, for
every $v_0 \in V$, for every $\rho,\rho' \in \Plays(v_0)$,
$\ell_A(\rho) = \ell_A(\rho')$ (resp. $\rho \sim_A \rho'$) implies
$\payoff_A(\rho) = \payoff_A(\rho')$. Otherwise they are said
\emph{invisible}.  Private visibility of payoffs, while not always
assumed (see for instance~\cite{DDG+10,BMM+17}), are reasonable
assumptions: using only her knowledge, a player knows her
payoff. Public visibility is more restrictive, but will be required
for some of the results.

Let $A \in \Agt$ be a player. A \emph{strategy} for player $A$ from
$v_0$ is a mapping $\sigma_A \colon \Hist(v_0) \to \Act$ such that for
every history $h \in \Hist(v_0)$, $\sigma(h) \in \Allow(\last(h))$. It
is said \emph{$\ell_A$-compatible} whenever furthermore, for all
histories $h,h' \in \Hist(v_0)$, $h \sim_A h'$ implies $\sigma_A(h) =
\sigma_A(h')$. An \emph{outcome} of $\sigma_A$ is a(n infinite) play
$\rho = v_0 \cdot m_0 \cdot v_1 \cdot m_1 \ldots$ such that for every
$i \ge 0$, $\sigma_A(\rho_{\le i}) = m_i(A)$. We write
$\out(\sigma_A,v_0)$ for the set of outcomes of $\sigma_A$ from $v_0$.

A \emph{strategy profile} is a tuple $\sigma_\Agt=(\sigma_A)_{A \in
  \Agt}$, where, for every player $A \in \Agt$, $\sigma_A$ is a
strategy for player $A$. The strategy profile is said
\emph{info-compatible} whenever each $\sigma_A$ is
$\ell_A$-compatible. We write $\out(\sigma_\Agt,v_0)$ for the unique
full play from $v_0$, which is an outcome of all strategies part of
$\sigma_\Agt$.

When $\sigma_\Agt$ is a strategy profile and $\sigma'_A$ a player-$A$
strategy, we write $\sigma_\Agt[A/\sigma'_A]$ for the profile where
$A$ plays according to $\sigma'_A$, and each other player $B$ plays
according to $\sigma_B$. The strategy $\sigma'_A$ is a
\emph{deviation} of player $A$, or an
\emph{$A$-deviation}. 

\begin{definition}
  A \emph{Nash equilibrium} from $v_0$ is an info-compatible strategy
  profile $\sigma$ such that for every $A \in \Agt$, for every
  player-$A$ $\ell_A$-compatible strategy $\sigma'_A$,
  $\payoff_A\Big(\out(\sigma,v_0)\Big) \ge
  \payoff_A\Big(\out(\sigma[A/\sigma'_A],v_0)\Big)$.
\end{definition}
In this definition, deviation $\sigma'_A$ needs not be
$\ell_A$-compatible, since the only meaningful part of $\sigma'_A$ is
along $\out(\sigma[A/\sigma'_A],v_0)$, where there are no
$\sim_A$-equivalent histories: any deviation can be made
$\ell_A$-compatible without affecting the profitability of the
resulting outcome.
Note also that there might be an $A$-deviation $\sigma'_A$ which is
not observable by another player $B$ ($\out(\sigma,v_0) \sim_B
\out(\sigma[A/\sigma'_A],v_0)$), and there might be two deviations
$\sigma'_B$ (by player $B$) and $\sigma'_C$ (by player $C$) that
cannot be distinguished by player $A$ ($\out(\sigma[B/\sigma'_B],v_0)
\sim_A \out(\sigma[C/\sigma'_C],v_0)$).  Tracking such deviations will
be the core of the abstraction we will develop.

\subparagraph*{Payoff functions.}  In the following we will consider
various payoff functions. Let $\Phi$ be an $\omega$-regular property
over some alphabet $\Gamma$. The function $\pay_\Phi \colon
\Gamma^\omega \to \{0,1\}$ is defined by, for every $\mathbf{a} \in
\Gamma^\omega$, $\pay_\Phi(\mathbf{a}) = 1$ if and only if $\mathbf{a}
\models \Phi$.  A publicly (resp. privately) visible payoff function
$\payoff_A$ for player $A$ is said associated with $\Phi$ over
$\Sigma$ (resp. $\Act \times \Sigma$) whenever it is defined by
$\payoff_A(\rho) = \pay_\Phi(\ell_A(\rho))$ (resp. $\payoff_A(\rho) =
\pay_\Phi(\pi_A(\rho)_{-v_0})$, where $\pi_A(\rho)_{-v_0}$ crops the
first $v_0$). Such a payoff function is called a Boolean
$\omega$-regular payoff function.

Let $\Gamma$ be a finite alphabet and $w \colon \Gamma \to \bbZ$ be a
weight assigning a value to every letter of that alphabet. We define
two payoff functions over $\Gamma^\omega$ by, for every $\mathbf{a} =
(a_i)_{i \ge 1} \in \Gamma^\omega$, $\pay_{\underline{\MP}_w}
(\mathbf{a}) = \liminf_{n \to \infty} \frac{1}{n} \sum_{i=1}^n w(a_i)$
and $\pay_{\overline{\MP}_w} (\mathbf{a}) = \limsup_{n \to \infty}
\frac{1}{n} \sum_{i=1}^n w(a_i)$.  A publicly visible payoff function
$\payoff_A$ for player $A$ is said associated with the liminf
(resp. limsup) mean payoff of $w$ whenener it is defined by
$\payoff_A(\rho) = \pay_{\underline{\MP}_w}(\ell_A(\rho))$
(resp. $\pay_{\overline{\MP}_w} (\ell_A(\rho))$). A privately visible
payoff function $\payoff_A$ for player $A$ is said associated with the
liminf (resp. limsup) mean payoff of $w$ whenener it is defined by
$\payoff_A(\rho) = \pay_{\underline{\MP}_w}(\pi_A(\rho)_{-v_0})$
(resp. $\pay_{\overline{\MP}_w}(\pi_A(\rho)_{-v_0})$).
 
A payoff function $\payoff \colon V \times (\Act^\Agt \times V)^\omega
\to \bbD$ is said \emph{prefix-independent} whenever for every full
play $\rho$, for every suffix $\rho_{\ge i}$ of $\rho$, $\payoff(\rho)
= \payoff(\rho_{\ge i})$.


  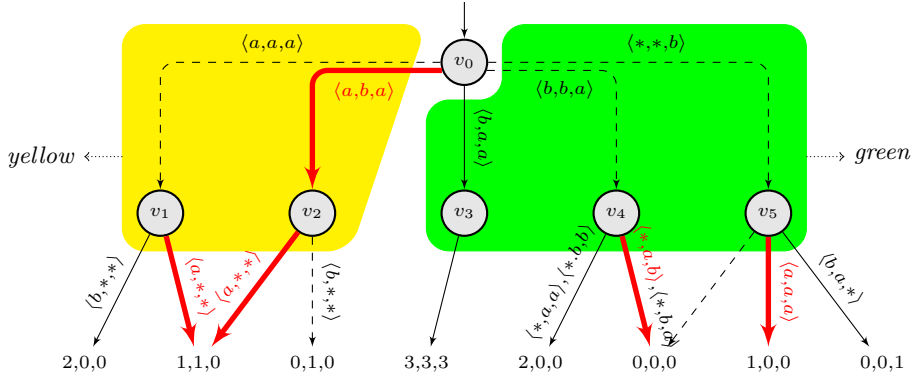
\begin{figure}[t]
    \begin{center}
    \begin{tikzpicture}[rounded corners=2mm]
      \everymath{\scriptstyle}
      

      \tikzset{noeud/.style={circle,draw=black,thick,fill=black!10,minimum
          height=6mm,inner sep=0pt}}
      
      \filldraw [green,rounded corners=3mm] (-.5,-.5) -- (-.5,-2.5) --
      (4.5,-2.5) -- (4.5,.5) --(.5,.5) -- (.5,-.5) --cycle;

      \filldraw [yellow,rounded corners=3mm] (-.5,.5) -- (-1.5,-2.5) --
      (-4.5,-2.5) -- (-4.5,.5) --cycle;

      \draw [->,densely dotted] (-4.5,-1.25) -- +(-.5,0) node
      [left]{{\small \textit{yellow}}};
      \draw [->,densely dotted] (4.5,-1.25) -- +(.5,0) node
      [right]{{\small \textit{green}}};
      
      \draw (0,0) node [noeud] (v) {$v_0$};
      \draw (-4,-2) node [noeud] (v1) {$v_1$};
      \draw (-2,-2) node [noeud] (v2) {$v_2$};
      \draw (0,-2) node [noeud] (v3) {$v_3$};
      \draw (2,-2) node [noeud] (v4) {$v_4$};
      \draw (4,-2) node [noeud] (v5) {$v_5$};
      \draw (-5,-4) node (200) {$2,0,0$};
      \draw (-3.5,-4) node (110) {$1,1,0$};
      \draw (-2,-4) node (010) {$0,1,0$};
      \draw (-.5,-4) node (333) {$3,3,3$};
      \draw (1,-4) node (200bis) {$2,0,0$};
      \draw (2.5,-4) node (000) {$0,0,0$};
      \draw (4,-4) node (100) {$1,0,0$};
      \draw (5.5,-4) node (001) {$0,0,1$};
      \draw [latex'-] (v) -- +(0,.8);
      \draw [-latex',dashed] (v.180) -| (v1) node [pos=.3,sloped,above] {$\tuple{a,a,a}$};
      \draw [-latex',red,line width=2pt] (v.-160) -| (v2) node [pos=.3,sloped,below] {$\tuple{a,b,a}$};
      \draw [-latex'] (v) -- (v3) node [pos=.5,sloped,above] {$\tuple{b,a,a}$};
      \draw [-latex',dashed] (v.-20) -| (v4) node [pos=.3,sloped,below] {$\tuple{b,b,a}$};
      \draw [-latex',dashed] (v) -| (v5) node [pos=.3,sloped,above] {$\tuple{*,*,b}$};
      \draw [-latex',red,line width=2pt] (v2) -- (110) node [midway,sloped,above] {$\tuple{a,*,*}$};
      \draw [-latex',dashed] (v2) -- (010) node [midway,sloped,above] {$\tuple{b,*,*}$};
      \draw [-latex'] (v1) -- (200) node [midway,sloped,above] {$\tuple{b,*,*}$};
      \draw [-latex',red,line width=2pt] (v1) -- (110) node [midway,sloped,above] {$\tuple{a,*,*}$};
      \draw [-latex'] (v3) -- (333) node [midway,sloped,above]  {};
      \draw [-latex'] (v4) -- (200bis) node [midway,sloped,above] {$\tuple{*,a,a},\tuple{*.b,b}$};
      \draw [-latex',red,line width=2pt] (v4) -- (000) node [midway,sloped,above] {$\tuple{*,a,b}\textcolor{black}{,\tuple{*,b,a}}$};
      \draw [-latex',dashed] (v5) -- (000);
      \draw [-latex',red,line width=2pt] (v5) -- (100) node [midway,sloped,above] {$\tuple{a,a,a}$};
      \draw [-latex'] (v5) -- (001) node [midway,sloped,above] {$\tuple{b,a,*}$};

    \end{tikzpicture}
    \end{center}
    \caption{An example of a concurrent game with public signal
      (yellow and green: public signal). Edges in red and bold are
      part of the strategy profile. Dashed edges are the possible
      deviations. One can notice that none of the deviations is
      profitable to the deviator, hence the strategy profile is a Nash
      equilibrium. Convention in the drawing: edges with no label are
      for complementary labels (for instance the edge from $v_5$ to
      $0,0,0$ is labelled by all $\tuple{a_1,a_2,a_3}$ not in the set
      $\{\tuple{a,a,a},\tuple{b,a,a},\tuple{b,a,b}\}$)}
    \label{fig:ex}
  \end{figure}

\begin{example}
  We now illustrate most notions on the game of Fig.~\ref{fig:ex}.
  This is a game with three players $A_1$, $A_2$ and $A_3$, and which
  is played basically in two steps, starting at $v_0$.  Graphically an
  edge labelled $\tuple{a_1,a_2,a_3}$ between two vertices $v$ and
  $v'$ represents the fact that $a_i \in \Allow(v,A_i)$ for every $i
  \in \{1,2,3\}$ and that $v'=\Tab(v,\tuple{a_1,a_2,a_3})$.  As a
  convention, $*$ stands for both $a$ and $b$.  For readability,
  bottom vertices explicitely indicate the payoffs of the three
  players (same order as for actions) if the game ends in that vertex.

  After the first step of the game, signal yellow or green is sent to
  all the players.
  Histories $v_0 \cdot \tuple{a,b,a} \cdot v_2$ and $v_0 \cdot
  \tuple{a,a,a} \cdot v_1$ are
  undistinguishable 
  by $A_1$ and $A_3$ (same action, same signal), but they can be
  distinguished by $A_2$ because of different actions (even if the
  same signal is observed).

  In bold red, we have depicted a strategy profile, which is actually
  a Nash equilibrium. We analyze the possible deviations in this game
  to argue for this.
  \begin{itemize}
  \item First there is an $A_2$-deviation to $v_1$. This deviation is
    invisible to both players $A_1$ and $A_3$. For this reason, the
    strategy out of $v_1$ for $A_1$ is to play $a$ (same as out of
    $v_2$).
    On the other hand, even though this would be profitable to her,
    $A_1$ cannot deviate from $v_1$, since we are in a branch where
    $A_2$ has already deviated, and at most one player is allowed to
    deviate at a time (and anyway $A_1$ does not know that they are in
    state $v_1$).
  \item There is an $A_1$-deviation from $v_2$ to $0,1,0$, which is
    not profitable to $A_1$.
  \item On the other hand, there is no possible deviation to $v_3$,
    since this would require two players to change their actions
    simultaneously ($A_1$ and $A_2$).
  \item Then, there is an $A_1$-deviation to $v_4$ and another
    $A_3$-deviation to $v_5$; both activate the green signal. $A_2$
    knows there has been a deviation (because of the green signal),
    but she doesn't know who has deviated and whether the game
    proceeds to $v_4$ or $v_5$ (but she knows that if $A_1$ has
    deviated, then we are in $v_4$, and if $A_3$ has deviated, we are
    in $v_5$). Then, $A_2$ has to find a way to punish both players,
    to be safe.  On the other hand, both players $A_1$ and $A_3$
    precisely know what has happened: in case she didn't deviate
    herself, she knows the other one deviated! And she knows in which
    state the game is in. Hence in state $v_4$, $A_3$ can help player
    $A_2$ punishing $A_1$, whereas in state $v_5$, $A_1$ can help
    player $A_2$ punishing $A_3$. Examples of punishing moves are
    therefore those depicted in red and bold; and they are part of the
    global strategy profile.
    Note that the action of $A_2$ out of $v_5$ has to be the same as
    the one out of $v_4$: this is required given the imperfect
    knowledge of $A_2$. On the other hand, the action of $A_3$ can be
    different out of $v_4$ and out of $v_5$ (which is the case in the
    given example profile).
  \end{itemize}
\end{example}

\begin{remark}
  To be fully formal, we use rather heavy notations for concurrent
  games. We will use them for the main technical construction of the
  paper (the epistemic game abstraction), but we will allow ourselves
  simpler notations in other proofs (undecidability and complexity
  reductions). For instance, instead of fully defining components
  $\Allow$ and $\Tab$ for a concurrent game, we will better write
  transitions $q \xrightarrow{\tuple{a_1,a_2,a_3}} q'$ to denote the
  fact that $a_i \in \Allow(q,A_i)$ ($i \in \{1,2,3\}$) and
  $q'=\Tab(q,\tuple{a_1,a_2,a_3})$.
\end{remark}

\subsection{Two-player turn-based game structures}

They are specific cases of the previous model, where at each vertex,
at most one player has more than one action in her set of allowed
actions. But for convenience, we will give a simplified definition,
with only objects that will be useful.
A two-player turn-based game structure is a tuple $G =
\tuple{S,S_\Eve,S_\Adam,s_\init,A,\Allow,\Tab}$, where $S = S_\Eve
\sqcup S_\Adam$ is a finite set of states (states in $S_\Eve$ belong
to player \Eve whereas states in $S_\Adam$ belong to player \Adam),
$s_\init \in S$ is the initial state, $A$ is a finite alphabet,
$\Allow \colon S \to 2^A \setminus \{\emptyset\}$ gives the set of
available actions, and $\Tab \colon S \times A \to S$ is the
next-state function. If $s \in S_\Eve$ (resp. $S_\Adam$), $\Allow(s)$
is the set of actions allowed to \Eve (resp. \Adam) in state $s$.

In this context, strategies will see sequences of states and actions,
with full information. Note that we do not include any winning
condition or payoff function in the tuple, hence the name structure.

\subsection{The problems we are looking at}

We are interested in the constrained existence of a Nash
equilibrium. For simplicity, we define constraints using non-strict
thresholds constraints, but could well impose more involved
constraints, like Boolean combinations of linear constraints.

\begin{problem}[Constrained existence problem]
  Given a game with signals $\calG = \langle
  V,v_\init,\Agt,\Act,\Sigma,$ $\Allow,\Tab,(\ell_A)_{A \in \Agt},
  (\payoff_A)_{A \in \Agt} \rangle$ and threshold vectors $(\nu_A)_{A
    \in \Agt}$, $(\nu'_A)_{A \in \Agt} \in \overline{\bbQ}^\Agt$, can
  we decide whether there exists a Nash equilibrium $\sigma_\Agt$ from
  $v_\init$ such that for every $A \in \Agt$, $\nu_A \le
  \payoff_A(\out(\sigma_\Agt,v_\init)) \le \nu'_A$?  If so, compute
  one.
  If the constraints on the payoff are trivial (that is, $\nu_A =
  -\infty$ and $\nu'_A = +\infty$ for every $A \in \Agt$), we simply
  speak of the existence problem.
\end{problem}



\subsection{First undecidability results}

\subparagraph{Undecidability of the problem for general signals and
  publicly visible Boolean $\omega$-regular objectives.}

Using the model of concurrent games with signals we can express the
grand-coalition problem in games with imperfect information
of~\cite{BK10,BKP11} (which is strongly related to the distributed
synthesis problem~\cite{PR90}). We can therefore easily infer
undecidability of the constrained existence and then of the existence
problem, even for simple Boolean payoff functions and few players.

\begin{theorem}
  The existence problem in games with signals is undecidable with
  three players and publicly visible Boolean $\omega$-regular payoff
  functions.
\end{theorem}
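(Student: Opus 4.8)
The plan is to reduce from the grand-coalition problem for games with imperfect information of~\cite{BK10,BKP11} (itself a reformulation of the undecidable distributed synthesis problem~\cite{PR90}): given an arena, observation functions for the coalition players and an $\omega$-regular objective $\Phi$, decide whether the grand coalition can enforce $\Phi$ against Nature. I would encode such an instance as a three-player concurrent game with signals $\calG$ in which $A_1,A_2$ play the coalition and $A_3$ plays Nature. The key point, already emphasized just after the definition of public signal, is that monitoring by a public signal does \emph{not} equalize information: each player privately remembers her own actions, so $A_1$ and $A_2$ end up with incomparable knowledge, and it is precisely this incomparability between two coalition members that makes the grand-coalition problem undecidable. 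I would therefore let the public signal $\ell$ carry the common observable part of the arena, and in particular enough to evaluate $\Phi$, so that $\Phi$ becomes a property over $\Sigma^\omega$; the private, incomparable observations of the two coalition players are reproduced through their own (hidden) actions along the gadgets simulating the source arena. The payoffs are set to $\payoff_{A_1}(\rho)=\payoff_{A_2}(\rho)=\pay_\Phi(\ell(\rho))$ and $\payoff_{A_3}(\rho)=\pay_{\neg\Phi}(\ell(\rho))$, which are publicly visible Boolean $\omega$-regular payoff functions as required, expressing that the coalition wants $\Phi$ and Nature wants its negation.

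\textbf{Correctness, constrained version first.} I would first observe that imposing the constraint ``$\Phi$ holds'' (i.e. $\payoff_{A_1}\ge 1$) already yields an undecidable problem. If the coalition can enforce $\Phi$, then the profile where $A_1,A_2$ follow an (info-compatible) winning strategy and $A_3$ plays arbitrarily is a Nash equilibrium with payoffs $(1,1,0)$: the coalition is at its maximum, and $A_3$ cannot reach $1$ since the winning strategy guarantees $\Phi$ against every $A_3$-strategy. Conversely, any Nash equilibrium $\sigma$ whose outcome satisfies $\Phi$ must have $(\sigma_{A_1},\sigma_{A_2})$ winning: the profile is info-compatible, hence $(\sigma_{A_1},\sigma_{A_2})$ is an observation-based coalition strategy, and if it were not winning, Nature would have a response making $\neg\Phi$ hold, a strictly profitable deviation for $A_3$ (from $0$ to $1$), contradicting the equilibrium. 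This gives undecidability of the constrained existence problem.

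\textbf{From constrained existence to plain existence.} The residual difficulty is that, when the coalition cannot win, there may be spurious equilibria whose outcome violates $\Phi$ (payoff profile $(0,0,1)$) and which no single coalition player can unilaterally repair. To eliminate them I would prepend a one-shot, publicly observed gate in the style of matching pennies: at a fresh initial vertex $A_1$ and $A_3$ concurrently choose a bit; if the bits agree the play enters $\calG$ and proceeds as above, and if they disagree it reaches a sink whose publicly visible reachability payoff is $1$ for $A_1$ and $0$ for $A_3$. If the coalition can enforce $\Phi$, matching and then playing a winning profile is an equilibrium with payoffs $(1,1,0)$. If it cannot, every profile is destabilized: a matched play whose continuation satisfies $\Phi$ lets $A_3$ deviate inside $\calG$ to defeat $\Phi$ (possible since the fixed info-compatible coalition strategy is not winning); a matched play violating $\Phi$ lets $A_1$ switch to mismatch and jump from $0$ to $1$; and a mismatched play lets $A_3$ switch to match and then defeat $\Phi$, again gaining $1$. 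Hence a Nash equilibrium of the gated game exists iff the grand coalition wins, and the added payoffs stay publicly visible Boolean $\omega$-regular while the number of players remains three.

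\textbf{Main obstacle.} The delicate step is the faithful simulation of the coalition's imperfect information by a public signal together with private actions, that is, arranging that the info-compatible strategy profiles of $\calG$ correspond \emph{exactly} to the observation-based coalition strategies of the source instance (so that ``$(\sigma_{A_1},\sigma_{A_2})$ is winning'' transfers in both directions), all while keeping $\Phi$ expressible over $\Sigma^\omega$ so that the payoffs remain publicly visible. Once this correspondence is established, the payoff assignment and the gate argument above are routine.
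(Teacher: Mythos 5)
Your reduction fails at precisely the step you defer to your ``main obstacle'' paragraph, and that obstacle is not merely delicate: it is insurmountable. The theorem you are proving concerns games with \emph{arbitrary} (private) signals; its whole point is that letting each player receive her own signal yields undecidability even when payoffs are publicly visible. You instead insist on encoding the instance with a \emph{public} signal, hoping to recover the two coalition players' incomparable observations from their privately remembered actions. This cannot succeed, because the paper itself shows (via the epistemic abstraction, Theorem~\ref{theo:correction}, and Lemma~\ref{lemma:parity}, Corollary~\ref{coro:Boolean}, Theorem~\ref{theo:Boolean}) that for games with a public signal and publicly visible payoffs given by parity conditions, the (constrained) existence problem is decidable in exponential space; and a publicly visible Boolean $\omega$-regular payoff over $\Sigma^\omega$ can be reduced to that case by taking the product with a deterministic parity automaton reading the public signal, which preserves the public-signal structure. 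So if your encoding were faithful, it would contradict the paper's own decidability results.

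The conceptual reason the simulation cannot be faithful is that private actions are endogenous: along histories consistent with a fixed $\ell_A$-compatible strategy, player $A$'s actions are a function of the public-signal history, so everything she learns about Nature's ($A_3$'s) choices comes through the public signal, which the other coalition player sees identically. Incomparable, exogenous observations of the environment --- the actual source of undecidability of the grand-coalition problem --- cannot be created this way. The fix is the paper's route: keep the signals private, taking $\ell_{A}$ to be player $A$'s observation function from the imperfect-information game of~\cite{BK10} (a signal depending only on the visited vertex), and make only the \emph{payoff} publicly visible, e.g.\ by publicly revealing the event that the reachability objective has been met. With that change, the remainder of your argument --- payoffs $\Phi$ for the coalition and $\neg\Phi$ for Nature, the equivalence for the constrained problem, and the matching-pennies gate to pass from constrained existence to plain existence --- is essentially the paper's own proof and goes through.
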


\begin{proof}
  A game of imperfect information as defined in~\cite{BK10} can be
  seen as a concurrent game, with an extra player to resolve the
  non-determinism allowed in the former model, and signals that only
  depend on the current visited vertex. Following this modelization,
  the grand-coalition problem then asks whether there are
  info-compatible strategies for the original players (they form the
  so-called ``grand-coalition''), such that for every strategy of the
  extra player, the outcome of the resulting strategy profile
  satisfies the given winning condition (for instance an
  $\omega$-regular winning condition).  It is known that the
  grand-coalition question is undecidable, already for reachability
  properties and two players~\cite{BK10}. Such a reachability can be
  obviously made publicly visible to the players (by revealing it when
  it is reached).

  We will explain how this can be coded into a constrained existence
  problem over the same game, with three players and Boolean
  $\omega$-regular payoff functions.
  We assign to each of the original players a Boolean payoff
  function corresponding to their original $\omega$-regular winning
  condition $\Phi$, and to the extra player a Boolean payoff
  function corresponding to the negation of $\Phi$. The
  grand-coalition has a winning strategy in the initial game if and
  only if there is a Nash equilibrium in the new game, where the
  original players have a payoffs $1$ (which implies that the extra
  player has payoff $0$, but cannot improve). This shows the
  undecidability of the constrained existence problem.

  \medskip Using a trick already used in~\cite{BBMU15}, we can extend
  this undecidability proof to the existence problem as follows. Let
  $\calG$ be the game discussed above with objective $\Phi$, and build
  $\calG'$ as on the next picture. Original players appear first in
  the action tuples, say they are ordered as $\Agt =
  \{A_1,\ldots,A_N\}$, and the extra player is at the end, called
  $A_{\text{extra}}$.
  \begin{center}
    \begin{tikzpicture}[thick]
      \tikzset{noeud/.style={circle,draw=black,thick,fill=black!10,minimum
          height=6mm,inner sep=0pt}}
      \draw (0,0) node [noeud] (A) {$v'_\init$};
      \draw (30:1.6cm) node [noeud] (B) {$v'$};
      \draw (3,1.5) node  {payoff $\tuple{1,\ldots,1,0}$};
      \draw (-30:1.6cm) node [noeud] (C) {$v_\init$};
      \begin{scope}[xshift=1.1cm,yshift=-.75cm]
        \draw[dashed, rounded corners=2mm] (0,0) .. controls +(90:5mm) .. (1,.5) -- (2,.9) -- (2.5,.6) --
        (3,0) -- (2.5,-.8) -- (2,-.9) -- (1,-.7) .. controls +(150:5mm) and
        +(-90:5mm) .. (0,0);
        \draw (1.8,-0.1) node {\begin{minipage}{2.1cm}\footnotesize\centering
            Copy of~$\calG$
          \end{minipage}};
      \end{scope}
      \draw[-latex'] (A) -- (B) node[midway,above left=-2pt] {$\scriptstyle \tuple{-,\ldots,-,a,a}, \tuple{-,\ldots,-,b,b}$};
      \draw[-latex'] (A) -- (C) node[midway,below left=-2pt] {$\scriptstyle \tuple{-,\ldots,-,a,b}, \tuple{-,\ldots,-,b,a}$};
      \draw[-latex'] (B) .. controls +(30:10mm) and +(-30:10mm) .. (B);
      \draw[dashed] (C) -- +(40:5mm);
      \draw[dashed] (C) -- +(0:4mm);
      \draw[dashed] (C) -- +(-40:5mm);
    \end{tikzpicture}
  \end{center}
  The new initial vertex is $v'_\init$, and from there, $A_1$ and
  $A_{\text{extra}}$ play a matching-penny game (symbol ``$-$''
  indicates that the other players have a single action, with no
  impact). The new payoff functions are as follows:
  \begin{itemize}
  \item Players $A_i$ ($1 \le i \le N$) get $1$ whenever the game ends
    in $v'$, or gets what $\payoff_{A_i}$ was giving $A_i$ if the game
    proceeds to $\calG$;
  \item Player $A_{\text{extra}}$ gets $0$ whenever the game ends in
    $v'$, or gets what $\payoff_{A_{\text{extra}}}$ was giving her if
    the game proceeds to $\calG$.
  \end{itemize}

  We will check that there is a Nash equilibrium in $\calG'$ if and
  only if there is a Nash equilibrium where all players except
  $A_{\text{extra}}$ gets $1$ in $\calG$.
  Indeed, assume that $\sigma_\Agt$ is a Nash equilibrium in $\calG$
  where all players except $A_{\text{extra}}$ gets $1$. Then playing
  $\tuple{-,\ldots,-,a,b}$ followed by $\sigma_\Agt$ is a Nash
  equilibrium in $\calG'$.

  Conversely assume there is a Nash equilibrium $\sigma_\Agt$ in
  $\calG'$. Assume that the outcome of $\sigma_\Agt$ goes to $\calG$.
  Since $A_N$ has no profitable deviation, this means that $A_N$ has
  payoff $1$ along that outcome (otherwise $A_N$ would deviate to
  $v'$). Then the strategy profile after $v_\init$ is a suitable Nash
  equilibrium in $\calG$.  Assume now that the outcome of
  $\sigma_\Agt$ goes to $v'$. Since $A_{\text{extra}}$ has no
  profitable deviation, this means in particular that the strategy
  profile which plays according to $\sigma_\Agt$ after $v_\init$ is a
  suitable Nash equilibrium in $\calG$.
\end{proof}

\subparagraph{Undecidability of the problem for public signals and
  private visible mean payoff functions.}

We show here that public payoff functions (hence somehow public
signals) are necessary to have our main result
(Theorem~\ref{theo:MP}).  We argue this in the framework of
mean payoff functions.

\begin{theorem}
  \label{theo:MP-undec3}
  The constrained existence problem in games with a public signal is
  undecidable with two players and privately visible mean payoff
  functions.
\end{theorem}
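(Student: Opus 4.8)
The plan is to reduce from the halting problem of a deterministic two-counter (Minsky) machine $M$, exploiting the phenomenon—familiar from partial-observation mean-payoff games in the spirit of~\cite{DDG+10}—that a player who cannot observe an unbounded counter is forced to commit to a single behaviour that must be sound against everything her opponent can secretly do. Since the signal is public, the only private information a player has is her own action sequence; the genuine imperfect information in the reduction will therefore come entirely from two facts: $A_1$ does not observe $A_2$'s actions, and the public signal $\ell$ is \emph{strictly coarser} than the vertex, so that $A_2$'s unobserved moves may lead to vertices that $A_1$ cannot tell apart under $\ell$. The crucial design choice that keeps us in the privately (and not publicly) visible regime is that the weight $A_1$ collects at each step depends on the public signal \emph{and on $A_1$'s own action}: her mean payoff is then a function of $\pi_{A_1}$ (privately visible) but not of $\ell_{A_1}(\rho)$ alone, so Theorem~\ref{theo:MP} does not apply.

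Concretely, I would build a two-player game with public signal $\calG_M$ in which $A_1$ is a \emph{simulator}: through her private actions she announces, step by step, the instruction executed and the increment/decrement/zero-test performed, while the public signal reveals only the control location of the simulated run. Player $A_2$ supplies the hidden nondeterminism: at vertices that are $\ell$-indistinguishable for $A_1$, her unobserved actions fix the true counter bookkeeping, so that the correct (faithful) action of $A_1$ depends on the true vertex she cannot identify. The weights are then chosen so that the privately visible payoff $\payoff_{A_1}(\rho)=\pay_{\underline{\MP}_w}(\pi_{A_1}(\rho)_{-v_0})$ satisfies three properties: (i) a faithful infinite run yields payoff exactly $0$ whatever $A_2$ does; (ii) any unfaithful step is penalised (payoff $<0$) in the branch that exposes it; and (iii) no run is ever rewarded strictly above $0$, so that faithful play is a best response against \emph{every} fixed strategy of $A_2$. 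The constrained existence instance then asks for a Nash equilibrium with $\payoff_{A_1}\ge 0$, after wiring $A_2$'s own (privately visible) payoff so that, in equilibrium, $A_2$ has no incentive to steer into the branches that would let $A_1$ off the hook.

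The correspondence is argued in two directions. Forward: from a Nash equilibrium $\sigma$ with $\payoff_{A_1}(\out(\sigma,v_\init))\ge 0$, the absence of a profitable $A_2$-deviation together with the engineered opposition of $A_2$ should force $\payoff_{A_1}(\out(\sigma[A_2/\sigma'_{A_2}],v_\init))\ge 0$ for \emph{every} $A_2$-deviation $\sigma'_{A_2}$; hence the $\ell_{A_1}$-compatible strategy $\sigma_{A_1}$ guarantees $\ge 0$ against all opponents, which by (i)--(ii) can only be a genuinely faithful infinite simulation, so $M$ does not halt. Backward: if $M$ does not halt there is a faithful infinite run, which $A_1$ can play blindly; by (iii) this admits no profitable $A_1$-deviation against any fixed $A_2$, and by (i) $A_2$ is indifferent, so a bona fide equilibrium with payoff $0$ exists. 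Thus a Nash equilibrium meeting the threshold exists iff $M$ does not halt.

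The main obstacle—and precisely the point where a public signal alone is insufficient—is reconciling two opposing requirements: each $\payoff_A$ must remain a function of $\pi_A$ only (privately visible), yet $A_2$'s hidden move must secretly discipline $A_1$'s simulation and oppose her payoff, even though $A_2$ cannot observe that payoff. A strictly zero-sum wiring is therefore impossible, so the adversarial pressure on $A_1$ has to be routed through the public signal and the graph structure, and property (iii) must be established so that faithful play is an \emph{exact} best response and the equilibrium correspondence is tight in both directions. Getting these weight values right, while keeping the reward a function of $A_1$'s own observation and making faithfulness enforceable against a move $A_1$ cannot see, is the delicate heart of the construction; the two-counter simulation and the remaining bookkeeping are routine by comparison.
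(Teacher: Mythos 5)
There is a genuine gap, and it sits exactly where you yourself locate ``the delicate heart of the construction'': the game, the signal, and above all the weights realizing your properties (i)--(iii) are never defined, and designing them is not ``routine bookkeeping'' --- encoding a two-counter simulation (in particular zero tests) into mean-payoff weights under imperfect observation \emph{is} the hard content of such a proof; it is essentially the undecidability argument for blind mean-payoff games in~\cite{DDG+10}, which you would be re-deriving inline. Two concrete points where the sketch as written would break. First, mean payoff is a limit average, so ``any unfaithful step is penalised'' cannot be a one-step penalty; the punishment must be permanent (a sink), and it must be triggered by an $A_2$-deviation that is itself consistent with $A_2$'s incentives, since the Nash condition constrains $A_2$-deviations only through $A_2$'s own payoff. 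Second, and more fundamentally, your forward direction needs ``no profitable $A_2$-deviation'' to imply a lower bound on $\payoff_{A_1}$ under \emph{every} $A_2$-deviation; this requires an adversarial coupling between the two payoffs which you acknowledge is problematic (``a strictly zero-sum wiring is impossible'') but never construct. Without that coupling, the equivalence fails in the halting direction: $A_1$ can play an unfaithful simulation while $A_2$'s equilibrium strategy simply never performs the exposing check --- nothing forces checking to be strictly profitable for $A_2$, so such a pair can be a Nash equilibrium meeting the threshold even when $M$ halts.

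The paper resolves precisely this coupling problem, and by a much shorter route: it reduces from blind mean-payoff games, quoting their undecidability from~\cite{DDG+10} as a black box instead of rebuilding a counter-machine gadget. Player $A_1$ plays the blind player with weight identically $0$ (so she is trivially indifferent and can never deviate profitably), while $A_2$ plays \Adam's role with payoff the \emph{negation} of the blind player's mean payoff (privately but not publicly visible); at former \Eve-states, $A_2$ must copy $A_1$'s action or the play falls into a \textsf{lost} sink that is the worst possible outcome for $A_2$. The constraint of the constrained-existence instance is then placed on $A_2$'s payoff, namely $\underline{\MP}_{w_{A_2}}<0$: since the equilibrium condition says no $A_2$-deviation increases $A_2$'s payoff, and that payoff is exactly $-\overline{\MP}_w$, the Nash condition becomes verbatim ``$A_1$'s blind strategy guarantees $\overline{\MP}_w>0$ against every behaviour of $A_2$'', i.e.\ the winning condition of the blind game. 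In other words, the adversarial tension is not wired between the two in-game payoffs (one of them is constant); it sits between $A_2$'s payoff and the threshold of the query. If you want to salvage your plan, the cleanest fix is the same move: reduce from the blind games of~\cite{DDG+10} rather than from Minsky machines directly, and put the payoff constraint on the opponent rather than on the simulator.
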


\begin{proof}
  To prove this, we use the undecidability of blind mean-payoff
  games~\cite{DDG+10}. In a two-player turn-based game where the first
  player is blind (that is, it can only observe that actions happen,
  but not get any information on the states the game is visiting), we
  cannot decide whether the first player can achieve a positive limsup
  (or liminf) mean payoff. Note: in this game, the first player cannot
  see the encountered weights. Somehow, the payoff function is
  invisible.

  Let $G$ be a mean-payoff game, in which \Eve is assumed to be
  blind. We assume that the payoff function of \Eve is a limsup payoff
  function given by weight $w$ (that is, $\overline\MP_w$). We assume
  ``$-$'' is a fresh symbol not used in game $G$. The alphabet of
  actions of $G'$ is the one fo $G$, plus that fresh symbol. Further
  assume that $W$ is the maximal absolute weight value appearing in
  the game.
  We construct the new concurrent game $G'$ as follows:
  \begin{itemize}
  \item it has two players, $A_1$ and $A_2$
  \item if $v$ belongs to \Eve in $G$, then $v \xrightarrow{a} v'$ is
    transformed into an edge $v \xrightarrow{\tuple{a,a}} v'$ in $G'$,
    and edges $v \xrightarrow{\tuple{a,b}} \textsf{lost}$ if $ a\ne
    b$. We define $\ell(\tuple{a,a},v') = -$ and
    $\ell(\tuple{a,b},\textsf{lost}) = \text{lost}$
  \item if $v$ belongs to \Adam in $G$, then $v \xrightarrow{a} v'$ is
    transformed into an edge $v \xrightarrow{\tuple{-,a}} v'$ in $G'$;
    we set $\ell(v \xrightarrow{\tuple{-,a}} v') = -$
  \item we have an additional edge $\textsf{lost}
    \xrightarrow{\tuple{-,-}} \textsf{lost}$, and
    $\ell(\tuple{-,-},\textsf{lost}) = \text{lost}$
  \item we write $\textit{transform}(e)$ for the set of transformed
    edges from $e$ as described above
  \item $w_{A_1}$ assigns $0$ to every edge
  \item for every edge $e$ in $G$, for every $e' \in
    \textit{transform}(e)$, $w_{A_2}(\textit{transform}(e)) = -w(e)$
  \item for $e' = \textsf{lost} \xrightarrow{\tuple{-,-}}
    \textsf{lost}$, we set $w_{A_2}(e') = -W-1$
  \end{itemize}
  We assume that the payoff for player $A_2$ is given by
  $\underline{\MP}_{w_{A_2}}$ (liminf). The choice for $A_1$ does not
  matter (the payoff of all plays is $0$)...  Note that the signal is
  public and that the payoff functions are privately visible for the
  respective players.  Also note that $A_1$ has no incentive to
  deviate (since the weight is $0$ everywhere), and that player $A_2$
  has no better strategy than to copy $A_1$ from (former) \Eve-states
  (otherwise the game proceeds to state $\textsf{lost}$ and $A_2$ gets
  the worst he can get). The only way to improve his payoff is
  therefore to deviate in some previous \Adam-state.

  \smallskip We show that \Eve has a winning strategy in the original
  blind game $G$ to achieve $\overline{\MP}_w>0$ if and only if there
  is a Nash equilibrium in the new game $G'$ where
  $\underline{\MP}_{w_{A_2}}<0$.
  Let $\sigma_\Agt = (\sigma_{A_1},\sigma_{A_2})$ be a Nash
  equilibrium in $G'$ achieving $\underline{\MP}_{w_{A_2}}<0$. Let
  $\rho = \out(\sigma_\Agt,v_0)$ be the main outcome of $\sigma_\Agt$
  ($v_0$ is the initial vertex). Then $\overline{\MP}_w(\rho) =
  -\underline{\MP}_{w_{A_2}}(\rho)>0$. Furthermore, for every
  $\sigma'_{A_2}$, if $\rho' =
  \out(\sigma_\Agt[A_2/\sigma'_{A_2}],v_0)$, then
  $\underline{\MP}_{w_{A_2}}(\rho') \le
  \underline{\MP}_{w_{A_2}}(\rho) <0$ and
  $\overline{\MP}_w(\rho')>0$. In particular, if \Eve follows the
  strategy of $A_1$ in the original $G$, she will ensure
  $\overline{\MP}_w>0$.
  
  Conversely pick a winning strategy $\sigma_\Eve$ for \Eve ensuring
  $\overline{\MP}_w>0$. Since \Eve is blind, $\sigma_\Eve$ can be seen
  as a single word.  Against that (fixed) strategy, \Adam has an
  optimal counter-strategy $\sigma_\Adam$ in $G$ (the possible choices
  of \Adam, given the strategy of \Eve, can be represented as an
  infinite tree, and at any stage he can choose the best subtree;
  hence \Adam has an optimal strategy!). At each length of the joint
  outcome prefix, a single choice has to be made, hence this
  \Adam-strategy can be made blind (we define it as a single word,
  representing the choives along the joint outcome); we extend the
  strategy to other histories by enforcing $\ell$-compatibility.
  Assume $A_1$ plays according to $\sigma_\Eve$ and $A_2$ according to
  $\sigma_\Adam$.  Let $\rho$ be the outcome of these two strategies.
  Then $A_2$ has no profitable deviation (since \Adam was playing
  optimally against $\sigma_\Eve$) and since $\overline{\MP}_w(\rho)
  >0$, it is the case that $\underline{\MP}_{w_{A_2}}(\rho)<0$. So,
  this is a Nash equilibrium satisfying the required constraint.
\end{proof}

\subparagraph*{Discussion.}  While the first undecidability result is
not very surprising, since allowing arbitrary private signals can
really complexify the structure of players' knowledge, we believe that
the second result justifies a restriction to public payoff functions
to get decidability results. We also believe that the second result is
an argument for restricting to public signal. Indeed, a hierarchical
signal, as standardly done in distributed synthesis, will only make
sense with privately visible payoff functions, hence this will be
undecidable.

In the following we will focus on public signals and develop an
epistemic game abstraction, which will record and track possible
deviations in the game. This will then be applied to two frameworks
with publicly visible payoff functions.

\section{The epistemic game abstraction}

Building over~\cite{BBMU15} and~\cite{BKP11}, we construct an
epistemic game, which will record possible behaviours of the system,
together with possible unilateral deviations. In~\cite{BKP11}, notions
of epistemic Kripke structures are used to really track the precise
knowledge of the players. These are mostly useful since
undistinguishable states (expressed using signals here) are assumed
arbitrary (no hierarchical structure). We could do the same here, but
we think that would be overly complex and hide the real structure of
knowledge in the framework of public signals. We therefore prefer to
stick to simpler subset constructions, which are more commonly used
(see e.g.~\cite{reif84} or later~\cite{CDHR07,DDG+10,DR11}), though it
has to be a bit more involved here since also deviations have to be
tracked.

Let $\calG =
\tuple{V,v_\init,\Agt,\Act,\Sigma,\Allow,\Tab,\ell,(\payoff_A)_{A\in\Agt}}$
be a concurrent game with public signal.  We will first define the
epistemic abstraction as a two-player game structure $\calE_\calG =
\tuple{S_\Eve,S_\Adam, s_\init,\Sigma',\Allow', \Tab'}$, and then
state the correspondence between $\calG$ and $\calE_\calG$. The
epistemic abstraction will later be used for decidability and
algorithmics purposes.

For clarity, we use the terminology ``vertices'' in $\calG$ and
``states'' (or ``epistemic states'') in $\calE_\calG$.

\subsection{Construction of the Game Structure $\calE_\calG$}

The game $\calE_\calG$ will be played between two players, \Eve and
\Adam. The aim of \Eve is to build a suitable Nash equilibrium,
whereas the aim of \Adam is to prove that it is not an equilibrium; in
particular, \Adam will try to find a profitable deviation (to disprove
the claim of \Eve that she is building a Nash equilibrium). Choices
available to \Eve and \Adam in the abstract game have to reflect
partial knowledge of the players in the original game $\calG$. States
in the abstract game will therefore store information, which will be
sufficient to infer the undistinguishability relation of all the
players in the original game. Thanks to the public signal assumption,
this information will be simple enough to have a simple structure.

In the following, we set $\Agt^\bot = \Agt \cup \{\bot\}$, where
$\bot$ is a fresh symbol.  For convenience, if $m \in \Act^\Agt$, we
extend the notation $m(-A)$ when $A \in \Agt$ to $\Agt^\bot$ by
setting $m(-\bot) = m$. We now describe all the components of
$\calE_\calG$.

A state of \Eve will store a set of vertices of the original game one
can be in, together with possible deviators. More precisely, states of
\Eve are defined as $S_\Eve = \{s \colon \Agt^\bot \to 2^V \mid
|s(\bot)| \le 1\}$. Let $s \in S_\Eve$. If $A \in \Agt$, vertices of
$s(A)$ are those where the game can be in, assuming one has followed
the suggestions of \Eve so far, up to an $A$-deviation; on the other
hand, if $s(\bot) \ne \emptyset$, the single vertex $v \in s(\bot)$ is
the one the game is in, assuming one has followed all suggestions by
\Eve so far (in particular, if \Eve is building a Nash equilibrium,
then this vertex belongs to the main outcome of the equilibrium).  We
define $\sit(s) = \{(v,A) \in V \times \Agt^\bot\mid v \in s(A)\}$ for
the set of \emph{situations} the game can be in at $s$:
\begin{enumerate}[(a)]
\item $(v,\bot) \in \sit(s)$ is the situation where the game has
  proceeded to vertex $v$ without any deviation;
\item $(v,A) \in \sit(s)$ with $A \in \Agt$ is the situation where the
  game has proceeded to vertex $v$ benefitting, from an $A$-deviation.
\end{enumerate}
Structure of state $s$ will allow to infer the undistinguishability
relation of all the players in game $\calG$: basically (and we will
formalize this later), if she is not responsible for a deviation,
player $A \in \Agt$ will not know in which of the situations of
$\sit(s) \setminus V \times \{A\}$ the game has proceeded; if she is
responsible for a deviation, player $A$ will know exactly in which
vertex $v \in s(A)$ the game has proceeded.

Let $s \in S_\Eve$. From state $s$, \Eve will suggest a tuple of moves
$M$, one for each possible situation $(v,A) \in \sit(s)$. This tuple
of moves has to satisfy the undistinguishability relation: if a player
does not distinguish between two situations, her action should be the
same in these two situations:
\begin{multline*}
  \Allow'(s) = \Big\{M \in \prod_{(v,A) \in \sit(s)} \Allow(v) \mid
  \forall (v_B,B), (v_C,C) \in \sit(s),\\[-.3cm] \forall A \in \Agt
  \setminus \{B,C\},\ M(v_B,B)(A) = M(v_C,C)(A)\Big\}
\end{multline*}
In the above set, the constraint $M(v_B,B)(A) = M(v_C,C)(A)$ expresses
the fact that player $A$ should play the same action in the two
situations $(v_B,B)$ and $(v_C,C)$, since she does not distinguish
between them. Obviously, we assume $\Sigma'$ contains all elements of
$\Allow'(s)$ above.

States of \Adam are then copies of states of \Eve with suggestions
given by \Eve, that is: $S_\Adam = \{(s,M) \mid s \in S_\Eve \times
\Allow'(s)\}$. And naturally, we define $\Tab'(s,M) = (s,M)$ if $M \in
\Allow'(s)$.

Let $(s,M) \in S_\Adam$. From state $(s,M)$, \Adam will choose a
signal value which can be activated from some situation allowed in
$s$, after no deviation or a single-player deviation w.r.t. $M$.
From a situation $(v,A) \in \sit(s)$ with $A \in \Agt$, only
$A$-deviations can be allowed (since we look for unilateral
deviations), hence any signal activated by an $A$-deviation
(w.r.t. $M(v,A)$) from $v$ should be allowed. From the situation
$(v,\bot) \in \sit(s)$ (if there is one), one can continue without any
deviation, or any kind of single-player deviation should be allowed,
hence the signal activated by $M(v,\bot)$ from $v$ should be allowed,
and any signal activated by some $A$-deviation (w.r.t. $M(v,\bot)$)
from $v$ should be allowed as well.  Formally:
\begin{multline*}
  \Allow'(s,M) = \left\{\beta \in \Sigma \begin{array}{c} \mid
      \\[-.2cm] \mid \\[-.2cm] \mid \\[-.2cm] \mid \\[-.2cm] \mid
      \\[-.2cm] \end{array} \begin{array}{l} \exists A \in \Agt \\
      \exists v \in s(A) \\
      \exists m \in \Act^\Agt\\
    \end{array}\ \text{s.t.}\
    \begin{array}{l@{~}l@{}} 
      \text{(i)} & m(-A) = M(v,A)(-A) \\
      \text{(ii)} & \ell(m,\Tab(v,m)) = \beta \end{array}\right\} \\
  \cup \left\{\beta \in \Sigma \begin{array}{c} \mid
      \\[-.2cm] \mid \\[-.2cm] \mid \\[-.2cm] \mid \\[-.2cm] \mid
      \\[-.2cm] \end{array} \begin{array}{l} \exists v \in s(\bot) \\
      \exists m \in \Act^\Agt\\
      \exists A \in \Agt \end{array}\ \text{s.t.}\
    \begin{array}{l@{~}l@{}} 
      \text{(i)} & m(-A) = M(v,\bot)(-A) \\ 
      \text{(ii)} & \ell(m,\Tab(v,m)) = \beta \end{array}\right\} 
\end{multline*}
Note that we implicitly assume that $\Sigma'$ contains $\Sigma$.

It remains to explain how one can compute the next state of some
$(s,M) \in S_\Adam$ after some signal value $\beta \in
\Allow'(s,M)$. The new state has to represent the new knowledge of the
players in the original game when they have seen signal $\beta$; this
has to take into account all possible deviations that we have already
discussed which activate the signal value $\beta$. The new state is
the result of several simultaneous subset constructions, which we
formalize as follows:
$s' = \Tab'((s,M),\beta)$, where for every $A \in \Agt^\bot$, $v' \in
s'(A)$ if and only if there is $m \in \Act^\Agt$ such that $\beta =
\ell(m,v')$, and 
\begin{enumerate}
\item either there is $v \in s(A)$ such that $m(-A) = M(v,A)(-A)$ and $v'
  = \Tab(v,m)$;
\item or there is $v \in s(\bot)$ such that $m(-A) = M(v,\bot)(-A)$
  and $v' = \Tab(v,m)$.
\end{enumerate}
Note that in case $A = \bot$, the two above cases are redundant.

\medskip Before stating properties of $\calE_\calG$, we illustrate the
construction.

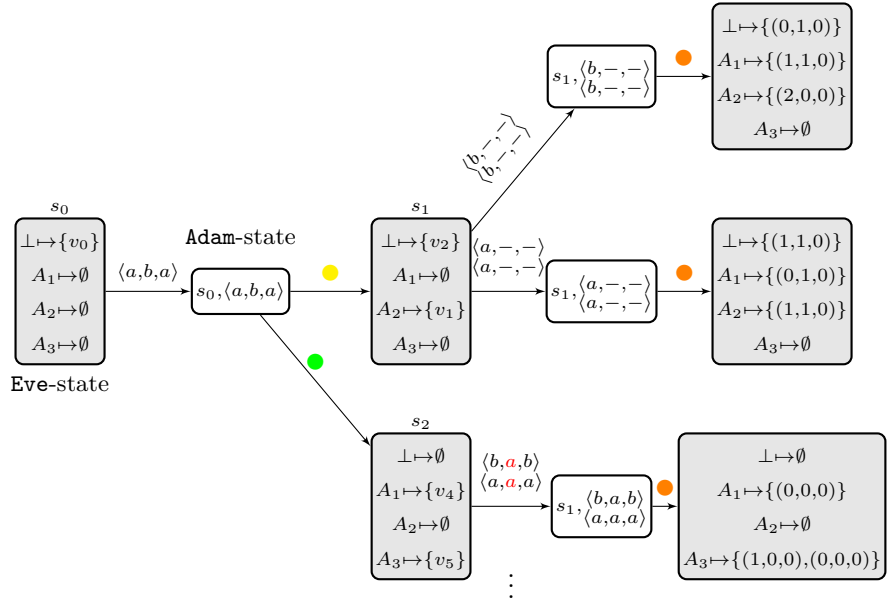
\begin{figure}[t]
  \begin{center}
    \begin{tikzpicture}[scale=.95]
      \everymath{\scriptstyle}
      \tikzset{Eve/.style={draw=black,thick,fill=black!10,minimum
          height=6mm,inner sep=1.5pt,rounded corners=3pt}}   
      \tikzset{Adam/.style={draw=black,thick,minimum
          height=6mm,inner sep=1.5pt,rounded corners=3pt}}


      \draw (0,0) node [Eve] (s0) {$\begin{array}{@{}c@{}} \bot 
          \mapsto  \{v_0\} \\ A_1  \mapsto  \emptyset \\ A_2 
          \mapsto  \emptyset \\
          A_3  \mapsto  \emptyset \end{array}$} node [above=.9cm] {$s_0$}
      node [below=1cm] {{\small \Eve-state}}; 
      \draw (2.5,0) node [Adam] (s0m0) {$s_0,\tuple{a,b,a}$} node
      [above=.5cm] {{\small \Adam-state}};

      \draw (5,0) node [Eve] (s1) {$\begin{array}{@{}c@{}} \bot
          \mapsto \{v_2\} \\ A_1 \mapsto \emptyset \\ A_2
          \mapsto \{v_1\} \\
          A_3 \mapsto \emptyset \end{array}$} node [above=.9cm]
      {$s_1$};

     \draw (5,-3) node [Eve] (s2) {$\begin{array}{@{}c@{}}
         \bot \mapsto \emptyset \\
          A_1 \mapsto \{v_4\} \\ A_2 \mapsto \emptyset \\
          A_3 \mapsto \{v_5\} \end{array}$} node [above=.9cm] {$s_2$};

      \draw (6.25,-4) node {$\vdots$};

      \draw (7.5,0) node [Adam] (s1m1) {$s_1,\begin{array}{@{}c@{}} \tuple{a,-,-} \\[-.2cm] \tuple{a,-,-} \end{array}$};
      \draw (10,0) node [Eve] (out1) {$\begin{array}{@{}c@{}}
          \bot \mapsto \{(1,1,0)\} \\ A_1 \mapsto \{(0,1,0)\} \\ A_2
          \mapsto \{(1,1,0)\}
          \\ A_3 \mapsto \emptyset \end{array}$};
      \draw (7.5,-3) node [Adam] (s1m11) {$s_1,\begin{array}{@{}c@{}}
          \tuple{b,a,b} \\[-.2cm] \tuple{a,a,a} \end{array}$}; 
      \draw (10,-3) node [Eve] (out2) {$\begin{array}{@{}c@{}}
          \bot \mapsto \emptyset \\ A_1 \mapsto \{(0,0,0)\} \\ A_2
          \mapsto \emptyset
          \\ A_3 \mapsto \{(1,0,0),(0,0,0)\} \end{array}$};
      \draw (7.5,3) node [Adam] (s1m12) {$s_1,\begin{array}{@{}c@{}} \tuple{b,-,-} \\[-.2cm] \tuple{b,-,-} \end{array}$};
      \draw (10,3) node [Eve] (out3) {$\begin{array}{@{}c@{}}
          \bot \mapsto \{(0,1,0)\} \\ A_1 \mapsto \{(1,1,0)\} \\ A_2
          \mapsto \{(2,0,0)\} \\
          A_3 \mapsto \emptyset  \end{array}$};

      \draw [-latex'] (s0) -- (s0m0) node
      [midway,above]{$\tuple{a,b,a}$};
      \draw [-latex'] (s0m0) -- (s1) node
      [midway,above]{\textcolor{yellow}{\scalebox{2}{$\bullet$}}};
      \draw [-latex'] (s0m0) -- (s2) node
      [midway,above]{\textcolor{green}{\scalebox{2}{$\bullet$}}};
      \draw [-latex'] (s1) -- (s1m1) node
      [midway,above]{$\begin{array}{@{}c@{}} \tuple{a,-,-} \\[-.2cm]
          \tuple{a,-,-} \end{array}$};
      \draw [-latex'] (s1m1) -- (out1) node [midway,above]
      {\textcolor{orange}{\scalebox{2}{$\bullet$}}};
      \draw [-latex'] (s2) -- (s1m11) node
      [midway,above]{$\begin{array}{@{}c@{}} \tuple{b,\textcolor{red}{a},b} \\[-.2cm]
          \tuple{a,\textcolor{red}{a},a} \end{array}$};
      \draw [-latex'] (s1m11) -- (out2)  node [midway,above]
      {\textcolor{orange}{\scalebox{2}{$\bullet$}}}; 
      \draw [-latex'] (s1) -- (s1m12) node [midway,above,sloped]
      {$\begin{array}{@{}c@{}} \tuple{b,-,-} \\[-.2cm]
          \tuple{b,-,-} \end{array}$}; 
      \draw [-latex'] (s1m12) -- (out3) node [midway,above]
      {\textcolor{orange}{\scalebox{2}{$\bullet$}}}; 
  \end{tikzpicture}
\end{center}
\caption{Part of the epistemic game corresponding to the game of
  Fig.~\ref{fig:ex}. For clarity, symbol $-$ is for any choice $a$ or
  $b$ (the precise choice is meaningless).} \label{fig:epistemic}
\end{figure}

\begin{example}
We consider again the example of Fig.~\ref{fig:ex}, and we assume that
the public signal when reaching the leaves of the game is uniformly
orange. We depict (part of) the epistemic game abstraction of the game
on Fig.~\ref{fig:epistemic}. 
One can notice that from \Eve-states $s_1$ and $s_2$, moves are
multi-dimensional, in the sense that there is one move per vertex
appearing in the state. There are nevertheless compatibility
conditions which should be satisfied (expressed in condition
$\Allow'$); for instance, from $s_2$, player $A_2$ does not
distinguish between the two options (i) $A_1$ has deviated and the
game is in $v_4$, and (ii) $A_3$ has deviated and the game is in
$v_5$, hence the action of player $A_2$ should be the same in the two
moves ($a$ in the depicted example, written in red).
\end{example}

\subsection{Interpretation of this abstraction}

We gave an intuitive meaning to the (epistemic) states of
$\calE_\calG$, we now need to formalize it. To that aim, we need to
explain how full histories and plays in $\calE_\calG$ can be
interpreted as full histories and plays in $\calG$.
 
Let $v_0 \in V$, and define $s_0 \colon \Agt^\bot \to 2^V \in S_\Eve$
such that $s_0(\bot) = \{v_0\}$ and $s_0(A) = \emptyset$ for every $A
\in \Agt$. In the following, when $M \in \Allow'(s)$ for some $s \in
S_\Eve$, if we speak of some $M(v,A)$, we implicitly assume that
$(v,A) \in \sit(s)$.  Given a full history $H = s_0 \xrightarrow{M_0}
(s_0,M_0) \xrightarrow{\beta_0} s_1 \xrightarrow{M_1} (s_1,M_1)
\xrightarrow{\beta_1} s_2 \ldots (s_{k-1},M_{k-1})
\xrightarrow{\beta_{k-1}} s_k$ in $\calE_\calG$, we write
$\concrete(H)$ for the set of full histories in the original game,
which correspond to $H$, up to a single deviation, that is: $v_0
\xrightarrow{m_0} v_1 \xrightarrow{m_1} v_2 \ldots v_{k-1}
\xrightarrow{m_{k-1}} v_k \in \concrete(H)$ whenever for every $0 \le
i \le k-1$, $v_{i+1} = \Tab(v_i,m_i)$ and $\beta_i =
\ell(m_i,v_{i+1})$, and:
\begin{enumerate}[(a)]
\item either $m_i = M_i(v_i,\bot)$ for every $0 \le i \le k-1$;
\item or there exist $A \in \Agt$ and $0 \le i_0 \le k-1$ such that
  \begin{enumerate}[(i)]
  \item for every $0 \le i < i_0$, $m_i = M_i(v_i,\bot)$;
  \item $m_{i_0} \ne M_{i_0}(v_{i_0},\bot)$, but $m_{i_0}(-A) =
    M_{i_0}(v_{i_0},\bot)(-A)$;
  \item for every $i_0 < i \le k-1$, $m_i(-A) = M_i(v_i,A)(-A)$.
  \end{enumerate}
\end{enumerate}
Case (a) corresponds to a concrete history with no deviation (all
moves suggested by \Eve have been followed). Case (b) corresponds to a
deviation by player $A$, and $i_0$ is the position at which player $A$
has started deviating.

We write $\concrete_\bot(H)$ for the set of histories of type (a);
there is at most one such history, which is the real concrete history
suggested by \Eve. And we write $\concrete_A(H)$ for the set of
histories of the type (b) with deviator $A$.  
%
We extend all these notions to full plays. A full play visiting only
\Eve-states $s$ such that $s(\bot) \ne \emptyset$ is called a
$\bot$-play. 

We will now state several properties, which will explicit the
information stored in the (states of the) epistemic game abstraction,
and make clear how one can infer the undistinguishability relations of
the various players thanks to that information.

The two following lemmas can be easily infered from the definitions:

\begin{lemma}
  The set $\concrete_\bot(H)$ contains at most one element, and it has
  exactly one element if and only if $s_k(\bot) \ne \emptyset$. If
  $\concrete_\bot(H) = \{v_0 \xrightarrow{m_0} v_1 \xrightarrow{m_1}
  v_2 \ldots v_{k-1} \xrightarrow{m_{k-1}} v_k\}$, it is the case that
  $s_i(\bot) = \{v_i\}$ for every $0 \le i \le k$.
\end{lemma}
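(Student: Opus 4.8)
The plan is to argue by induction on the length $k$ of the full history $H = s_0 \xrightarrow{M_0} (s_0,M_0) \xrightarrow{\beta_0} s_1 \cdots \xrightarrow{\beta_{k-1}} s_k$, establishing the three assertions together. The driving observation is the behaviour of $\Tab'$ on the $\bot$-component: unfolding the definition for $A = \bot$ (and recalling, as noted after the construction, that the two cases coincide since $m(-\bot) = m$), we get that $v' \in s'(\bot)$ if and only if $s(\bot) = \{v\}$ for some $v$, $v' = \Tab(v,M(v,\bot))$ and $\beta = \ell(M(v,\bot),v')$. Hence $s'(\bot)$ is empty whenever $s(\bot)$ is empty, and when $s(\bot) = \{v\}$ it equals $\{\Tab(v,M(v,\bot))\}$ precisely when the chosen signal satisfies $\beta = \ell(M(v,\bot),\Tab(v,M(v,\bot)))$, and is empty otherwise. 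In particular $|s'(\bot)| \le 1$ is preserved, as already guaranteed by the definition of $S_\Eve$.

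First I would record that $\concrete_\bot(H)$ contains at most one element by a direct determinism argument: a type-(a) history is forced to start at the fixed vertex $v_0 \in s_0(\bot)$, after which condition (a) fixes $m_0 = M_0(v_0,\bot)$ and thus $v_1 = \Tab(v_0,m_0)$, and inductively every $m_i$ and $v_{i+1}$ is determined. This also relies on the implicit convention that $M_i(v_i,\bot)$ is only meaningful when $(v_i,\bot) \in \sit(s_i)$, i.e.\ $v_i \in s_i(\bot)$, which is exactly why the invariant $s_i(\bot) = \{v_i\}$ must be carried through the argument.

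For the remaining two assertions I would induct on $k$. The base case $k=0$ is immediate: $H = s_0$, so $\concrete_\bot(H) = \{v_0\}$ and $s_0(\bot) = \{v_0\}$. For the inductive step, write $H = H' \xrightarrow{M_k} (s_k,M_k) \xrightarrow{\beta_k} s_{k+1}$. Since all conditions in (a) are local to their index, the type-(a) histories of $H$ are exactly the type-(a) histories of $H'$ extended by one step $v_k \xrightarrow{m_k} v_{k+1}$ with $m_k = M_k(v_k,\bot)$, $v_{k+1} = \Tab(v_k,m_k)$ and $\beta_k = \ell(m_k,v_{k+1})$. If $s_k(\bot) = \emptyset$, then by the induction hypothesis $\concrete_\bot(H') = \emptyset$, so $\concrete_\bot(H) = \emptyset$, while the transition analysis gives $s_{k+1}(\bot) = \emptyset$; both sides of the equivalence are then negative. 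If instead $s_k(\bot) = \{v_k\}$, the induction hypothesis gives that $v_0 \xrightarrow{m_0} \cdots \xrightarrow{m_{k-1}} v_k$ is the unique element of $\concrete_\bot(H')$; the extension step is then available exactly when $\beta_k = \ell(M_k(v_k,\bot),\Tab(v_k,M_k(v_k,\bot)))$, which by the transition analysis is precisely the condition under which $s_{k+1}(\bot) = \{\Tab(v_k,M_k(v_k,\bot))\} = \{v_{k+1}\}$. In that case $\concrete_\bot(H)$ has the single element ending in $v_{k+1}$ and $s_{k+1}(\bot) = \{v_{k+1}\}$; otherwise both $\concrete_\bot(H)$ and $s_{k+1}(\bot)$ are empty. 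This closes the induction.

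I do not expect a genuine obstacle here; the only thing requiring care is the bookkeeping when unfolding $\Tab'$ on the $\bot$-component and the implicit well-definedness convention for $M_i(v_i,\bot)$, which is exactly why the invariant $s_i(\bot) = \{v_i\}$ is threaded through the induction rather than proved in isolation.
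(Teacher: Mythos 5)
Your proof is correct, and it takes the only natural route: the paper itself gives no argument for this lemma (it is stated as one of two lemmas that ``can be easily infered from the definitions''), and your induction --- unfolding $\Tab'$ on the $\bot$-component to get $s'(\bot)=\{\Tab(v,M(v,\bot))\}$ exactly when $s(\bot)=\{v\}$ and $\beta=\ell(M(v,\bot),\Tab(v,M(v,\bot)))$, and otherwise $s'(\bot)=\emptyset$ --- is precisely the intended inference, with the determinism of type-(a) histories giving uniqueness. Your attention to the implicit convention that $M_i(v_i,\bot)$ is only meaningful when $v_i \in s_i(\bot)$, handled by threading the invariant $s_i(\bot)=\{v_i\}$ through the induction, is exactly the right bookkeeping.
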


\begin{lemma}
  Let $A \in \Agt$. Then:
  \begin{itemize}
  \item Assume $v_0 \xrightarrow{m_0} v_1 \xrightarrow{m_1} v_2 \ldots
    v_{k-1} \xrightarrow{m_{k-1}} v_k \in \concrete_A(H)$ with index
    $i_0$. Then for every $0 \le i \le i_0$, $s_i(\bot) = \{v_i\}$,
    and for every $i_0<i \le k-1$, $v_i \in s_i(A)$.
  \item For every $v_k \in s_k(A)$, there is some $h \in
    \concrete_A(H)$ such that $\last(h) = v_k$.
  \end{itemize}
\end{lemma}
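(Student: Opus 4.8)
The plan is to establish both bullet points by induction on positions along the play, using the preceding lemma (on $\concrete_\bot$) to handle the no-deviation prefix in the second bullet. Throughout I read the second clause in the definition of $\Tab'$ (and of $\Allow'(s,M)$) as recording a \emph{first}, hence genuine, deviation, consistently with condition (ii) in the definition of $\concrete_A(H)$ and with the example of Figure~\ref{fig:epistemic}, where the no-deviation successor $v_2$ lies in $s_1(\bot)$ but not in $s_1(A_2)$.

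For the first bullet (forward direction), fix $h = v_0 \xrightarrow{m_0}\cdots\xrightarrow{m_{k-1}} v_k \in \concrete_A(H)$ with index $i_0$. First I would show $s_i(\bot)=\{v_i\}$ for $0\le i\le i_0$ by induction on $i$: the base case is $s_0(\bot)=\{v_0\}$, and for the step ($i<i_0$) condition (i) gives $m_i = M_i(v_i,\bot)$, so, since $v_{i+1}=\Tab(v_i,m_i)$ and $\beta_i=\ell(m_i,v_{i+1})$, the observed signal is exactly the one produced by the suggested move; the $\bot$-case of $\Tab'$ (where $m(-\bot)=m$ forces $m=M_i(v_i,\bot)$) then yields $s_{i+1}(\bot)=\{v_{i+1}\}$ from $s_i(\bot)=\{v_i\}$. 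In particular $s_{i_0}(\bot)=\{v_{i_0}\}$. Next I would show $v_i\in s_i(A)$ for $i_0<i\le k$ by induction: for $i=i_0+1$, condition (ii) gives a genuine deviation $m_{i_0}$ with $m_{i_0}(-A)=M_{i_0}(v_{i_0},\bot)(-A)$, so case (2) of $\Tab'$ applied to $v_{i_0}\in s_{i_0}(\bot)$ places $v_{i_0+1}=\Tab(v_{i_0},m_{i_0})$ into $s_{i_0+1}(A)$; and for the step ($i_0<i<k$), condition (iii) gives $m_i(-A)=M_i(v_i,A)(-A)$, so case (1) of $\Tab'$ places $v_{i+1}$ into $s_{i+1}(A)$. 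This covers the stated range $i_0<i\le k-1$, and also gives $v_k\in s_k(A)$, which dovetails with the second bullet.

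For the second bullet (backward direction), I would prove the following claim by strong induction on $j$: for every $1\le j\le k$ and every $v\in s_j(A)$ there is $h\in\concrete_A(H_{\le j})$ with $\last(h)=v$ (the desired statement is the case $j=k$). Since $v\in s_j(A)$, the definition of $\Tab'$ provides some $m_{j-1}$ with $\beta_{j-1}=\ell(m_{j-1},v)$, $v=\Tab(v',m_{j-1})$, and either (1) $v'\in s_{j-1}(A)$ with $m_{j-1}(-A)=M_{j-1}(v',A)(-A)$, or (2) $v'\in s_{j-1}(\bot)$ with a genuine first deviation $m_{j-1}\ne M_{j-1}(v',\bot)$ and $m_{j-1}(-A)=M_{j-1}(v',\bot)(-A)$. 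In case (1), note that $s_0(A)=\emptyset$ forces $j-1\ge1$, so the induction hypothesis supplies a type-(b) $A$-history ending at $v'=v_{j-1}$, and appending $m_{j-1}$ preserves conditions (i)--(iii), the new step being exactly an instance of (iii). In case (2) I would set $i_0=j-1$: since $s_{j-1}(\bot)\ne\emptyset$, the preceding lemma applied to $H_{\le j-1}$ gives $s_{j-1}(\bot)=\{v'\}$ together with the unique no-deviation history $v_0 \xrightarrow{m_0}\cdots\xrightarrow{m_{j-2}} v'$, and appending $m_{j-1}$ produces a history satisfying (i) (its prefix is no-deviation), (ii) (the genuine deviation at $i_0$), and (iii) (vacuously). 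Termination is guaranteed because each case-(1) reduction strictly decreases $j$ while $s_0(A)=\emptyset$ forbids an infinite descent, so the backward trace must eventually hit a case-(2) step.

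The forward direction is routine bookkeeping on (i)--(iii). The real work, and the main obstacle, is the backward construction: one must (i) identify the index $i_0$ as the unique moment at which the backward trace leaves the $A$-component for the $\bot$-component, (ii) invoke the preceding lemma to rebuild the singleton no-deviation prefix and glue it to the reconstructed $A$-suffix, and (iii) ensure the branch of $\Tab'$ used at $i_0$ genuinely records a first deviation, so that condition (ii) of $\concrete_A$ holds — a point on which the literal wording of $\Tab'$ should be read with the genuine-deviation convention exhibited by the example.
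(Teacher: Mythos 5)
Your proof is correct. Note that the paper contains no proof of this lemma at all: it and the preceding lemma on $\concrete_\bot$ are prefaced by ``The two following lemmas can be easily infered from the definitions'', so there is no written argument to diverge from; your two inductions --- forward unrolling of conditions (i)--(iii) through the clauses of $\Tab'$ for the first bullet, and backward reconstruction of a concrete history from membership in $s_j(A)$, splitting on which clause of $\Tab'$ produced the vertex, for the second --- are exactly the natural way of filling in that omission, including the use of the preceding lemma to rebuild the no-deviation prefix when the backward trace exits through the $\bot$-component.

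Your interpretive remark about $\Tab'$ is also well taken and is in fact necessary: read literally, clause 2 of $\Tab'$ does not require $m \ne M(v,\bot)$, so every $s'(A)$ would contain the no-deviation successor of the vertex in $s(\bot)$, and the second bullet would then be false (at a vertex where $A$ has a single allowed action, $s_1(A)$ would be non-empty under the literal reading while $\concrete_A(H)$ is empty, since condition (ii) demands a move different from $M_0(v_0,\bot)$ yet agreeing with it outside $A$). The genuine-first-deviation reading you adopt is the one consistent with Fig.~\ref{fig:epistemic} (where $s_1(A_1)=s_1(A_3)=\emptyset$ and $v_2 \notin s_1(A_2)$) and with the rest of the paper's development. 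One small caveat: this correction should be applied to $\Tab'$ only, not to $\Allow'(s,M)$, where the signal $\ell\big(M(v,\bot),\Tab(v,M(v,\bot))\big)$ of the suggested move itself must remain allowed (as the paper's prose states), since otherwise \Adam could not follow the main play. Your argument never invokes $\Allow'(s,M)$, so this does not affect its correctness.
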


The set $\concrete_\bot(H)$, if non-empty, stores the real history of
the game, if no deviation has occurred. On the other hand, there can
be deviations of one of the players; they will appear in
$\concrete_A(H)$ (meaning that $A$ has deviated). As long as states of
$H$ have a non-empty $\bot$-component, those deviations will somehow
be invisible to other players, whereas when we will reach a state with
no $\bot$-component, those deviations will partly be visible to other
players.

We now explain how the structure of $\calE_\calG$ explicit the
undistinguishability relations of the players.

\begin{lemma}
  \label{lemma:1}\label{lemma:2}
  Let $A \in \Agt$, and pick $h_1 \ne h_2 \in \concrete(H)$.  Then
  $h_1 \sim_A h_2$ if and only if $h_1,h_2 \notin \concrete_A(H)$.
\end{lemma}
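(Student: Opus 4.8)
The claim is that two distinct concrete histories $h_1, h_2 \in \concrete(H)$ satisfy $h_1 \sim_A h_2$ if and only if neither of them is an $A$-deviation (i.e. neither lies in $\concrete_A(H)$). Let me sketch how I would prove this.

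Let me think about what $h_1 \sim_A h_2$ means. It means $\pi_A(h_1) = \pi_A(h_2)$, i.e. player $A$ sees the same sequence of (own-action, signal) pairs along both histories. Since all histories in $\concrete(H)$ traverse the same sequence of signals $\beta_0, \ldots, \beta_{k-1}$ (this is built into the definition of $\concrete(H)$), the signal component of $\pi_A$ already agrees. So $h_1 \sim_A h_2$ reduces to the condition that $A$ plays the same actions along both histories, i.e. $m_i(A) = m_i'(A)$ for all $i$ (writing $h_1$ with moves $m_i$ and $h_2$ with moves $m_i'$).

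So the plan is to characterize, for each history $h \in \concrete(H)$ and each position $i$, what $A$'s action $m_i(A)$ equals. I would proceed by cases according to the classification (a)/(b) of the definition of $\concrete(H)$, together with the structure constraints in $\Allow'(s)$.

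Here is the key observation. By the compatibility constraint in the definition of $\Allow'(s)$, for any two situations $(v_B,B),(v_C,C)\in\sit(s)$ and any player $A\notin\{B,C\}$, we have $M(v_B,B)(A)=M(v_C,C)(A)$; i.e. \Eve's suggested action for player $A$ is the same across all situations \emph{except} those labelled by $A$ itself. Now take $h_1,h_2\in\concrete(H)$, neither in $\concrete_A(H)$. I claim $A$ plays the suggested action at every step of both histories. Indeed, for a no-deviation history (case (a)), $m_i = M_i(v_i,\bot)$, so $A$ plays $M_i(v_i,\bot)(A)$. For a history in $\concrete_B(H)$ with $B\neq A$, condition (iii) of case (b) gives $m_i(-B)=M_i(v_i,B)(-B)$, and since $A\neq B$ the $A$-component is $m_i(A)=M_i(v_i,B)(A)$; before the deviation point it is $M_i(v_i,\bot)(A)$. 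The $\Allow'$ constraint then forces all of these to coincide for a fixed $i$ (the suggested action of $A$ is position-uniform across all non-$A$ situations), so $m_i(A)=m_i'(A)$ for every $i$, giving $h_1\sim_A h_2$. This establishes the ``if'' direction.

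For the ``only if'' direction I would prove the contrapositive: if one of $h_1, h_2$ lies in $\concrete_A(H)$, then $h_1 \not\sim_A h_2$. Suppose $h_1 \in \concrete_A(H)$ with deviation index $i_0$, so by condition (ii) of case (b), $m_{i_0}(-A) = M_{i_0}(v_{i_0},\bot)(-A)$ but $m_{i_0} \neq M_{i_0}(v_{i_0},\bot)$; since the two moves agree off $A$, they must differ precisely in the $A$-component, so $m_{i_0}(A) \neq M_{i_0}(v_{i_0},\bot)(A)$. Meanwhile $h_2$, not being an $A$-deviation, plays the suggested $A$-action $M_{i_0}(v_{i_0},\bot)(A)$ at position $i_0$ by the argument of the previous paragraph (whether $h_2$ is in case (a) or in $\concrete_B(H)$ for some $B\neq A$). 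Hence $A$'s action differs at position $i_0$, so $\pi_A(h_1)\neq\pi_A(h_2)$ and $h_1\not\sim_A h_2$. The remaining subcase, where both $h_1$ and $h_2$ lie in $\concrete_A(H)$ but are distinct, needs a separate argument: I would show that distinct $A$-deviations must differ in some $m_i(A)$, using that all non-$A$ components and the common signal trace are determined by $H$, so the only freedom distinguishing two elements of $\concrete_A(H)$ is in $A$'s own actions — whence $A$ can tell them apart and again $h_1\not\sim_A h_2$.

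The main obstacle I anticipate is the last subcase (both histories being $A$-deviations): one must verify carefully that two \emph{distinct} concrete histories in $\concrete_A(H)$ genuinely differ in an action observable to $A$, rather than merely in the hidden components $m_i(-A)$ of the other players. This hinges on the fact that, given the fixed signal trace and the fixed suggestion tuples $M_i$, the non-$A$ components along an $A$-deviation are pinned down (conditions (i)--(iii) tie $m_i(-A)$ to $M_i$), so any genuine difference between two such histories must surface in $A$'s own action — exactly what $A$ observes. Making this pinning-down precise, and handling the possibility that two histories branch at vertices where several moves induce the same signal $\beta_i$, is the delicate point of the argument.
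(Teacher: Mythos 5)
Your proposal is correct and takes essentially the same route as the paper's proof: both reduce $h_1 \sim_A h_2$ to agreement of $A$'s actions at every step (the signal trace being fixed by $H$), use the compatibility constraint in $\Allow'$ to get the ``if'' direction, and use the mismatch at the deviation point for the mixed case of the ``only if'' direction. The subcase you flag as delicate (both histories in $\concrete_A(H)$) is closed in the paper exactly along the lines you sketch: assuming $h_1 \sim_A h_2$, the deviation indices must coincide, and then an induction on positions shows that equal $A$-actions force equal vertices, hence equal moves (the non-$A$ components being pinned to $M_i(v_i,A)(-A)$ at the current vertex), so the two histories would be identical, contradicting $h_1 \ne h_2$.
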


\begin{proof}
  By setting $i_0=k+1$, one can treat the unique history in
  $\concrete_\bot(H)$ similarly to other histories in $\concrete_B(H)$
  for $B \in \Agt$. 

  For $j \in \{1,2\}$, we let $B_j \in \Agt^\bot \setminus \{A\}$ such
  that $h_j \in \concrete_{B_j}(H)$ with index $i_0^j$. We write $h_j
  = v_0 \xrightarrow{m^j_0} v_1^j \xrightarrow{m^j_1} v_2^j \ldots
  \xrightarrow{m^j_{k-1}} v_k^j$. By construction, we have (i) for
  every $i<i^j_0$, $m^j_i = M_i(v_i^j,\bot)$, (ii) $m^j_{i_0^j}(-B_j)
  = M_{i_0^j}(v_{i_0^j}^j,\bot)(-B_j)$, (iii) for every $i>i_0^j$,
  $m_i^j(-B_j) = M_i(v_i^j,B_j)(-B_j)$.  Also, for every $i$,
  $\ell(m^j_i,v_{i+1}^j) = \beta_i$.

  Player $A$ only sees $h_j$ through its $\pi_A$-projection, that is:
  \[
  \begin{array}{r} \pi_A(h_j) \\ \phantom{v}
  \end{array} \begin{array}{@{}l} =~ v_0 \cdot
    (m^j_0(A),\ell(m^j_0,v_1^j)) \cdot (m^j_1(A),\ell(m^j_1,v_2^j))
    \ldots
    (m^j_{k-1}(A),\ell(m^j_{k-1},v_k^j)) \\
    =~ v_0 \cdot (m^j_0(A),\beta_0) \cdot (m^j_1(A),\beta_1) \ldots
    (m^j_{k-1}(A),\beta_{k-1})
    \end{array}
  \]
  Note that, writing $i_0$ for $\min(i_0^1,i_0^2)$, we have:
  \begin{enumerate}[(a)]
  \item for every $0 \le i < i_0$, $m^1_i = m^2_i$, hence $(h_1)_{\le
      i_0} = (h_2)_{\le i_0}$;
  \item $h_1 \sim_A h_2$ if and only if for every $0 \le i<k$,
    $m^1_i(A) = m^2_i(A)$.
  \end{enumerate}
  W.l.o.g. we assume $i_0^1 \le i_0^2$.
  \begin{itemize}
  \item First assume that $B_1 = A$ (i.e. $h_1 \in \concrete_{A}(H)$).
    It is the case that $m^1_{i_0^1}(A) \ne
    M_{i_0^1}(v_{i_0^1}^1,\bot)(A)$.  Towards a contradiction, we
    assume $h_1 \sim_A h_2$. It implies $m^2_{i_0^2}(A) =
    m^1_{i_0^1}(A) \ne M_{i_0^1}(v_{i_0^1}^1,\bot)(A)$. Hence, $B_2=A$
    as well, and $i_0^1 = i_0$. We deduce that $(h_1)_{\le i_0+1} =
    (h_2)_{\le i_0+1}$. By induction, it is now easy to infer from
    (iii) above and from the fact that $m_i^1(A) = m_i^2(A)$ for every
    $i$ (since $h_1 \sim_A h_2$), that $h_1 = h_2$, which yields a
    contradiction. Thus, it cannot be the case, under hypothesis
    $B_1=A$, that $h_1 \sim_A h_2$.
  \item Next assume that $B_1 \ne A$, $i_0^1<i_0^2$ and $B_2 = A$
    (i.e. $h_2 \in \concrete_{A}(H)$). By definition of $\Allow'$
    (since $A \ne B_1$), it is the case that
    $M_{i_0^2}(v_{i_0^2}^1,B_1)(A) =
    M_{i_0^2}(v_{i_0^2}^2,\bot)(A)$. Now it is the case that
    $m_{i_0^2}^1(A) = M_{i_0^2}(v_{i_0^2}^1,B_1)(A)$ while
    $m_{i_0^2}^2(A) \ne M_{i_0^2}(v_{i_0^2}^2,\bot)(A)$. Hence, $h_1
    \sim_A h_2$. 
  \item Then assume that $B_1 \ne A$ and $B_2 \ne A$. By definition of
    $\Allow'$, we have that $M_i(v_i^1,B_1)(A) = M_i(v_i^2,B_2)(A)$
    for every $i$. In particular, since $m_i^1(A) = M_i(v_i^1,B_1)(A)$
    and $m_i^2(A) = M_i(v_i^2,B_2)(A)$ for every $i$, we deduce that
    $h_1 \sim_A h_2$.
  \end{itemize}
  We have therefore shown the expected equivalence.
\end{proof}

Assuming public visibility of the original payoff functions in
$\calG$, we can define when $R$ is a full play in $\calE_\calG$, and
$A \in \Agt$, $\payoff'_A(R) = \payoff_A(\rho)$, where $\rho \in
\concrete(R)$.

\begin{lemma}
  \label{lemma:payoff}
  Assuming public visibility of the original payoff functions in
  $\calG$, the payoff function $\payoff'_A$ is well-defined.
\end{lemma}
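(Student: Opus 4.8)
The plan is to reduce well-definedness to the public visibility hypothesis: I must show that the value $\payoff_A(\rho)$ does not depend on the choice of representative $\rho \in \concrete(R)$, which is exactly what the statement requires. First I would observe that, by the very definition of $\concrete$, every concrete play is forced to agree with $R$ on its public signal. Writing $R = s_0 \xrightarrow{M_0} (s_0,M_0) \xrightarrow{\beta_0} s_1 \xrightarrow{M_1} (s_1,M_1) \xrightarrow{\beta_1} s_2 \cdots$, any $\rho = v_0 \xrightarrow{m_0} v_1 \xrightarrow{m_1} v_2 \cdots \in \concrete(R)$ satisfies $\beta_i = \ell(m_i,v_{i+1})$ for every $i$: this condition is imposed in the definition of $\concrete$ regardless of whether $\rho$ is of type (a) (no deviation) or of type (b) (deviation by some player). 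Consequently $\ell(\rho) = \beta_0 \beta_1 \beta_2 \cdots$, and this signal sequence is determined by $R$ alone; in particular it is identical for all $\rho \in \concrete(R)$.

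Second, I would invoke the two hypotheses. Since $\calG$ has public signal, $\ell_A = \ell$ for every $A \in \Agt$, so for any two plays $\rho, \rho' \in \concrete(R)$ we have $\ell_A(\rho) = \ell(\rho) = \ell(\rho') = \ell_A(\rho')$ by the previous paragraph. Public visibility of $\payoff_A$ states precisely that $\ell_A(\rho) = \ell_A(\rho')$ implies $\payoff_A(\rho) = \payoff_A(\rho')$. Combining the two equalities yields $\payoff_A(\rho) = \payoff_A(\rho')$, so the value $\payoff'_A(R) = \payoff_A(\rho)$ is independent of the chosen representative and the function is well-defined.

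There is essentially no hard step here: the argument is a direct unfolding of definitions, the only point requiring minor care being the index-matching in the equality $\ell(\rho) = \beta_0\beta_1\cdots$ and the implicit non-emptiness of $\concrete(R)$. The latter holds for any full play $R$ of $\calE_\calG$, since every signal chosen by \Adam is, by construction of $\Allow'(s,M)$, activated by some admissible move, so a concrete representative can be built prefix by prefix. The essential insight is simply that the $\concrete$ operation preserves the public signal sequence, which is exactly the information on which publicly visible payoffs depend; without the public visibility assumption, different representatives could carry different private information and hence different payoffs, and the definition would fail.
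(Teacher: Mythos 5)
Your proof is correct and follows essentially the same route as the paper's: both arguments observe that every $\rho \in \concrete(R)$ must satisfy $\ell(m_i,v_{i+1}) = \beta_i$, so all concretisations of $R$ share the same signal sequence, and then conclude via public visibility (together with $\ell_A = \ell$) that the payoff is independent of the representative. Your additional remarks on index-matching and non-emptiness of $\concrete(R)$ are fine elaborations of points the paper leaves implicit.
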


\begin{proof}
  Write $R = s_0 \xrightarrow{M_0} (s_0,M_0) \xrightarrow{\beta_0} s_1
  \ldots s_k \xrightarrow{M_k} (s_k,M_k) \xrightarrow{\beta_k} s_{k+1}
  \ldots$. It is sufficient to notice that $\rho,\rho' \in
  \concrete(R)$ imply $\ell(\rho) = \ell(\rho')$. Hence for every $A
  \in \Agt$, $\payoff_A(\rho) = \payoff_A(\rho')$. Therefore,
  $\payoff'_A(R)$ does not depend on the choice of the witness $\rho$.
\end{proof}

\subsection{Winning condition of \Eve}
A zero-sum game will be played on the game structure $\calE_\calG$,
and the winning condition of \Eve will be given on the branching
structure of the set of outcomes of a strategy for \Eve, and not
individually on each outcome, as standardly in two-player zero-sum
games.  We write $s_\init$ for the state of \Eve such that
$s_\init(\bot) = \{v_\init\}$ and $s_\init(A)=\emptyset$ for every $A
\in \Agt$. Let $p = (p_A)_{A \in \Agt} \in {\overline{\bbR}}^\Agt$,
and $\sigma_\Eve$ be a strategy for \Eve in $\calE_\calG$; it is said
\emph{winning} for $p$ from $s_\init$ whenever $\payoff(\rho) = p$,
where $\rho$ is the unique element of
$\concrete_\bot(\out_\bot(\sigma_\Eve,s_\init))$ (where we write
$\out_\bot(\sigma_{\Eve},s_\init)$ for the unique outcome of
$\sigma_{\Eve}$ from $s_\init$ which is a $\bot$-play), and for every
$R \in \out(\sigma_\Eve,s_\init)$, for every $A \in \Agt$, for every
$\rho \in \concrete_A(R)$, $\payoff_A(\rho) \le p_A$.

For every epistemic state $s \in S_\Eve$, we define the set of
\emph{suspect} players $\suspect(s) = \{A \in \Agt \mid s(A) \ne
\emptyset\}$ (this is the set of players that may have deviated). By
extension, if $R = s_0 \xrightarrow{M_0} (s_0,M_0)
\xrightarrow{\beta_0} s_1 \ldots s_k \xrightarrow{M_k} (s_k,M_k)
\xrightarrow{\beta_k} s_{k+1} \ldots$, we define $\suspect(R) =
\lim_{k \to \infty} \suspect(s_k)$. Note that the sequence
$(\suspect(s_k))_{k}$ is non-increasing, hence it stabilizes.

Assuming public visibility of the payoff functions in $\calG$, we can
define when $R$ is a full play in $\calE_\calG$, and $A \in \Agt$,
$\payoff'_A(R) = \payoff_A(\rho)$, where $\rho \in \concrete(R)$. It
is easy to show that $\payoff'_A$ is well-defined for every $A \in
\Agt$.
Under this assumption, the winning condition of \Eve can be rewritten
as: $\sigma_\Eve$ is winning for $p$ from $s_\init$ whenever
$\payoff'(\out_\bot(\sigma_\Eve,s_\init)) = p$, and for every $R \in
\out(\sigma_\Eve,s_\init)$, for every $A \in \suspect(R)$,
$\payoff'_A(R) \le p_A$.

\subsection{Correction of the epistemic abstraction}

The epistemic abstraction tracks everything that is required to detect
Nash equilibria in the original game, which we make explicit in the
next result. Note that this theorem does not require public visibility
of the payoff functions.

\begin{theorem}
  \label{theo:correction}
  Let $\calG$ be a concurrent game with public signal, and $p \in
  \overline{\bbR}^\Agt$.  There is a Nash equilibrium in $\calG$ with
  payoff $p$ from $v_\init$ if and only if \Eve has a winning strategy
  for $p$ in $\calE_\calG$ from $s_\init$.
\end{theorem}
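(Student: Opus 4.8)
The plan is to prove the two directions of the equivalence separately, using the concretization map $\concrete$ and the undistinguishability lemma (Lemma~\ref{lemma:1}) as the bridge between strategies in $\calE_\calG$ and strategy profiles in $\calG$. The key observation I would exploit throughout is that an $\ell_A$-compatible strategy profile in $\calG$ must assign the same action to $\sim_A$-equivalent histories, and by Lemma~\ref{lemma:1} these are exactly the histories lying in $\concrete(H) \setminus \concrete_A(H)$; this is precisely what the compatibility constraint in $\Allow'(s)$ enforces on \Eve's suggested move tuples $M$. So the two constructions are made to match by design, and the proof is about turning that correspondence into a genuine bijection-like argument between equilibria and winning strategies.

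For the direction from a Nash equilibrium to a winning strategy, I would start from a Nash equilibrium $\sigma_\Agt$ with $\payoff(\out(\sigma_\Agt,v_\init)) = p$ and build \Eve's strategy $\sigma_\Eve$ inductively along histories $H$ of $\calE_\calG$. At each \Eve-state $s_k$ reached after history $H$, the move $M_k$ \Eve suggests is defined by reading off, for each situation $(v,A) \in \sit(s_k)$, the action that each player would play according to $\sigma_\Agt$ after the corresponding concrete history in $\concrete(H)$; info-compatibility of $\sigma_\Agt$ guarantees (via Lemma~\ref{lemma:1}) that this is well-defined and lands in $\Allow'(s_k)$. One then checks that $\out_\bot(\sigma_\Eve,s_\init)$ concretizes to the main outcome $\out(\sigma_\Agt,v_\init)$, giving $\payoff = p$, and that for every outcome $R$ and every suspect $A$, each deviating concrete history $\rho \in \concrete_A(R)$ corresponds to an $A$-deviation of $\sigma_\Agt$; since $\sigma_\Agt$ is a Nash equilibrium, $\payoff_A(\rho) \le p_A$, which is exactly the winning condition for \Eve.

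For the converse, given a winning strategy $\sigma_\Eve$ for $p$, I would define the equilibrium profile $\sigma_\Agt$ by the reverse reading: for a history $h$ in $\calG$ that has followed \Eve's suggestions with no deviation, find the corresponding \Eve-history $H$ with $h \in \concrete_\bot(H)$ and set $\sigma_A(h) = M(\last(H))(\last(h),\bot)(A)$, and similarly use the $s(A)$-components and the suggested moves on deviating histories to define each player's response to an $A$-deviation. Info-compatibility of this profile again follows from Lemma~\ref{lemma:1} and the $\Allow'$ constraints. The main outcome concretizes $\out_\bot(\sigma_\Eve,s_\init)$, so its payoff is $p$. To see it is a Nash equilibrium, take any $A$-deviation $\sigma'_A$; its outcome $\rho'$ is matched by some outcome $R \in \out(\sigma_\Eve,s_\init)$ with $\rho' \in \concrete_A(R)$, and $A$ remains a suspect along $R$, so the winning condition gives $\payoff_A(\rho') \le p_A = \payoff_A(\out(\sigma_\Agt,v_\init))$, which is exactly non-profitability.

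The delicate point, and what I expect to be the main obstacle, is the careful bookkeeping in the converse direction when defining the profile on histories that \emph{are} off the main outcome but have not yet revealed a deviation (the type-1 / $s(\bot)\neq\emptyset$ regime). Here a single deviator $A$ may be indistinguishable from $\bot$ to the other players, so \Eve's move tuple $M$ must simultaneously specify consistent actions for the non-deviating players across all situations; I must verify that extracting player $A$'s own response from the $A$-component genuinely yields a well-defined strategy that respects $\sim_A$ for \emph{every} player at once, and that the position $i_0$ where a deviation first becomes visible is handled so that exactly one player is ever credited as deviator along any branch. Showing that every relevant deviation $\sigma'_A$ actually appears as some outcome-branch of $\sigma_\Eve$ — i.e. that \Adam's signal choices in $\Allow'(s,M)$ are rich enough to realize it — is the crux, and I would prove it by induction on the length of the deviation outcome, using the definitions of $\Allow'(s,M)$ and $\Tab'$ to track the deviator's vertex.
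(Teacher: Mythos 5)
Your proposal is correct and follows essentially the same route as the paper: defining \Eve's moves by reading off the equilibrium profile on the histories in $\concrete(H)\setminus\concrete_A(H)$ (with well-definedness via Lemma~\ref{lemma:1}), and conversely reconstructing the profile from a winning strategy via an inductively maintained correspondence between concrete and epistemic histories --- exactly the map the paper formalizes as $f_i$ with its invariants (i)--(vi). You also correctly isolate the crux, namely that every unilateral deviation is realized as an outcome branch of $\sigma_\Eve$ and that $\sim_A$-equivalent histories must be sent to the same epistemic history, which is precisely what the paper's invariant (vi) and the definition of $\Allow'(s,M)$ guarantee.
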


The proof of this theorem highlights a correspondence between Nash
equilibria in $\calG$ and winning strategies of \Eve in $\calE_\calG$.
In this correspondence, the main outcome of the equilibrium in $\calG$
is the unique $\bot$-concretisation of the unique $\bot$-play
generated by the winning strategy of \Eve.
 
\begin{proof}
  We fix an initial vertex $v_0$ in $\calG$, and we let $s_0$ be its
  corresponding initial state in $\calE_\calG$ (that is, $s_0(\bot) =
  \{v_0\}$ and for every $A \in \Agt$, $s_0(A) =\emptyset$.

  Assume first that $\sigma_\Agt = (\sigma_A)_{A \in \Agt}$ is a Nash
  equilibrium in the original game $\calG$ from $v_0$ with payoff
  $p$. We define a strategy $\sigma_\Eve$ for \Eve in the epistemic
  game $\calE_\calG$, which we will show is winning from $s_0$ for
  $p$.

  Let $H = s_0 \xrightarrow{M_0} (s_0,M_0) \xrightarrow{\beta_0} s_1
  \xrightarrow{M_1} (s_1,M_1) \xrightarrow{\beta_1} s_2 \ldots
  \xrightarrow{M_{k-1}} (s_{k-1},M_{k-1}) \xrightarrow{\beta_{k-1}}
  s_k$ be a full history in $\calE_\calG$. Fix $A \in \Agt$. For every
  $B \in \Agt^\bot \setminus \{A\}$, for every $h \in \concrete_B(H)
  \setminus \concrete_A(H)$, we set $\sigma_\Eve(H)(\last(h),B)(A) =
  \sigma_A(h)$. If some coordinate is not definable that way, we set
  it arbitrarily. 

  We need to show that this is well-defined. Pick $h,h' \in
  \concrete(H) \setminus \concrete_A(H)$. By Lemma~\ref{lemma:1}, $h
  \sim_A h'$.  Hence we deduce that $\sigma_A(h) = \sigma_A(h)$, which
  implies that $\sigma_\Eve$ is well-defined and that $\sigma_\Eve(H)
  \in \Allow'(\last(H))$. 

  We now show that $\sigma_\Eve$ is a winning strategy from $s_0$ for
  $p$.  First notice that $\out(\sigma_\Agt,v_0)$ coincides with the
  unique element $\rho_\bot$ of
  $\concrete_\bot(\out_\bot(\sigma_\Eve,s_0))$, hence
  $\payoff(\rho_\bot) = p$.  Let $R = s_0 \xrightarrow{M_0} (s_0,M_0)
  \xrightarrow{\beta_0} s_1 \xrightarrow{M_1} (s_1,M_1)
  \xrightarrow{\beta_1} s_2 \ldots$ be an (infinite) outcome of
  $\sigma_\Eve$.  Pick $\rho \in \concrete(R)$ such that $\rho \ne
  \rho_\bot$. There is $A \in \Agt$ such that $\rho \in
  \concrete_A(R)$, with index $k$. In particular we can write $\rho =
  v_0 \xrightarrow{m_0} v_1 \dots \xrightarrow{m_{k-1}} v_k\ldots$
  such that (i) for every $i <k$, $m_i = M_i(v_i,\bot)$, (ii) $m_k(-A)
  = M_k(v_k,\bot)(-A)$ and $m_k(A) \ne M_k(v_k,\bot)(A)$, (iii) for
  every $i>k$, $m_i(-A) = M_i(v_i,A)(-A)$; and $\ell(m_i,v_{i+1}) =
  \beta_i$ for every $i$. From $\rho$ we design a deviation
  $\sigma'_A$ for player $A$ in $\calG$ as follows:
  \begin{itemize}
  \item for every prefix $\rho_{\le i}$ of $\rho$,
    $\sigma'_A(\rho_{\le i}) = m_i(A)$;
  \item we extend $\sigma'_A$ to histories that are undistinguishable
    by $A$ from some $\rho_{\le i}$;
  \item we set $\sigma'_A(h) = \sigma_A(h)$ for all other histories
    $h$.
  \end{itemize}
  By construction, this is a well-defined info-compatible strategy for
  player $A$ in $\calG$. We use it as a deviation
  w.r.t. $\sigma_\Agt$.  Since $\sigma_\Agt$ is a Nash equilibrium, it
  is the case that $\payoff_A(\out(\sigma_\Agt[A/\sigma'_A],v_0)) \le
  \payoff_A(\out(\sigma_\Agt,v_0)) = p_A$.
  By construction we have $\rho = \out(\sigma_\Agt[A/\sigma'_A],v_0)$,
  and therefore $\payoff_A(\rho) \le p_A$.  We conclude that
  $\sigma_\Eve$ is a winning strategy.

  \medskip Conversely consider a winning strategy $\sigma_\Eve$ from
  $s_0$ in the epistemic game $\calE_\calG$. When we build the
  strategy profile $\sigma_\Agt$, we need not only to check the main
  outcome, but also to somehow foresee the possible deviations.
  We build $\sigma_\Agt$ inductively on the length of histories. We
  maintain the following invariant: For all histories of length at
  most $i$, which can be obtained as generated by $\sigma_\Agt$ or by
  a deviation by a unique player, we know a (partial) function $f_i
  \colon \Hist_{\calG} \to \Hist_{\calE_\calG}$ such that:
  \begin{enumerate}[(i)]
  \item $f_i$ extends $f_{i-1}$ (if $i>0$), and $\dom(f_i)$ only
    contains histories of length at most $i$;
  \item if $h$ is a prefix of $\out(\sigma_\Agt,v_0)$ or a prefix of a
    single-player deviation $\out(\sigma_\Agt[A/\sigma'_A],v_0)$ of
    length at most $i$, then $h \in \dom(f_i)$;
  \item for every $h \in \dom(f_i)$, $h \in \concrete(f_i(h))$ and
    $f_i(h)$ is a prefix of an outcome of $\sigma_\Eve$;
  \item if $h$ is a prefix of $\out(\sigma_\Agt,v_0)$, then $f_i(h)$
    is a prefix of $\out_\bot(\sigma_\Eve,s_0)$ and $h$ is the unique
    element of $\concrete_\bot(f_i(h))$;
  \item if $h$ is a prefix of $\out(\sigma_\Agt[A/\sigma'_A],v_0)$ for
    some player-$A$ deviation $\sigma'_A$, then $f_i(h)$ is a prefix
    of some outcome in $\out(\sigma_\Eve,s_0)$ and $h \in
    \concrete_A(f_i(h))$;
  \item for every $A \in \Agt$, for every $h \sim_A h'$, if $h
    \in\dom(f_i)$, then $h' \in \dom(f_i)$ and $f_i(h) = f_i(h')$.
  \end{enumerate}

  The base case $i=0$ is obvious, since there is a single history of
  length $0$ both in $\calG$ and in $\calE_\calG$. 
  We assume we have proven the result for $i$. We first initialize
  $f_{i+1}$ as $f_i$ on $\dom(f_i)$. Then, we consider a length-$i$
  outcome $h$ of either $\sigma_\Agt$ (no deviation so far) or of
  $\sigma_\Agt[B/\sigma'_B]$ for some (real)\footnote{In this context,
    a real deviation is one which does generate an outcome different
    from the main outcome without deviation.} $B$-deviation
  $\sigma'_B$, from $v_0$.  For every $A \in \Agt$, we define
  $\sigma_A(h) = \sigma_\Eve(f_i(h))(\last(h),\bot)(A)$ if $h$ is an
  outcome of $\sigma_\Agt$, and $\sigma_A(h) =
  \sigma_\Eve(f_i(h))(\last(h),B)(A)$ is the result of a (real)
  $B$-deviation w.r.t $\sigma_\Agt$. If there is only one such $h$,
  then this is well-defined.

  Otherwise, pick two such distinct outcomes $h$ and $h'$ such that $h
  \sim_A h'$. By (vi), $f_i(h) = f_i(h')$, which we denote $H$. By
  Lemma~\ref{lemma:2}, $h,h' \in \concrete(H) \setminus
  \concrete_A(H)$. Let $B,B' \in \Agt^\bot \setminus \{A\}$ such that
  $h \in\concrete_B(H)$ and $h' \in \concrete_{B'}(H)$. Now, since
  $\sigma_\Eve(H) \in \Allow'(\last(H))$, it is the case that
  $\sigma_\Eve(H)(\last(h),B)(A) =
  \sigma_\Eve(H)(\last(h'),B')(A)$. This implies that $\sigma_A(h) =
  \sigma_A(h')$, hence the profile $\sigma_\Agt$ is properly defined
  (values of $\sigma_\Agt$ outside the above domain is not important
  and can be set arbitrarily).

  Further define $f_{i+1}$ as follows:
  \begin{itemize}
  \item Let $h$ be the length-$i$ outcome of $\sigma_\Agt$ from
    $v_0$. Then the history $h' = h \xrightarrow{\sigma_\Agt(h)} v$ is
    the length-$(i+1)$ outcome of $\sigma_\Agt$ (where we have written
    $v$ for $\Tab(\last(h),\sigma_\Agt(h))$): we set $f_{i+1}(h') =
    \left(f_i(h) \xrightarrow{\sigma_\Eve(f_i(h))}
      (s,\sigma_\Eve(f_i(h)))\xrightarrow{\beta} s' \right)$, where $s
    = \last(f_i(h))$, $\beta = \ell(\sigma_\Agt(h),v)$ and $s' =
    \Tab((s,\sigma_\Eve(f_i(h))),\beta)$.
  \item Let $h$ be a length-$i$ outcome of $\sigma_\Agt$ or of some
    real $A$-deviation $\sigma_\Agt[A/\sigma'_A]$ from $v_0$. Then for
    every $m' \in \Act^\Agt$ such that $m'(-A) = \sigma_\Agt(h)(-A)$,
    $h' = h \xrightarrow{m'} v$ is a length-$(i+1)$ outcome of some
    $A$-deviation $\sigma_\Agt[A/\sigma'_A]$ (where we have written
    $v$ for $\Tab(\last(h),\sigma_\Agt(h))$): we set $f_{i+1}(h') =
    \left(f_i(h) \xrightarrow{\sigma_\Eve(f_i(h))}
      (s,\sigma_\Eve(f_i(h)))\xrightarrow{\beta} s' \right)$, where $s
    = \last(f_i(h))$, $\beta = \ell(m',v)$ and $s' =
    \Tab((s,\sigma_\Eve(f_i(h))),\beta)$.
  \end{itemize}

  Conditions (i)--(v) are now easy to get, by definition of the
  various objects. It remains to show condition (vi).
  Assume that $h_j \xrightarrow{m_j} v_j$ (with $j=1,2$) are two
  different length-$(i+1)$ outcomes in the domain of $f_{i+1}$, and
  assume that $\left(h_1 \xrightarrow{m_1} v_1\right) \sim_A \left(h_2
    \xrightarrow{m_2} v_2\right)$ for some $A \in \Agt$.  Then:
  \begin{itemize}
  \item $h_1 \sim_A h_2$, hence $f_i(h_1) \stackrel{\text{\upshape
        (vi)}}{=} f_i(h_2)$;
  \item $m_1(A) = m_2(A)$ and $\ell(m_1,v_1) = \ell(m_2,v_2)$.
  \end{itemize}
  By construction, 
  \[
  f_{i+1}(h_j) = \left( f_{i}(h_j) \xrightarrow{\sigma_\Eve(f_i(h_j))}
    (s_j,\sigma_\Eve(f_i(h_j))) \xrightarrow{\beta_j} s'_j \right)
  \]
  where $s_j = \last(f_{i}(h_j))$, $\beta_j = \ell(m_j,v_j)$ and $s'_j
  = \Tab((s_j,\sigma_\Eve(f_i(h_j))),\beta_j)$. This value is the same
  for both $j=1$ and $j=2$. Hence point (vi) is now proven.
  This concludes the induction step. 

  We now show that $\sigma_\Agt$ is a Nash equilibrium from $v_0$.
  We first define the function $f$ as the limit of the sequence
  $(f_i)_{i \in \bbN}$: $f$ can speak of finite histories and infinite
  plays.  Let $\rho = \out(\sigma_\Agt,v_0)$. By (iv) characterizing
  $f$, we get that $f(\rho) = \out_\bot(\sigma_\Eve,s_0)$ and $\rho$
  is the unique element of $\concrete_\bot(f(\rho))$. Since
  $\sigma_\Eve$ is winning with payoff $p$, we get that $p =
  \payoff(\rho)$.

  Consider some $A$-deviation $\sigma'_A$, and write $\rho' =
  \out(\sigma_\Agt[A/\sigma'_A],v_0)$. By (ii) and (v), we deduce
  $\rho'\in \dom(f)$, $f(\rho') \in \out(\sigma_\Eve,s_0)$ and $\rho'
  \in \concrete_A(f(\rho'))$.  Since $\sigma_\Eve$ is a winning
  strategy for \Eve, we get that $\payoff_A(\rho') \le p_A$. That is,
  there is no profitable deviation for every $A \in \Agt$:
  $\sigma_\Agt$ is a Nash equilibrium with payoff $p$. 
\end{proof}

\subsection{Remarks on the construction}

We did not formalize the epistemic unfolding as it is made
in~\cite{BKP11}. We believe we do not really learn anything for public
signal using it. And the above extended subset construction can much
better be understood.

One could argue that this epistemic game gives more information to the
players, since \Eve explicitely gives to everyone the move that should
be played. But in the real game, the players also have that
information, which is obtained by an initial coordination of the
players (this is required to achieve equilibria).

Finally, notice that the espitemic game constructed here generalizes
the suspect game construction of~\cite{BBMU15}, where all players have
perfect information on the states of the game, but cannot see the
actions that are precisely played. Somehow, games in~\cite{BBMU15}
have a public signal telling the state the game is in (that is,
$\ell(m,v) =v$). So, in the suspect game of~\cite{BBMU15}, the sole
uncertainty is in the players that may have deviated, not in the set
of states that are visited.
 
\begin{remark}
  Let us analyze the size of the epistemic game abstraction.
  \begin{itemize}
  \item The size of the alphabet is bounded by $|\Sigma|+
    |\Act|^{|\Agt| \cdot |V| \cdot (1+|\Agt|)}$. Since $|\Sigma|$ is
    bounded by $|V| \cdot |\Act|^{|\Agt|}$, the size of the alphabet
    is in $O(|\Act|^{|\Agt|^2 \cdot |V|})$
  \item The number of states is therefore in $O(2^{|\Agt|\cdot
      |V|}\cdot |\Act|^{|\Agt|^2 \cdot |V|})$. 
  \end{itemize}
  The epistemic game is therefore of exponential size w.r.t. the
  initial game. Note that we could reduce the bounds by using tricks
  like those in~\cite[Prop.~4.8]{BBMU15}, but this would not avoid an
  exponential blowup.
\end{remark}

\section{Algorithmics for publicly visible payoffs}

While the construction of the epistemic game has transformed the
computation of Nash equilibria in a concurrent game with public signal
to the computation of winning strategies in a two-player zero-sum
turn-based game, we cannot apply standard algorithms out-of-the-box,
because the winning condition is rather complex. In the following, we
present two applications of that approach in the context of publicly
visible payoffs, one with Boolean payoff functions, and another with
mean payoff functions. Remember that in the latter case, public
visibility is required to have decidability
(Theorem~\ref{theo:MP-undec3}).
 
We let $\calG = \tuple{V,v_\init,\Agt,\Act,\Sigma,\Allow,\Tab,\ell,
  (\payoff_A)_{A \in \Agt}}$ be a concurrent game with public signal,
and we let $\calE_\calG = \tuple{S_\Eve,S_\Adam, s_\init,\Act^\Agt
  \cup \Sigma,\Allow', \Tab'}$ be its epistemic game abstraction.  We
write $S_\bot = \{s \in S_\Eve \mid s(\bot) \ne \emptyset\}$. The
$\bot$-part of $\calE_\calG$ is the set of states $S_\bot \cup \{(s,M)
\in S_\Adam \mid s \in S_\bot\}$.

In this section, otherwise specified, we assume public visibility and
prefix-independence of payoff functions.  We recall this notion here:
A payoff function $\payoff \colon V \times (\Act^\Agt \times V)^\omega
\to \bbD$ is said \emph{prefix-independent} whenever for every full
play $\rho$, for every suffix $\rho_{\ge i}$ of $\rho$, $\payoff(\rho)
= \payoff(\rho_{\ge i})$.

Thanks to Lemma~\ref{lemma:payoff} (see page~\pageref{lemma:payoff}),
one can therefore attach to every full play $R$ of $\calE_\calG$, a
$|\Agt|$-dimensional payoff vector $\payoff'(R) = (\payoff'_A(R))_{A
  \in \Agt}$.  We then recall here the winning condition for \Eve in
$\calE_\calG$: a strategy $\sigma_\Eve$ is winning whenever there
exists some $p = (p_A)_{A \in \Agt}$ such that
$\payoff'(\out_\bot(\sigma_\Eve,s_0)) = p$, and for every $R \in
\out(\sigma_\Eve,s_0)$, for every $A \in \suspect(R)$, $\payoff'_A(R)
\le p_A$.

\subsection{First reduction}

We define the value of epistemic state $s$ as follows:
\[
\Value(s) = \{w \in \overline{\bbR}^\Agt \mid \exists \sigma_\Eve
\forall \rho \in \out(\sigma_\Eve,s)\ \forall A \in \suspect(\rho),\
\payoff'_A(\rho) \le w(A)\}
\]
The set $\Value(s)$ is upward-closed, and we want to compute a
representation thereof.  Note that for all players $A \in \Agt
\setminus \suspect(s)$, one can represent possible values for $A$ by
$-\infty$.

Somehow, the set $\Value(s)$ stores the possible payoffs than one
should achieve along the main outcome of a candidate Nash equilibrium
when a deviation to $s$ may happen. Otherwise, one of the suspect
players will manage to deviate and have a larger payoff. This is
formalized as follows.

\begin{lemma}
  \label{lemma:botplay}
  \Eve has a winning strategy in $\calE_\calG$ from $s_0$ for payoff
  vector $p$ if and only there is a $\bot$-play $R = s_0
  \xrightarrow{M_0} (s_0,M_0) \xrightarrow{\beta_0} s_1 \ldots s_{k-1}
  \xrightarrow{M_{k-1}} (s_{k-1},M_{k-1}) \xrightarrow{\beta_{k-1}}
  s_k \ldots$ from $s_0$, such that $p = (\payoff'_A(R))_{A \in
    \Agt}$, and for every history $s_0 \xrightarrow{M_0} (s_0,M_0)
  \xrightarrow{\beta_0} s_1 \ldots s_{i-1} \xrightarrow{M_{i-1}}
  (s_{i-1},M_{i-1}) \xrightarrow{\beta'_{i-1}} s$, $p \in \Value(s)$.
\end{lemma}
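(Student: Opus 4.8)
The plan is to prove the two implications separately, exploiting one structural fact about $\calE_\calG$. Along the main $\bot$-play, \Eve proposes moves $M_0,M_1,\dots$, and from a state $(s_{i-1},M_{i-1})$ the only signal that keeps the $\bot$-component non-empty is the one produced by the no-deviation move $M_{i-1}(v_{i-1},\bot)$ from the main vertex $v_{i-1}$ (this is exactly $\beta_{i-1}$). Hence as soon as \Adam picks any other signal $\beta'_{i-1}$, the resulting state $s$ has $s(\bot)=\emptyset$, and by the definition of $\Tab'$ the $\bot$-component stays empty forever. So every outcome of an \Eve-strategy either coincides with the $\bot$-play $R$, or branches off it exactly once, at some step $i-1$, into a state $s$ reached by the history described in the statement; from $s$ onward the outcome is an ordinary play whose suspect behaviour is precisely what $\Value(s)$ controls. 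Throughout I use that $\payoff'$ is well-defined (Lemma~\ref{lemma:payoff}) and prefix-independent on plays of $\calE_\calG$ (inherited from prefix-independence of the $\payoff_A$), and that $(\suspect(s_k))_k$ is non-increasing, so the limit suspect set of a play is determined by any of its tails.

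For the direct implication, suppose $\sigma_\Eve$ is winning for $p$ and set $R=\out_\bot(\sigma_\Eve,s_0)$; the first clause of the winning condition gives $\payoff'(R)=p$. Fix a one-step-off state $s$, reached by the history $H$ that follows $R$ up to $(s_{i-1},M_{i-1})$ and then takes a signal $\beta'_{i-1}$. I would let $\sigma^s$ be the shift of $\sigma_\Eve$ past $H$, namely $\sigma^s(w)=\sigma_\Eve(H\cdot w)$. Every outcome $\rho$ of $\sigma^s$ from $s$ extends to a full outcome $H\cdot\rho\in\out(\sigma_\Eve,s_0)$; by monotonicity of suspects $\suspect(H\cdot\rho)=\suspect(\rho)$, and by prefix-independence $\payoff'_A(H\cdot\rho)=\payoff'_A(\rho)$. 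Applying the second clause of the winning condition to $H\cdot\rho$ yields $\payoff'_A(\rho)\le p_A$ for every $A\in\suspect(\rho)$, so $\sigma^s$ witnesses $p\in\Value(s)$. (When $\beta'_{i-1}=\beta_{i-1}$ the same argument covers the main-play states $s_i$.)

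For the converse, given a $\bot$-play $R$ with $\payoff'(R)=p$ together with witnessing strategies $\sigma^s$ for $p\in\Value(s)$ at every one-step-off state $s$, I would assemble a single strategy $\sigma_\Eve$: on prefixes of $R$ it replays the moves $M_i$, and as soon as a history leaves $R$ at a state $s$ it switches to the corresponding $\sigma^s$. This is well defined since each off-play history determines its unique branch point and branch state. If \Adam never leaves $R$, the outcome is $R$ with $\payoff'(R)=p$, and the second clause holds with equality, because $\payoff'_A(R)=p_A$ for all $A$ even when $\suspect(R)\ne\emptyset$. Otherwise the outcome branches into some $s$ and then follows $\sigma^s$; as before, the suspect set and $\payoff'$ of the full outcome agree with those of its tail in $s$, so $p\in\Value(s)$ delivers the required bound $\payoff'_A\le p_A$ for every suspect $A$. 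Hence $\sigma_\Eve$ is winning for $p$.

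The only genuinely delicate points are bookkeeping rather than conceptual. First I must verify that the $\bot$-component can never be resurrected once emptied, so that the off-play tails are exactly the plays measured by $\Value$; this follows by inspecting the $\Tab'$ clause for $A=\bot$. Second, I must reconcile the two notions of suspects — the limit $\suspect(R')$ used in the winning condition versus $\suspect(\rho)$ of a tail used in $\Value$ — which reduces to the stated monotonicity of $(\suspect(s_k))_k$. I also need the main-play case of the winning condition to hold despite $R$ possibly carrying suspects, and this is immediate because $\payoff'_A(R)=p_A$ turns the inequality into an equality. No fixpoint or game-solving argument is required; the entire content lies in the one-step branching structure of $\calE_\calG$.
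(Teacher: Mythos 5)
Your proof is correct and follows essentially the same route as the paper's: the forward direction shifts the winning strategy past the one-step-off history $H$ and uses prefix-independence of $\payoff'$ (plus stabilization of the suspect sets) to witness $p \in \Value(s)$, while the converse pastes the $\Value(s)$-witnesses $\sigma^s$ onto the $\bot$-play $R$. The bookkeeping details you flag (the $\bot$-component never being resurrected, the equality case on the main play despite possible invisible suspects) are exactly the points the paper leaves as ``not difficult to see,'' and you resolve them correctly.
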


In the above lemma, $s_0 \xrightarrow{M_0} (s_0,M_0)
\xrightarrow{\beta_0} s_1 \ldots s_{i-1} \xrightarrow{M_{i-1}}
(s_{i-1},M_{i-1}) \xrightarrow{\beta'_{i-1}} s$ corresponds to a
prefix of $R$ with a one-step deviation from an \Adam-state.

\begin{proof}
  Assume $\sigma_\Eve$ is a winning strategy for \Eve in
  $\calE_\calG$. Let $R = \out_\bot(\sigma_\Eve,s_0)$ be the main
  outcome of $\sigma_\Eve$ from $s_0$, and $p$ its payoff. For all $R'
  \in \out(\sigma_{\Eve},s_0)$, for every $A \in \suspect(R)$,
  $\payoff'_A(R) \le p_A$.

  Pick some history $H = s_0 \xrightarrow{M_0} (s_0,M_0)
  \xrightarrow{\beta_0} s_1 \ldots s_{i-1} \xrightarrow{M_{i-1}}
  (s_{i-1},M_{i-1}) \xrightarrow{\beta'_{i-1}} s$ generated by
  $\sigma_\Eve$, where all but the last move follow $R$.  Consider the
  strategy $\sigma'_\Eve$, such that for every history $H'$ starting
  at $s$, $\sigma'_\Eve(H') = \sigma_\Eve(H \cdot H')$. Then, for
  every $R' \in \out(\sigma'_\Eve,s)$, $H \cdot R' \in
  \out(\sigma_\Eve,s_0)$. Hence, since $\sigma_\Eve$ is winning, for
  every $R' \in \out(\sigma'_\Eve,s)$, for every $A \in \suspect(R')$,
  $\payoff'_A(H \cdot R') \le p_A$. Since $\payoff'_A$ is
  prefix-independent, $\payoff'_A(R') = \payoff'_A(H \cdot R') \le
  p_A$, hence $p \in \Value(s)$.

  \medskip Conversely assume $R$ is a $\bot$-play with payoff $p$, and
  such that it satisfies the hypotheses of the lemma, and for every
  state $s$ resulting from a single-move deviation of $R$, let
  $\sigma^s$ be a strategy witnessing the fact that $p \in \Value(s)$.
  Define $\sigma_\Eve$ as a strategy following $R$ along the prefixes
  of $R$, and which plays according to $\sigma^s$ when \Adam deviates
  from $R$ at state $s$. Then it is not difficult to see that this is
  a winning strategy for \Eve in $\calE_\calG$. 
\end{proof}

Together with Theorem~\ref{theo:correction}, the above lemma reduces
the problem of computing Nash equilibria (for prefix-independent
payoff functions) in the original game to:
\begin{enumerate}[(i)]
\item the computation of (representatives of) the sets $\Value(s)$
  when $s \in S_\Eve \setminus S_\bot$
\item the search of a $\bot$-play satisfying the constraints of the
  lemma.
\end{enumerate}

\subsection{Generic procedure to compute the value sets for
  prefix-independent payoffs}
\label{subsec:generic}

The game $\calE_\calG$ has a specific structure, in which the set of
suspects along a play is non-increasing (and stabilizing...), as soon
as we leave the (main) $\bot$-part of the game. We will use this
structure to design a generic bottom-up procedure to compute the
values of states in the non-$\bot$-part of $\calE_\calG$.

We make the following hypothesis. We have an oracle which, given a
tuple $G = \tuple{\tuple{S,S_\Eve,S_\Adam,s_\init,A,\Allow,\Tab},
  S_\o, \gamma, \delta}$, where the first component is a turn-based
game structure, $S_\o \subseteq S$ is a set of output states such that
for every $s \in S_\o$, for every $a \in \Allow(s)$, $\Tab(s,a)
=s$,\footnote{We could instead say that $s$ is blocking, in that there
  is no allowed action, but this would not fit our formal definition
  of a game. Hence we use this trick here.}  $\delta \colon S_\o \to
\Up(\overline{\bbR}^d)$ assigns an upward-closed set of
$\overline{\bbR}^d$ to every output state, and $\gamma \colon \Plays_G
\to \overline{\bbR}^d$ is a $d$-dimensional payoff function), computes
at any state of the game, the set
\[
\val^G(s) = \left\{w \in \overline{\bbR}^d \begin{array}{c} \mid
    \\[-.2cm] \mid \\[-.2cm] \mid \end{array} \exists \sigma_\Eve
  \forall R \in \out(\sigma_\Eve,s)\ \begin{array}{l} R\ \text{not
      visiting $S_\o$ implies}\ \gamma(R) \le
    w,\ \text{and}\\
    R\ \text{ending up in $s' \in S_\o$ implies}\ w \in
    \delta(s') \end{array}\right\}
\]

We will define subgames of $\calE_\calG$, where sets of suspects will
be fixed. The idea is that, within such a game, the payoff reduces to
a multi-dimensional standard payoff function, with nodes connected to
lower components (where the set of suspects is reduced). Then using a
bottom-up computation, we will compute the value sets in all states of
$S_\Eve \setminus S_\bot$ in $\calE_\calG$.

We formalize this idea now.  Inductively, we define for every $\calS
\subseteq \Agt$ a core game $\Core_\calS =
\tuple{\tuple{S^\calS,S_\Eve^\calS,S_\Adam^\calS,s_\init^\calS,\Act^\Agt
    \cup \Sigma,\Allow'_{|S_\Eve^\calS \cup
      S_\Adam^\calS},\Tab^\calS_{|S_\Eve^\calS \cup
      S_\Adam^\calS}},S_\o^\calS,\gamma^\calS,\delta^\calS}$ as
follows.
\begin{itemize}
\item $S^\calS = S_\Eve^\calS \cup S_\Adam^\calS$
\item $S_\Eve^\calS = \{s \in S_\Eve\setminus S_\bot \mid \suspect(s)
  = S\} \cup \{s' \in S_\Eve \mid \calS \ne \suspect(s) \subseteq
  \calS\ \text{and}\ \exists s \in S_\Adam\ \text{with}\
  \suspect(s)=\calS\ \text{and}\ s' = \Tab'(s,m)\ \text{for some}\ m\
  \text{in}\ \calE_\calG\}$
\item $S_\Adam^\calS = \{s \in S_\Adam \setminus S_\bot\mid
  \suspect(s) = S\} \cup \{s' \in S_\Adam \mid \calS \ne \suspect(s)
  \subseteq \calS\ \text{and}\ \exists s \in S_\Eve\ \text{with}\
  \suspect(s)=\calS\ \text{and}\ s' = \Tab'(s,m)\ \text{for some}\ m\
  \text{in}\ \calE_\calG\}$
\item $S_\o^\calS = \{s \in S^\calS \mid \suspect(s) \ne \calS\}$
\item for every infinite play $R$ in $\Core_\calS$ not visiting
  $S_\o^\calS$, for every $A \in \calS$, we set $\gamma^\calS(R) =
  \payoff'_A(R)$ (this is a $|\calS|$-dimensional payoff function)
\item for every history ending in $s \in S_\o^\calS$, 
  \[
  \delta^\calS(s) = \{ w \in \overline{\bbR}^{\calS} \mid
  w_{|\suspect(s)} \in \val^{\Core_{\suspect(s)}}(s)\}
  \]
\end{itemize}

Correctness of the approach is stated below:

\begin{lemma}
  Let $s \in S \setminus S_\bot$. Then $s$ is a state of the core game
  $\Core_{\suspect(s)}$. And 
  \[
  \Value(s) = \{w \in \overline{\bbR}^\Agt \mid w_{|\suspect(s)} \in
  \val^{\Core_{\suspect(s)}}(s)\}
  \]
\end{lemma}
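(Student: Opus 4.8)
The plan is to prove both assertions together by well-founded induction on the suspect set $\calS = \suspect(s)$, ordered by inclusion, exploiting the fact that $\Core_\calS$ is defined from the games $\Core_{\calS'}$ with $\calS' \subsetneq \calS$. The first assertion is immediate: if $s \in S \setminus S_\bot$ has $\suspect(s)=\calS$, then $s$ lies in the first component of $S_\Eve^\calS$ (resp. $S_\Adam^\calS$), the one collecting the states whose suspect set is exactly $\calS$, so $s$ is a state of $\Core_\calS = \Core_{\suspect(s)}$. The content is the equality $\Value(s) = \{w \mid w_{|\calS} \in \val^{\Core_\calS}(s)\}$. Two structural facts drive the induction: along any play leaving the $\bot$-part the sequence of suspect sets is non-increasing, and it can strictly decrease only at an \Adam-move (an \Eve-move leaves every component of the state unchanged, since $\Tab'(s,M)=(s,M)$). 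Hence a play from $s$ either stays forever in the layer $\{s' \mid \suspect(s')=\calS\}$, or reaches, right after an \Adam-move, an output state $s' \in S_\o^\calS$ with $\suspect(s')=\calS'\subsetneq\calS$; in the latter case the limit $\suspect(\rho)$ is determined entirely by the suffix starting at $s'$. Note also that only players of $\suspect(\rho)\subseteq\calS$ ever occur in the defining condition of $\Value(s)$, so the coordinates outside $\calS$ are unconstrained; this is what makes the projection/extension between $\overline{\bbR}^\Agt$ and $\overline{\bbR}^\calS$ legitimate.

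For the inclusion $\{w \mid w_{|\calS}\in\val^{\Core_\calS}(s)\}\subseteq\Value(s)$, I take a strategy witnessing $w_{|\calS}\in\val^{\Core_\calS}(s)$ and assemble $\sigma_\Eve$ in $\calE_\calG$: play as the witness while the suspect set stays $\calS$, and upon reaching an output state $s'$ (with $\suspect(s')=\calS'$) switch to a strategy witnessing $w\in\Value(s')$. Such a strategy exists because $w_{|\calS}\in\delta^\calS(s')$ unpacks to $w_{|\calS'}\in\val^{\Core_{\calS'}}(s')$, which by the induction hypothesis gives $w\in\Value(s')$. For $\rho\in\out(\sigma_\Eve,s)$: if $\rho$ never leaves the layer then $\suspect(\rho)=\calS$ and $\gamma^\calS(\rho)\le w_{|\calS}$ already yields $\payoff'_A(\rho)\le w_A$ for all $A\in\calS$; if $\rho$ reaches $s'$, its suffix is an outcome of the $\Value(s')$-witness, so $\payoff'_A\le w_A$ holds for $A\in\suspect(\rho)$ on the suffix, and prefix-independence of $\payoff'_A$ transfers this bound to $\rho$ itself. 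Either way $w\in\Value(s)$.

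For the reverse inclusion, I start from a strategy $\sigma_\Eve$ witnessing $w\in\Value(s)$ and let its restriction to $\Core_\calS$ play exactly as $\sigma_\Eve$ while the suspect set is $\calS$ (the output states being made absorbing as the oracle requires). For an outcome that never leaves the layer, $\suspect=\calS$, so the $\Value$-constraint gives $\gamma^\calS(R)\le w_{|\calS}$. For an outcome reaching an output state $s'$ via a history $H$, the shifted strategy $\sigma_\Eve^{s'}(H'):=\sigma_\Eve(H\cdot H')$ witnesses $w\in\Value(s')$ by precisely the prefix-independence/shifting argument already used in the proof of Lemma~\ref{lemma:botplay}; the induction hypothesis then gives $w_{|\calS'}\in\val^{\Core_{\calS'}}(s')$, i.e. $w_{|\calS}\in\delta^\calS(s')$. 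Hence the restricted strategy witnesses $w_{|\calS}\in\val^{\Core_\calS}(s)$, completing the equality.

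The step I expect to require the most care is the gluing at layer boundaries: one must check that \Eve-strategies can be concatenated across output states without loss, that the limiting suspect set $\suspect(\rho)$ of a glued play coincides with that of the chosen continuation, and that prefix-independence is genuinely what licenses transferring the payoff bound between a suffix and the whole play (in both directions). The fact that suspect sets decrease only at \Adam-moves and only finitely often (at most $|\Agt|$ times, being non-increasing in a finite lattice) is what guarantees the recursion is well-founded, that each play crosses finitely many boundaries before stabilising, and hence that the bottom-up oracle computation is meaningful.
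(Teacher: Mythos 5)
Your proof is correct and follows essentially the same route as the paper: induction on the suspect set $\calS$, with one inclusion obtained by restricting a $\Value(s)$-witness to the core game and invoking prefix-independence plus the shifting argument at output states, and the other by gluing the oracle's witness with $\Value(s')$-witnesses obtained from the induction hypothesis at each $s' \in S_\o^{\calS}$. The only presentational difference is that you absorb the paper's explicit base case (minimal $\calS$, where $S_\o^{\calS} = \emptyset$) into the well-founded induction, which is fine.
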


\begin{proof}
  The proof is by induction on the set $\calS$ of suspects.

  The base case is a state $s$ of $\Core_{\calS}$ such that $\calS$ is
  minimal.  For that $\calS$, $S^\calS_\o = \emptyset$.  Seing $s$ as
  a state of $\calE_\calG$, the set $\Value(s)$ can be seen as the
  value set of the multiple-dimension payoff function $(\payoff'_A)_{A
    \in \calS}$, which is exactly what the oracle computes for us.

  We assume that for every set $\calS \ne \calS' \subseteq \calS$, we
  have shown that for every $s'$ such that $\suspect(s') = \calS'$,
  $\Value(s') = \{w \in \overline{\bbR}^\Agt \mid w_{|\calS'} \in
  \val^{\Core_{\calS'}}(s')\}$.

  Pick a strategy $\sigma_\Eve$ in $\calE_\calG$ from $s$ that
  witnesses some $w \in \Value(s)$.  Pick an outcome $R$ of
  $\sigma_\Eve$ which leaves $\Core_{\suspect(s)}$ (viewed as a
  subarena of $\calE_\calG$) at state $s' \in S_\o^{\suspect(s)}$
  after prefix $H$. Write $\calS' = \suspect(s')$. Then $\sigma_\Agt$
  after prefix $H$ achieves $w$ as well (by prefix-independence): $w
  \in \Value(s')$ By induction hypothesis, $w_{|\calS'} \in
  \val^{\Core_{\calS'}}(s')$, hence $w_{|\calS} \in \delta^\calS(s')$.
  Hence $\sigma_\Eve$ (pruned when leaving $\Core_{\calS}$) achieves
  $\val^{\Core_{\calS}}(s)$ (since all outcomes which do not
  leave the current core are no problem for the winning condition). We
  conclude that 
  \[
  \Value(s) \subseteq \{w \in \overline{\bbR}^\Agt \mid w_{|\suspect(s)} \in
  \val^{\Core_{\suspect(s)}}(s)\}
  \]

  \medskip Conversely, pick a strategy $\sigma_\Eve$ in
  $\Core_{\suspect(s)}$ from $s$ which witnesses $w_{|\calS} \in
  \val^{\Core_{\suspect(s)}}(s)$.  Pick an outcome $R$ of
  $\sigma_\Eve$ which leaves at $s' \in S^\calS_\o$, and let $h$ be
  the smallest prefix of $R$ reaching $s'$. Write $\calS' =
  \suspect(s')$. Then, $w_{|\calS} \in \delta^\calS(s')$, that is,
  $w_{|\calS'} \in \val^{\Core_{\calS'}}(s')$. In particular, by
  induction hypothesis, we get that $w \in \Value(s')$. For every
  state $s' \in S^\calS_\o$ reachable via $\sigma_\Eve$, we let
  $\sigma^{s'}$ be a strategy in $\calE_\calG$ which witnesses $w$
  from $s'$. Then, let $\sigma'_\Eve$ be the strategy that plays
  according to $\sigma_\Eve$ as long as we do not visit $S^\calS_\o$,
  and switches to $\sigma^{s'}$ as soon as we visit $s' \in
  S^\calS_\o$. This is not difficult to show that this strategy
  achieves $w$! 
\end{proof}

\subsection{Boolean $\omega$-regular objectives}

For prefix-independent $\omega$-regular objectives, we will use the
generic approach presented previously.  We assume that each payoff
function $\payoff_A$ is associated with a prefix-independent Boolean
($\omega$-regular) objective $\Omega_A$ over $\Sigma$. The payoff
value domain $\bbD$ is $\{0,1\}$. Hence, for readability we can assume
that various values are taken in $\{0,1\}$ as well (instead of
$\overline{\bbR}$).

We consider a core game $\Core_{\calS}$ and a state $s$ such that
$\suspect(s) = \calS$.  We have that $(1)_\calS \in
\val^{\Core_{\calS}}(s)$.\footnote{This is a notation for the vector
  assigning $1$ to every $A \in \calS$.}
The set of possible improvements of those values is finite (try to
assign $0$ instead of $1$ to every player $A \in \calS$). 

For each $\calS' \subseteq \calS$, we write $w[\calS']$ for the vector
$(0)_{A \in \calS'} \cup (1)_{A \in \calS \setminus \calS'}$.  Then:
$w[\calS'] \in \val_{\Core_{\calS}}(s) = \Value(s)$ if and only if
\Eve has a strategy to enforce
\begin{equation}
  \label{win}
  \Big(\bigcap_{A \in \calS'}
  \Omega_A^c\Big) \cup \Reach(\{s' \in S_\o^{\calS} \mid w[\calS'] \in
  \delta^\calS(s')\}) 
\end{equation}
from $s$, where $\Reach(\ldots)$ indicates a reachability
objective. That is, \Eve should play such that every play either
leaves via an output node with an adequate value, or should prevent
each suspect player to achieve his/her objective.

The value sets can be computed for parity conditions:

\begin{lemma}
  \label{lemma:parity}
  When each $\payoff_A$ is a Boolean payoff function given by a a
  parity condition,\footnote{We assume the reader is familiar with
    parity conditions...} one can compute in exponential space
  $\Value(s)$ for every $s \in S_\Eve \setminus S_\bot$.
\end{lemma}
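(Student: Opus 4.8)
The plan is to apply the generic bottom-up procedure of Section~\ref{subsec:generic}. By the correctness lemma proven there, for every $s \in S_\Eve \setminus S_\bot$ we have $\Value(s) = \{w \in \overline{\bbR}^\Agt \mid w_{|\suspect(s)} \in \val^{\Core_{\suspect(s)}}(s)\}$, so it suffices to compute $\val^{\Core_\calS}(s)$ for every $\calS \subseteq \Agt$ and every state $s$ of $\Core_\calS$ with $\suspect(s) = \calS$. I would do this by induction on $\calS$ ordered by inclusion, computing the minimal suspect sets first (for which $S_\o^\calS = \emptyset$) and then larger ones, exactly following the generic lemma. Since each $\payoff'_A$ takes values in $\{0,1\}$, every set $\val^{\Core_\calS}(s)$ is an upward-closed subset of $\{0,1\}^\calS$, hence is entirely determined by, and can be represented by, the membership of the finitely many threshold vectors $w[\calS']$ for $\calS' \subseteq \calS$. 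We already know that $w[\emptyset] = (1)_\calS$ always belongs to it, so only the candidates with $\calS' \ne \emptyset$ have to be tested.

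Fix $\calS$, a state $s$ with $\suspect(s) = \calS$, and a candidate $w[\calS']$. By the characterization~\eqref{win}, $w[\calS'] \in \val^{\Core_\calS}(s)$ if and only if \Eve can enforce from $s$ the objective $\big(\bigcap_{A \in \calS'} \Omega_A^c\big) \cup \Reach(T)$, where $T = \{s' \in S_\o^\calS \mid w[\calS'] \in \delta^\calS(s')\}$. The target set $T$ is computable: by definition of $\delta^\calS$, deciding $w[\calS'] \in \delta^\calS(s')$ amounts to testing $w[\calS']_{|\suspect(s')} \in \val^{\Core_{\suspect(s')}}(s')$, which is already available from the induction hypothesis, since $\suspect(s') \subsetneq \calS$ for every output state $s'$. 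Thus computing $\val^{\Core_\calS}(s)$ reduces, for each candidate vector, to solving a single two-player zero-sum game on the arena $\Core_\calS$.

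It remains to argue that this game is solvable. The complement of a parity condition is again a parity condition, so $\bigcap_{A \in \calS'} \Omega_A^c$ is a conjunction of parity conditions; the reachability disjunct is absorbed into it using the fact that every output state of $\Core_\calS$ is a sink ($\Tab(s',a) = s'$). I would assign to each target sink $s' \in T$ priorities making all the conditions $\Omega_A^c$ ($A \in \calS'$) trivially satisfied on its self-loop, and to each non-target output sink priorities making them fail. A play reaching $T$ then satisfies the conjunction trivially, while a play that never leaves the $\suspect = \calS$ region must satisfy the conjunction directly; hence \Eve wins the disjunctive objective in $\Core_\calS$ if and only if she wins the pure conjunction-of-parities objective in the modified arena. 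Games whose objective is a conjunction of parity conditions (generalized parity games) are determined, and \Eve's winning region is computable in space polynomial in the size of the arena.

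Finally I would account for the complexity. The arena $\Core_\calS$ is a subgame of $\calE_\calG$, whose size is exponential in $\calG$; for each $\calS$ and each state we solve at most $2^{|\calS|} \le 2^{|\Agt|}$ instances of a generalized-parity game, and there are at most $2^{|\Agt|}$ suspect sets to process. Since each such game is solved in space polynomial in the (exponential) arena, and the intermediate value sets are stored as finite lists of threshold vectors of at most exponential total size, the whole computation runs in exponential space. The main obstacle is the simultaneous treatment of the conjunction of complemented parity objectives together with the reachability disjunct: one must check that the absorption of $\Reach(T)$ into the parity colouring is sound (which relies precisely on the output states being sinks and on prefix-independence), and that solving the resulting generalized-parity game stays within polynomial space in the arena, so that the overall bound remains exponential space.
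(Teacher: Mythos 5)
Your proposal is correct and follows essentially the same route as the paper's proof: a bottom-up table indexed by suspect sets $\calS$ and threshold vectors $w[\calS']$, where each entry is decided by solving the generalized conjunctive parity game arising from condition~\eqref{win} in space polynomial in the (exponential-size) arena, giving exponential space overall. Your explicit absorption of the reachability disjunct into the parity colouring at the output sinks (exploiting that they are self-loop states and that parity is prefix-independent) merely spells out a detail the paper leaves implicit in the phrase ``corresponds to a generalized (conjunctive) parity condition,'' and it is sound.
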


\begin{proof}
  The condition~\eqref{win} corresponds to a so-called generalized
  (conjunctive) parity condition.  Thanks to~\cite{CHP07}, the winner
  of a generalized conjunctive parity games can be established in
  \coNP. It can in particular be solved in polynomial space.

  Starting from the smallest sets $\calS$, we fill in exponential
  space a table with input a state $s$ of $\calE_\calG$ and a subset
  $\calS'$ of the set $\calS = \suspect(s)$, and with output $1$ or
  $0$, depending on whether \Eve has a winning strategy for
  condition~\eqref{win}. Each computation can be done in exponential
  space. So globally, this can be done in exponential space.
%
%
%
\end{proof}

To conclude, we can state the following result:

\begin{corollary}
  \label{coro:Boolean}
  One can decide in exponential space the (constrained) existence of a
  Nash equilibrium in a game with public signal and publicly visible
  payoff functions associated with parity conditions.
\end{corollary}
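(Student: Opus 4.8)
The plan is to chain together the three preceding results---Theorem~\ref{theo:correction}, Lemma~\ref{lemma:botplay}, and Lemma~\ref{lemma:parity}---and to close with a single essentially one-player search. First I would invoke Theorem~\ref{theo:correction} to replace ``there is a Nash equilibrium with payoff $p$ in $\calG$'' by ``\Eve has a winning strategy for $p$ in $\calE_\calG$'', and then Lemma~\ref{lemma:botplay} to replace the latter by the existence of a $\bot$-play $R$ from $s_\init$ with $\payoff'(R)=p$ all of whose one-step \Adam-deviations reach a state $s$ with $p\in\Value(s)$. The (constrained) existence problem then amounts to: is there a payoff vector $p\in\{0,1\}^\Agt$ with $\nu_A\le p_A\le\nu'_A$ for every $A\in\Agt$, together with such a $\bot$-play? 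Since payoffs are Boolean there are at most $2^{|\Agt|}$ candidate vectors $p$, so I can afford to treat each one separately.

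The first ingredient is the computation of the value sets. By Lemma~\ref{lemma:parity} I compute and store, in exponential space, a table giving for every $s\in S_\Eve\setminus S_\bot$ and every candidate $p$ whether $p\in\Value(s)$. Fixing a candidate $p$, I then restrict the $\bot$-part of $\calE_\calG$: at an \Adam-state $(s,M)$ with $s\in S_\bot$ I keep the \Eve-choice $M$ only if \emph{every} signal $\beta\in\Allow'(s,M)$ with $\Tab'((s,M),\beta)\notin S_\bot$ satisfies $p\in\Value(\Tab'((s,M),\beta))$. By the definition of $\Tab'$, the signals leaving $S_\bot$ are exactly the deviation signals (the unique no-deviation signal being the one that keeps $s(\bot)$ nonempty and along which $R$ proceeds), so this is precisely a safety restriction on \Eve's moves, readable off the precomputed table, that encodes the side condition of Lemma~\ref{lemma:botplay}.

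It then remains to decide, inside this restricted $\bot$-arena, whether \Eve can produce an infinite $\bot$-play from $s_\init$ whose payoff is \emph{exactly} $p$, i.e. that satisfies $\bigcap_{A:\,p_A=1}\Omega_A \cap \bigcap_{A:\,p_A=0}\Omega_A^c$. Because the no-deviation signal is determined by \Eve's move, this is effectively a one-player problem: \Eve alone picks the moves and thereby the $\bot$-play, \Adam having no genuine choice once the safety-restricted arena is fixed (his deviations are off the $\bot$-play and are already accounted for by the move-selection constraint). The objective is a conjunction of parity conditions---the complement of a parity condition being again a parity condition, obtained by shifting all priorities for the indices with $p_A=0$---so the question is the nonemptiness of a reachable cycle meeting a generalized conjunctive parity condition, equivalently a generalized conjunctive parity game with a trivial opponent, solvable as in Lemma~\ref{lemma:parity} in space polynomial in $\calE_\calG$. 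As $\calE_\calG$ is of exponential size in $\calG$, every step---building $\calE_\calG$, filling the value tables, and running the final search for each of the $2^{|\Agt|}$ vectors $p$---stays within exponential space, and a witnessing $\bot$-play (hence, by unfolding the constructions in the proof of Theorem~\ref{theo:correction} and of Lemma~\ref{lemma:botplay}, a witnessing Nash equilibrium) can be extracted when one exists.

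The step I expect to be delicate is this last one: one must verify that combining the precomputed ``all deviations land in a good-value state'' constraint with the parity objective on the main outcome faithfully realizes the condition of Lemma~\ref{lemma:botplay}, and that the resulting generalized parity condition is still decidable within polynomial space over the already exponential arena $\calE_\calG$, so that the global bound remains exponential space rather than worse. The enumeration over $p$ and the book-keeping of the value tables are routine by comparison.
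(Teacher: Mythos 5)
Your proposal is correct and takes essentially the same route as the paper: compute the sets $\Value(s)$ via Lemma~\ref{lemma:parity}, enumerate the (at most exponentially many) payoff vectors $p$ within the thresholds, and for each of them search for a $\bot$-play achieving $p$ all of whose one-step \Adam-deviations land in states $s$ with $p \in \Value(s)$, exactly as licensed by Theorem~\ref{theo:correction} and Lemma~\ref{lemma:botplay}. Your additional remarks --- that the restricted search in the $\bot$-part is effectively one-player and reduces to nonemptiness for a conjunction of parity conditions, solvable in space polynomial in the size of $\calE_\calG$ --- merely spell out what the paper summarizes as ``this can be done in exponential space as well''.
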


\begin{proof}
  We fix two thresholds $(\nu_A)_{A \in \Agt}$ and $(\nu'_A)_{A \in
    \Agt}$ for the payoff.

  Once all the sets $\Value(s)$ for every $s \in S_\Eve \setminus
  S_\bot$ have been computed (in exponential space thanks to
  Lemma~\ref{lemma:parity}), we can do the following: for each payoff
  vector $(p_A)_{A \in\Agt}$ which satisfies $\nu_A \le p_A \le
  \nu'_A$ for every $A \in \Agt$, first mark all states $s$ such that
  $p \in \Value(s)$, and then look for a $\bot$-play which satisfies
  the constraint $\Omega_A$ if $p_A =1$ and $\Omega_A^c$ if $p_A = 0$,
  and only traverses \Adam-states where each deviation to a state $s$
  is marked. This can be done in exponential space as well.  There are
  at most exponentially many payoff vectors $(p_A)_{A \in\Agt}$ such
  that $\nu_A \le p_A \le \nu'_A$ for every $A \in \Agt$, hence this
  concludes the proof of the result. 
\end{proof}

Finally we can state the following result:

\begin{theorem}
  \label{theo:Boolean}
  The constrained existence problem is in
  \EXPSPACE and \EXPTIME-hard for concurrent games with public signal
  and publicly visible Boolean payoff functions associated with parity
  conditions. The lower bound holds even for B\"uchi
  conditions and two players.
\end{theorem}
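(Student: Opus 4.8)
For the upper bound I would simply invoke Corollary~\ref{coro:Boolean}: B\"uchi and co-B\"uchi conditions are particular parity conditions, so the generic bottom-up procedure of Section~\ref{subsec:generic}, instantiated through Lemma~\ref{lemma:parity}, already decides the constrained existence in exponential space. Hence the \EXPSPACE membership is immediate and the whole effort goes into the \EXPTIME lower bound.

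For the lower bound I would reduce from the problem of deciding the winner of a two-player game with \emph{one-sided} partial observation: a protagonist with only an observation-based view of the arena, a perfectly informed antagonist resolving the nondeterminism, and a B\"uchi (equivalently reachability) objective $\Phi$ for the protagonist. This problem is \EXPTIME-complete~\cite{reif84,CDHR07}, already with a single partially informed player. The plan is to turn such an instance into a two-player concurrent game $\calG$ with public signal, in which $A_1$ plays the protagonist and $A_2$ the antagonist, and then to apply the ``payoff-$1$ trick'' of the first undecidability result: set $\payoff_{A_1}=\pay_\Phi$ and $\payoff_{A_2}=\pay_{\neg\Phi}$, both read on the public signal so that they are publicly visible, and ask for a Nash equilibrium whose outcome gives $A_1$ payoff $1$ (the constrained existence problem with $\nu_{A_1}=\nu'_{A_1}=1$ and trivial thresholds for $A_2$). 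Here $\neg\Phi$ is co-B\"uchi, still a parity condition, so we stay in the announced class.

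The main obstacle is to make $A_2$ \emph{perfectly informed} while the signal $\ell$ is public: the equivalence ``no profitable $A_2$-deviation'' $\Leftrightarrow$ ``the protagonist wins against every environment'' only holds if $A_2$'s available deviations range over \emph{all} perfect-information environment strategies, whereas $A_1$ must stay blind to everything but his observations. I would obtain this through $A_2$'s \emph{private} action memory rather than through the public signal, using a two-phase gadget for each original step played from a current vertex $q$: first $A_1$ announces an action $a$ while $A_2$ is idle and the public signal reveals $a$ (harmless, since $A_1$ already knows his own action); then $A_2$, who knows $q$ from his own past announcements and has just learnt $a$, announces a successor $q'$, the transition $\Tab$ moving to $q'$ while the public signal reveals only the observation $\mathsf{obs}(q')$. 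Thus $A_2$ reconstructs the entire vertex sequence, while $A_1$ sees exactly the observation sequence; $\ell_{A_1}$-compatible strategies for $A_1$ coincide with observation-based protagonist strategies and $A_2$-deviations coincide with perfectly informed environment strategies. Revealing $a$ and duplicating vertices keeps the reduction polynomial, and up to the standard bookkeeping of B\"uchi colours on the duplicated vertices the protagonist's objective, depending only on the observations broadcast in the public signal, is publicly visible.

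With the gadget in place the correspondence is routine and parallels the first undecidability proof. If the protagonist has a winning observation-based strategy, letting $A_1$ play it and $A_2$ play any best response yields a profile whose main outcome satisfies $\Phi$, so $A_1$ obtains $1$ and cannot improve, while no $A_2$-deviation can force $\neg\Phi$; this is a Nash equilibrium with $A_1$-payoff $1$. Conversely, in any Nash equilibrium giving $A_1$ payoff $1$, the outcome satisfies $\Phi$, and since $A_2$ has no profitable deviation the (necessarily $\ell_{A_1}$-compatible) strategy $\sigma_{A_1}$ enforces $\Phi$ against every perfectly informed $A_2$, i.e. it is a winning observation-based strategy. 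As the source problem is \EXPTIME-hard already with a B\"uchi objective and a single partially informed player, this establishes \EXPTIME-hardness with two players and B\"uchi payoff functions. I expect the delicate point to be precisely the verification that $A_2$'s deviation set is exactly the set of perfect-information environment strategies (so that partial information does not secretly weaken the antagonist), which the two-phase gadget and perfect recall are designed to guarantee.
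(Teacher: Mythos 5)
Your upper bound is exactly the paper's: both simply invoke Corollary~\ref{coro:Boolean}. For the \EXPTIME lower bound, however, you take a genuinely different route. The paper gives a direct, self-contained reduction from the halting problem of an alternating linearly-bounded Turing machine: it builds a turn-based game in which \Adam secretly picks a tape cell to monitor (the public signal emits the same value \textit{start} regardless of his choice, so \Eve cannot see which cell is watched), forcing \Eve to faithfully reproduce the whole tape content, with ``reach the halting state'' as \Eve's objective and its converse for \Adam; halting then corresponds to a Nash equilibrium where \Eve wins. You instead factor the argument through a known \EXPTIME-complete problem --- one-sided partial-observation reachability/B\"uchi games~\cite{reif84,CDHR07} --- and embed it into the public-signal setting with your two-phase gadget, finishing with the same payoff-$1$ equilibrium trick that the paper itself uses in its first (undecidability) theorem. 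Your proof is correct: the gadget does make $\sim_{A_2}$ the identity on reachable histories (so $A_2$'s $\ell_{A_2}$-compatible deviations are exactly the perfect-information environment strategies), while $A_1$'s view is exactly the observation sequence, and the equilibrium-to-winning-strategy equivalence goes through as you describe. What each approach buys: yours is modular and short, delegating the Turing-machine simulation to the literature (Reif's hardness proof uses essentially the same secret-monitoring idea as the paper's construction); the paper's is self-contained and does not depend on external hardness results, at the price of an explicit machine-simulation construction.

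One small point to tighten. You note that $\neg\Phi$ is co-B\"uchi and ``still a parity condition'', but the theorem claims hardness \emph{even for B\"uchi conditions}; with a general B\"uchi $\Phi$ your player $A_2$ would not have a B\"uchi payoff. The fix is the one you already gesture at: take $\Phi$ to be reachability of an absorbing target, so that both $\pay_\Phi$ and $\pay_{\neg\Phi}$ are expressible as (prefix-independent) B\"uchi conditions on the public signal. The paper's own proof has the same implicit step (reachability of the halting state and its converse, with absorbing end states), so this is a matter of stating the convention, not a gap in the argument.
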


\begin{proof}
  Only the lower bound remains to be proven. We will use a reduction
  from the halting problem in an alternating linearly-bounded Turing
  machine.

  Let $\calM$ be an alternating linearly-bounded Turing machine.  Let
  $w$ be an input word. On reading $w$, the tape will be bounded by
  $p(|w|)$, where $p$ is a polynom.  W.l.o.g. we rewrite $w$ into a
  length-$p(|w|)$ (completing with $\#$), and assume it is $w$
  itself. We build a two-player turn-based game $\calG_\calM$, which
  will simulate the computation of $\calM$ on $w$. It is constructued
  as follows:
  \begin{itemize}
  \item Initial state $(q_0,0,w(0))$ (belongs to \Adam)
    
    {\small $\calM$ is in cell $0$ and contains the letter $w(0)$}
  \item For all $j \in \{0,\ldots,p(|w|)\}$, $(q_0,0,w(0))
    \xrightarrow{a_{j,w(j)}} (\textit{test},q_0,0,w(0),(j,w(j)))$ and
    $\ell(a_{j,w(j)},\star) = \textit{start}$; State
    $(\textit{test},q_0,0,w(0),(j,w(j)))$ belongs to \Adam

    {\small With that transition, \Adam decides to monitor cell $j$,
      and stores the fact that it contains initially $w(j)$. This is
      invisible to player \Eve, who therefore will not know what \Adam
      is monitoring. Note that we did not check here that $w(j) =
      w(0)$ whenever $j=0$; this will done in the next state.}
  \item $(\textit{test},q,i,\alpha,(j,\beta))
    \xrightarrow{\textit{continue}} (q,i,\alpha,(j,\beta))$ if either
    $i  \ne j$ or $i =j$ and $\alpha=\beta$. \\
    $(\textit{test},q,i,\alpha,(j,\beta))
    \xrightarrow{\textit{failure}} \textsf{fail}$ if $i =j$ but
    $\alpha \ne \beta$

    We set $\ell(\textit{continue},\star) = \textit{continue}$ and
    $\ell(\textit{failure},\star) = \textit{failure}$.
  
  {\small This is a test which can fail whenever \Adam is monitoring
    the current cell, and realizes that what \Eve claims for that cell
  is not correct.}
\item State $(q,i,\alpha,(j,\beta))$ belongs to either \Eve or \Adam,
  depending on whether $q$ is $\vee$ or $\wedge$.

  {\small This is standard alternation between \Eve who tries to find
    the correct next operation and \Adam who tries to find the worst
    operation.}
\item For every state $(q,i,\alpha,(j,\beta))$ (which belongs either
  to \Eve or to \Adam), for every transition $t =
  (q,\alpha,\alpha',q',d)$ such that $0 \le i+d \le p(|w|)$, there is
  a transition $(q,i,\alpha,(j,\beta)) \xrightarrow{t}
  (t,q,i,\alpha,(j,\beta))$; $\ell(t,\star) = t$ and this last state
  belongs to \Eve. If there is no such transition then there is a
  transition $(q,i,\alpha,(j,\beta)) \xrightarrow{\textit{failure}}
  \textsf{fail}$.

  {\small The next move, when it exists, is stored in the state.}
\item Writing $t = (q,\alpha,\alpha',q',d)$, for every $\alpha'' \in
  \Sigma$, $(t,q,i,\alpha,(j,\beta)) \xrightarrow{\alpha''}
  (\textit{test},q',i+d,\alpha'',(j,\beta))$ if $j \ne i$, and
  $(t,q,i,\alpha,(j,\beta)) \xrightarrow{\alpha''}
  (\textit{test},q',i+d,\alpha'',(j,\alpha'))$ if $j =i$.  We set
  $\ell(\alpha'',\star) = \alpha''$.
  
  {\small This is the next move computation. This is where \Eve has to
    be smart, since she has to remember what was in cell $i+d$ (this
    $\alpha''$)! Note that \Adam updates his knowledge about the cell
    he is monitoring...}
\end{itemize}
Note: in the above game, even though the public signal hides some
information, \Adam has full information, since all hidden decisions
have been made by him.

The payoff function for \Eve is given by the reachability condition
``Reach the halting state'', and it is the converse for \Adam. Then
one can then prove the following equivalence between the Turing
machine $\calM$ and the constructed game: $\calM$ halts on $w$ if and
only if there is a Nash equilibrium in $\calG_\calM$ from the initial
state where \Eve wins. 
\end{proof}

\paragraph*{Going further.}  In this section we have only considered
Boolean $\omega$-regular prefix-independent payoff functions. We
easily see that we can twist Lemma~\ref{lemma:botplay} for,
e.g. reachability properties, and obtain a very similar algorithm for
Boolean payoff functions associated with reachability
conditions. We can even twist it further to mix prefix-independent
objectives and reachability objectives. We will not detail that here.

Somewhat more importantly, we could well extend the generic approach
to the so-called \emph{ordered objectives} of~\cite{BBMU15}. They are
somehow finite preference relations (extending payoff functions as
given here) manipulating $\omega$-regular properties: for instance,
given a finite number of $\omega$-regular properties, the ``maximise''
order counts the number of objectives which are satisfied.

Finally let us comment on the public visibility of payoff
functions. While it is important for quantitative objectives like
mean payoff objectives (remember the undecidability result of
Theorem~\ref{theo:MP-undec3}), it is not so important for Boolean
objectives. Indeed we can enrich the epistemic game with extra
information tracking enough information about the past for easily
checking many properties in parallel. We believe we can enrich the
construction 
and provide an algorithm to decide the constrained existence problem
for Boolean $\omega$-regular invisible payoff functions.

\subsection{Mean payoff objectives}

In this section, we assume that each payoff $\payoff_A$ in $\calG$ is
given by a (liminf or limsup) mean payoff function $\MP_A$ that is
publicly visible through the labelling $\ell$: for every $A \in \Agt$,
there is a function $w_A \colon \Sigma \to \bbZ$ such that, if $R =
v_0 \cdot m_1 \cdot v_1 \ldots m_k \cdot v_k \ldots$ is a full play of
$\calG$, then $\payoff_A(R) = \limsup_{k \to \infty} \frac{1}{k}
\sum_{i=1}^k w_A(\ell(m_i,v_i))$ if $\MP_A$ is a limsup mean payoff
function and $\payoff_A(R) = \liminf_{k \to \infty} \frac{1}{k}
\sum_{i=1}^k w_A(\ell(m_i,v_i))$ if $\MP_A$ is a liminf mean payoff
function. Quickly notice that the new payoff functions $\payoff'_A$
used in $\calE_\calG$ are also (liminf or limsup) mean payoff
functions, but since each single step in $\calG$ is mimicked by two
steps in $\calE_\calG$, the corresponding weight functions needs to be
doubled. This will appear in the transformation below.

\medskip First notice that we could have used the generic approach for
this problem as well and apply various results of~\cite{BR14,BR15},
but it is actually easier to follow the approach
of~\cite{brenguier16}, and to transform the winning condition of \Eve
in $\calE_\calG$ directly to a so-called \emph{polyhedron query} in a
multi-dimensional mean-payoff game.

To ease the notations, we write $\Agt = \{A_i \mid 1 \le i \le
|\Agt|\}$. We also denote by $W$ the maximal absolute weight appearing
in $\calG$. We first define several weight functions.  For each $1 \le
i \le |\Agt|$, we define:
\begin{itemize}
\item for every edge $e= \Big(s \xrightarrow{M} (s,M) \Big)$ in
  $\calE_\calG$, we set $w_i(e) = w_{|\Agt|+i}(e) = w_{2|\Agt|+i}(e) =
  0$
\item for every edge $e = \Big((s,M) \xrightarrow{\beta} s')$ in
  $\calE_\calG$ with $s'(\bot)\ne\emptyset$, we set:
  \[
  \left\{\begin{array}{l}
    w_i(e) = 2w_{A_i} (\beta) \\
    w_{|\Agt|+i}(e) = -2w_{A_i}(\beta)  \\
    w_{2|\Agt|+i}(e) = -2w_{A_i}(\beta)
  \end{array}\right.
  \]
\item for every edge $e = \Big((s,M) \xrightarrow{\beta} s')$ in
  $\calE_\calG$ with $s'(\bot) = \emptyset$, we set:
  \[
  \left\{\begin{array}{l}
    w_i(e) = 2W \\
    w_{|\Agt|+i}(e) = 2W \\
    w_{2|\Agt|+i}(e) = \left\{\begin{array}{ll}
        -2w_{A_i}(\beta) & \text{if}\ A_i \in \suspect(s) \\
        2W &  \text{otherwise}
      \end{array}\right.
  \end{array}\right.
  \]
\end{itemize}

For every $i$, we write $\MP_i^{-1}$ for $\underline{\MP_i}$ and
$\MP_i^{+1}$ for $\overline{\MP_i}$, where the weight taken into
account is $w_i$. We also write $\iota(i) =+1$ if $\MP_{A_i}$ is a
limsup mean payoff and $\iota(i)=-1$ if $\MP_{A_i}$ is a liminf
meanpayoff.

We now show that the winning condition of \Eve in $\calE_\calG$ can be
expressed using these new mean payoff functions.

\begin{lemma}
  \label{lemma:polyhedron}
  \Eve has a winning strategy in $\calE_\calG$ from $s_0$ for payoff
  vector $p = (p_{A})_{A \in \Agt}$ if and only if \Eve has a strategy
  $\sigma_\Eve$ from $s_0$ such that
  for every $R \in \out(\sigma_\Eve,s_0)$:
  \[
  \left\{\begin{array}{l}
      \MP^{\iota(i)}_i (R) \ge p_{A_i} \\[.2cm]
      \MP^{-\iota(i)}_{|\Agt|+i} (R) \ge -p_{A_i} \\[.2cm]
      \MP^{-\iota(i)}_{2|\Agt|+i} (R) \ge -p_{A_i}
    \end{array}\right.
  \]
\end{lemma}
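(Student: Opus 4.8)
The plan is to prove the equivalence \emph{for a fixed strategy} $\sigma_\Eve$: I will show that $\sigma_\Eve$ is winning for $p$ in $\calE_\calG$ from $s_0$ if and only if all its outcomes satisfy the three displayed inequalities. Since both the winning condition and the polyhedron condition quantify over the same set $\out(\sigma_\Eve,s_0)$, establishing this pointwise correspondence immediately yields the ``iff'' of the statement.

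First I would record two structural facts. Once a play of $\calE_\calG$ leaves the $\bot$-part it stays out forever (if $s(\bot)=\emptyset$ then every successor $s'$ has $s'(\bot)=\emptyset$), and along any play the suspect set is non-increasing, hence eventually equal to $\suspect(R)$. As each $\payoff'_{A_i}$ is a prefix-independent (liminf or limsup) mean payoff, only the tail of $R$ is relevant. The central step is then a translation of the auxiliary weights. Each round of $\calE_\calG$ is an \Eve-edge of weight $0$ followed by an \Adam-edge carrying the contribution $2w_{A_i}(\beta)$; the factor $2$ absorbs the doubling of steps, and since the partial averages are bounded and those over odd-length prefixes differ from those over even-length prefixes by a factor tending to $1$, they share the same liminf and limsup. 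Combining this with $\liminf_k(-x_k)=-\limsup_k x_k$ and the definition of $\iota$, I obtain, writing $\calS=\suspect(R)$, that on the unique $\bot$-play
\[
\MP^{\iota(i)}_i(R)=\payoff'_{A_i}(R),\quad \MP^{-\iota(i)}_{|\Agt|+i}(R)=-\payoff'_{A_i}(R),\quad \MP^{-\iota(i)}_{2|\Agt|+i}(R)=-\payoff'_{A_i}(R),
\]
whereas if $R$ leaves the $\bot$-part the tail weights become the constant $2W$ in every ``inactive'' dimension, so that $\MP^{\iota(i)}_i(R)=\MP^{-\iota(i)}_{|\Agt|+i}(R)=W$, and $\MP^{-\iota(i)}_{2|\Agt|+i}(R)$ equals $-\payoff'_{A_i}(R)$ when $A_i\in\calS$ and $W$ when $A_i\notin\calS$.

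From this dictionary both implications are short. For the forward direction, assume $\sigma_\Eve$ is winning for $p$; then $\payoff'(\out_\bot(\sigma_\Eve,s_0))=p$, so $|p_{A_i}|\le W$ (a mean of weights bounded by $W$). On the $\bot$-play the three identities give values $p_{A_i}$, $-p_{A_i}$, $-p_{A_i}$, meeting each inequality with equality; on a play leaving the $\bot$-part, dimensions $i$ and $|\Agt|+i$ return $W\ge p_{A_i}$ and $W\ge -p_{A_i}$, while dimension $2|\Agt|+i$ returns $-\payoff'_{A_i}(R)\ge -p_{A_i}$ exactly when $A_i\in\suspect(R)$ (which is precisely the winning constraint) and $W\ge -p_{A_i}$ otherwise. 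Conversely, assuming the three inequalities hold for every outcome, applying dimensions $i$ and $|\Agt|+i$ to the $\bot$-play forces $\payoff'_{A_i}(\out_\bot(\sigma_\Eve,s_0))\ge p_{A_i}$ and $\le p_{A_i}$, hence $\payoff'(\out_\bot(\sigma_\Eve,s_0))=p$; and for any outcome $R$ and any suspect $A_i\in\suspect(R)$, dimension $2|\Agt|+i$ reads $-\payoff'_{A_i}(R)\ge -p_{A_i}$, i.e.\ $\payoff'_{A_i}(R)\le p_{A_i}$, which is the remaining half of the winning condition.

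The routine but genuinely delicate point, and the one I would write out in full, is the weight-translation computation: keeping the liminf/limsup exponents straight through the sign change (via $\iota$ and $-\iota$), checking that the step-doubling does not perturb the liminf or limsup even along odd-length prefixes, and verifying that the bound $|p_{A_i}|\le W$ makes every ``inactive'' dimension vacuously satisfied. Everything else follows formally from prefix-independence and the two structural facts above.
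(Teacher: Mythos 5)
Your proposal is correct and follows essentially the same route as the paper's proof: establish the dictionary between the auxiliary mean payoff values and $\payoff'$ (equality with $\payoff'_{A_i}$, $-\payoff'_{A_i}$, $-\payoff'_{A_i}$ on the $\bot$-play; the value $W$ on inactive dimensions and $-\payoff'_{A_i}$ on dimension $2|\Agt|+i$ for suspects on plays leaving the $\bot$-part), then use the bound $|p_{A_i}|\le W$ to make inactive dimensions vacuous, noting that the same strategy witnesses both conditions. Your explicit treatment of the step-doubling, the $\liminf/\limsup$ sign flip via $\iota$, and the monotonicity of suspect sets only spells out details the paper leaves implicit.
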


\begin{proof}
  Pick a winning strategy $\sigma_\Eve$ for \Eve in $\calE_\calG$ from
  $s_0$, for payoff $p = (p_A)_{A \in \Agt}$. In particular,
  $\payoff'(\out_\bot(\sigma_\Eve,s_0)) = p$, and for all $R \in
  \out(\sigma_\Eve,s_0)$, for every $A \in \suspect(R)$,
  $\payoff'_A(R) \le p_A$. We will show that $p$ satisfies the
  constraints in the statement.
  
  Consider first $R = \out_\bot(\sigma_\Eve,s_0)$. For every $A_i
  \in \Agt$, $\payoff'_{A_i}(R) = p_{A_i}$. By definition of the
  weights in the $\bot$-part of the game, we get $\payoff'_{A_i}(R)
  = \MP^{\iota(i)}_i(R) = - \MP^{-\iota(i)}_{|\Agt|+i}(R) =
  -\MP^{-\iota(i)}_{2|\Agt|+i} (R)$; hence $\MP^{\iota(i)}_i (R)
  \ge p_{A_i}$, $\MP^{-\iota(i)}_{|\Agt|+i} (R) \ge - p_{A_i}$ and
  $\MP^{-\iota(i)}_{2|\Agt|+i} (R) \ge - p_{A_i}$.
  
  Consider now $R \in \out(\sigma_\Eve,s_0) \setminus
  \{\out_\bot(\sigma_\Eve,s_0)\}$. For every $A_i \in \suspect(R)$,
  $\payoff'_{A_i}(R) \le p_{A_i}$; there is no constraint on
  $\payoff'_{A_i}(R)$ when $A_i \notin \suspect(R)$. By
  definition of the weights in the game:
  \begin{itemize}
  \item $\MP_i^{\iota(i)}(R) = \MP_i^{-\iota(i)}(R) = W$, hence
    $\MP_i^{\iota(i)}(R) \ge p_{A_i}$ (since $p_{A_i}$ is a
    mean payoff value in $\calG$, hence it is bounded by $W$);
  \item $\MP_{|\Agt|+i}^{\iota(i)}(R) =
    \MP_{|\Agt|+i}^{-\iota(i)}(R) = W$, hence
    $\MP_{|\Agt|+i}^{-\iota(i)}(R) \ge -p_{A_i}$;
  \item if $A_i \notin \suspect(R)$,
    $\MP_{2|\Agt|+i}^{\iota(i)}(R) =
    \MP_{2|\Agt|+i}^{-\iota(i)}(R) = W$, hence
    $\MP_{2|\Agt|+i}^{-\iota(i)}(R) \ge -p_{A_i}$;
  \item if $A_i \in \suspect(R)$, $\payoff'_{A_i}(R) =
    -\MP^{-\iota(i)}_{2|\Agt|+i} (R) \le p_{A_i}$; hence
    $\MP^{-\iota(i)}_{2|\Agt|+i} (R) \ge -p_{A_i}$.
  \end{itemize}
  It implies that $\sigma_\Eve$ is winning for the new winning
  condition.

  \medskip Conversely assume that $\sigma_\Eve$ is winning for the new
  winning condition. Then consider $R =
  \out_\bot(\sigma_\Eve,s_0)$. It holds that $\payoff'_{A_i}(R) =
  \MP^{\iota(i)}_i(R) = -\MP^{-\iota(i)}_{|\Agt|+i}(R)$, hence
  we get $\payoff'_{A_i} (R) = p_{A_i}$. Pick now $R \in
  \out(\sigma_\Eve,s_0) \setminus \{\out_\bot(\sigma_\Eve,s_0)\}$, and
  $A_i \in \suspect(R)$. Then, $\payoff'_{A_i}(R) =
  -\MP^{-\iota(i)}_{2|\Agt|+i}(R)$, hence $\payoff'_{A_i}(R) \le
  p_{A_i}$. We conclude that $\sigma_\Eve$ is a winning strategy for
  the original objective. 
\end{proof}

For every $1 \le i \le |\Agt|$, $\widetilde{\MP}_i$ is set to
$\MP_i^{\iota(i)}$, $\widetilde{\MP}_{|\Agt|+i}$ is set to
$\MP^{-\iota(i)}_{|\Agt|+i}$ and $\widetilde{\MP}_{2|\Agt|+i}$ is set
to $\MP^{-\iota(i)}_{2|\Agt|+i}$. Gathering
Theorem~\ref{theo:correction} and Lemma~\ref{lemma:polyhedron}, we can
characterize the constrained existence problem for mean payoff
functions as follows.

\begin{corollary}
  Let $\nu,\nu' \in \overline{\bbR}^\Agt$. There is a Nash equilibrium
  in $\calG$ from $v_0$ with payoff $p$ such that $\nu \le p \le \nu'$
  if and only if there is $u=(u_i)_{1 \le i \le 3|\Agt|} \in
  \overline{\bbR}^{3|\Agt|}$ such that:
  \begin{enumerate}[(1)]
  \item for every $1 \le i \le |\Agt|$, $u_i = -u_{|\Agt|+i} =
    -u_{2|\Agt|+i}$ and $\nu_i \le u_i \le \nu'_i$;
  \item \Eve has a strategy from $s_0$ in $\calE_\calG$ to ensure
    $\widetilde{\MP}_i \ge u_i$ for every $1 \le i \le 3|\Agt|$.
  \end{enumerate}
\end{corollary}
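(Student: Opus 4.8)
The plan is to obtain the corollary as a direct composition of Theorem~\ref{theo:correction} and Lemma~\ref{lemma:polyhedron}, together with a linear change of variables that repackages the payoff vector $p$ and the three families of inequalities of Lemma~\ref{lemma:polyhedron} as a single uniform lower-bound polyhedron query on the $3|\Agt|$ objectives $\widetilde{\MP}_1,\ldots,\widetilde{\MP}_{3|\Agt|}$. Concretely, I would use the dictionary $u_i = p_{A_i}$, $u_{|\Agt|+i} = -p_{A_i}$ and $u_{2|\Agt|+i} = -p_{A_i}$ for $1 \le i \le |\Agt|$, which is exactly the content of (1): the equalities $u_i = -u_{|\Agt|+i} = -u_{2|\Agt|+i}$ set up a bijection between admissible vectors $u$ and payoff vectors $p$, and the bounds $\nu_i \le u_i \le \nu'_i$ translate verbatim into $\nu \le p \le \nu'$.

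For the forward direction, suppose $\sigma_\Agt$ is a Nash equilibrium in $\calG$ from $v_0$ with payoff $p$ satisfying $\nu \le p \le \nu'$. By Theorem~\ref{theo:correction}, \Eve has a winning strategy for $p$ in $\calE_\calG$ from $s_0$, and by Lemma~\ref{lemma:polyhedron} she then has a strategy $\sigma_\Eve$ from $s_0$ such that every outcome $R \in \out(\sigma_\Eve,s_0)$ satisfies $\MP^{\iota(i)}_i(R) \ge p_{A_i}$, $\MP^{-\iota(i)}_{|\Agt|+i}(R) \ge -p_{A_i}$ and $\MP^{-\iota(i)}_{2|\Agt|+i}(R) \ge -p_{A_i}$. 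Setting $u$ via the dictionary above, (1) holds by construction, and --- recalling the definitions $\widetilde{\MP}_i = \MP^{\iota(i)}_i$, $\widetilde{\MP}_{|\Agt|+i} = \MP^{-\iota(i)}_{|\Agt|+i}$ and $\widetilde{\MP}_{2|\Agt|+i} = \MP^{-\iota(i)}_{2|\Agt|+i}$ --- the three displayed inequalities are literally $\widetilde{\MP}_i \ge u_i$ for $1 \le i \le 3|\Agt|$, which is (2).

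The converse runs the same chain backwards. Given $u$ meeting (1) and (2), condition (1) lets me define $p$ by $p_{A_i} = u_i$, so that automatically $u_{|\Agt|+i} = u_{2|\Agt|+i} = -p_{A_i}$ and $\nu \le p \le \nu'$. Condition (2) unfolds, via the same definitions of $\widetilde{\MP}$, into the three inequality families of Lemma~\ref{lemma:polyhedron} for this $p$; hence by Lemma~\ref{lemma:polyhedron} \Eve has a winning strategy in $\calE_\calG$ from $s_0$ for payoff $p$, and by Theorem~\ref{theo:correction} there is a Nash equilibrium in $\calG$ from $v_0$ with payoff $p$, which by construction satisfies $\nu \le p \le \nu'$.

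There is essentially no difficult step here; the whole content has already been discharged in the weight construction and in Lemma~\ref{lemma:polyhedron}. The only point requiring care is the index and sign bookkeeping: I must check that the triple of objectives per player encodes the equality $\payoff'_{A_i}(\out_\bot(\sigma_\Eve,s_0)) = p_{A_i}$ as two opposite lower bounds (the $\MP_i$ and $\MP_{|\Agt|+i}$ dimensions) and the suspect constraint $\payoff'_{A_i}(R) \le p_{A_i}$ as the third ($\MP_{2|\Agt|+i}$) lower bound after negation, so that all $3|\Agt|$ constraints genuinely point the same way ($\ge$) and thus form a legitimate polyhedron query. This is exactly why (1) is phrased with the equalities $u_i = -u_{|\Agt|+i} = -u_{2|\Agt|+i}$ rather than leaving the three coordinates free.
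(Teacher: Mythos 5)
Your proposal is correct and matches the paper exactly: the corollary is stated there as an immediate consequence of gathering Theorem~\ref{theo:correction} and Lemma~\ref{lemma:polyhedron}, with precisely the change of variables $u_i = p_{A_i}$, $u_{|\Agt|+i} = u_{2|\Agt|+i} = -p_{A_i}$ that you describe. Your sign/index bookkeeping and the unfolding of the $\widetilde{\MP}$ definitions are the intended (and only) content of the argument.
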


This above problem is known as the polyhedron problem
in~\cite{BR15}. It is shown in this paper that if there is a solution,
there is one solution whose encoding has size polynomial in the number
of dimensions (here $3|\Agt|$), the encoding of the polyhedron (here
$4|\Agt|$), the encoding of the maximal weight (here, $\log_2(W)$) and
the encoding of the number of states of the game (here,
$O(\log_2(2^{|\Agt| \cdot |V|} \cdot |\Act|^{|\Agt|^2 \cdot |V|}))$,
that is $ = O(|\Act| \cdot |\Agt|^2 \cdot |V|)$).  Hence we can guess
in polynomial time a possible encoding for such a solution $u$, check
in polynomial time that it belongs to the polyhedron, and use an
oracle to decide whether \Eve is winning in this game. We can apply
results of~\cite{VCD+15} (which show that multi-dimensional
mean-payoff games can be solved in \coNP) to infer that this can be
done in \coNEXPTIME (since $\calE_\calG$ has exponential-size and new
weights have polynomial encodings). Globally, everything can therefore
be done in exponential space.  We therefore deduce the following
result.

\begin{theorem}
  \label{theo:MP}
  The constrained existence problem is in
  \NP$^{\!\!\text{\textsf{\upshape{NEXPTIME}}}}$ (hence in \EXPSPACE) and
  \EXPTIME-hard for concurrent games with public signal and publicly
  visible mean payoff functions.
\end{theorem}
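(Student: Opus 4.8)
The plan is to prove the two bounds separately, relying for the upper bound on the reduction to a polyhedron query established in the Corollary preceding the statement, and for the lower bound on an adaptation of the reduction used in the proof of Theorem~\ref{theo:Boolean}. For the membership, I would start from the equivalence in that Corollary: gathering Theorem~\ref{theo:correction} and Lemma~\ref{lemma:polyhedron}, a constrained Nash equilibrium with $\nu \le p \le \nu'$ exists if and only if there is $u \in \overline{\bbR}^{3|\Agt|}$ in the polyhedron defined by $u_i = -u_{|\Agt|+i} = -u_{2|\Agt|+i}$ and $\nu_i \le u_i \le \nu'_i$ for which \Eve has a strategy in $\calE_\calG$ ensuring $\widetilde{\MP}_i \ge u_i$ for all $i$. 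This is exactly a polyhedron query in the sense of~\cite{BR15} over the multi-dimensional mean-payoff game $\calE_\calG$. By the small-solution property of~\cite{BR15}, if a witness exists then one exists whose encoding is polynomial in the number of dimensions $3|\Agt|$, the number of facets $4|\Agt|$, $\log_2 W$, and the logarithm of the number of states of $\calE_\calG$, which is $O(|\Act| \cdot |\Agt|^2 \cdot |V|)$, hence polynomial in the size of the input $\calG$. The procedure therefore guesses such an encoding of $u$ nondeterministically in polynomial time, checks in polynomial time that $u$ lies in the polyhedron (condition~(1)), and calls an oracle deciding whether \Eve can ensure $\widetilde{\MP}_i \ge u_i$ for every $i$. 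Since $\widetilde{\MP}$ is a fixed combination of liminf/limsup mean payoffs and multi-dimensional mean-payoff games are solvable in \coNP~\cite{VCD+15}, this last question, posed on the exponential-size arena $\calE_\calG$ with polynomially-encoded weights, is a single \coNEXPTIME query, yielding the announced \NP$^{\text{\textsf{\upshape{NEXPTIME}}}}$ upper bound, which is contained in \EXPSPACE.

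For the \EXPTIME{} lower bound, I would adapt the reduction from the halting problem of an alternating linearly-bounded Turing machine given in the proof of Theorem~\ref{theo:Boolean} (recall that alternating linear space equals \EXPTIME). The construction $\calG_\calM$ there is a two-player game with public signal in which \Eve wins the reachability objective ``reach the halting state'' if and only if $\calM$ halts on $w$. To turn the Boolean objective into a mean-payoff one, I would equip the halting state with a self-loop of weight $1$, signalled through the public label $\ell$, and give weight $0$ to every other edge (including the absorbing $\textsf{fail}$ loop); \Adam's weight is the opposite. With this weighting an outcome reaches the halting state if and only if its (liminf or limsup) mean payoff is $1$, and otherwise it is $0$. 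Keeping the same two-player structure and public signal, ``$\calM$ halts on $w$'' is then equivalent to the existence of a Nash equilibrium whose main outcome gives \Eve mean payoff $1$ while \Adam cannot profitably deviate to avoid the halting state, which is precisely an instance of the constrained existence problem with threshold $\nu_\Eve = 1$. Public visibility is immediate, since the payoffs depend only on the public signal.

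The main obstacle I expect is on the upper-bound side, namely gluing the ingredients into a single \NP$^{\text{\textsf{\upshape{NEXPTIME}}}}$ statement. One must verify that the small-solution bound of~\cite{BR15} really yields an encoding of $u$ polynomial in the size of $\calG$ rather than of the exponentially larger $\calE_\calG$ — which it does, because the logarithm of the number of states of $\calE_\calG$ is itself polynomial in the parameters of $\calG$ — and that the mixed liminf/limsup objective $\widetilde{\MP}$ (with the factor-two reweighting induced by the two steps of $\calE_\calG$ per step of $\calG$) falls within the scope of the \coNP result of~\cite{VCD+15}, so that the oracle call is genuinely one \coNEXPTIME query on the exponential arena. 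The lower-bound reduction is comparatively routine once the reachability-to-mean-payoff encoding above is in place, the only care being to check that the added self-loop and the zero weights elsewhere create no spurious profitable deviation and preserve the equivalence already proven for $\calG_\calM$.
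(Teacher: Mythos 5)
Your proposal is correct and takes essentially the same route as the paper: the upper bound via the polyhedron query of~\cite{BR15}, guessing a polynomially-encoded witness $u$ and making a single \coNEXPTIME oracle call justified by the \coNP result of~\cite{VCD+15} on the exponential-size arena $\calE_\calG$, and the lower bound by transferring the alternating linearly-bounded Turing machine reduction of Theorem~\ref{theo:Boolean}. The paper states this lower-bound transfer in a single sentence, and your explicit encoding (weight-$1$ absorbing self-loop at the halting state, weight $0$ elsewhere, so that reaching the halting state is equivalent to mean payoff $1$) is exactly the intended argument.
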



The lower bound uses the result for Boolean objectives stated in
Theorem~\ref{theo:Boolean}.

\section{Conclusion}

In this paper, we have studied concurrent games with imperfect
monitoring modelled using signals. We have given some undecidability
results, even in the case of public signals, when the payoff functions
are not publicly visible. We have then proposed a construction to
capture single-player deviations in games with public signals, and
reduced the search of Nash equilibria to the synthesis of winning
strategies in a two-player turn-based games (with a rather complex
winning condition though). We have applied this general framework to
two classes of payoff functions, and obtained decidability
results.

As further work we wish to understand better if there could be richer
communication patterns which would allow representable knowledge
structures for Nash equilibria and thereby the synthesis of Nash
equilibria under imperfect monitoring. A source of inspiration for
further work will be~\cite{RT98}.
 
\bigskip\noindent \textbf{Acknowledgement.} We would like to thank
Sylvain Schmitz for his help in better understanding complexity
classes between \EXPTIME and \EXPSPACE.


\newcommand{\etalchar}[1]{$^{#1}$}

\appendix

\section{Why players should distinguish the actions they played}
\label{app:discuss}

In this paper, to model the perfect-recall hypothesis, we define the
undistinguishability relation $\sim_A$ of player $A$ using projection
$\pi_A$, which includes private actions of player $A$ and the
signal. Standardly though (see e.g.~\cite{MW05,BKP11,DR11}), the
undistinguishability relation is defined using solely the signal
(hence forgetting private actions). While this does not affect the
recall for distributed synthesis (existence of a strategy for the
grand coalition), nor for Nash equilibria (existence of a strategy
profile as well), we believe this is important for more complex
interactions between players (for instance for multi-agent logics like
in~\cite{BMM+17}).

We will argue this point using subgame-perfect equilibria. Those are
strategy profiles for which, for any history (generated or not by the
main profile), the strategy profile after that history is a Nash
equilibrium. Consider the game represented below, with Boolean
objectives as indicated, and where vertices $v_2$ and $v_3$ cannot be
distinguished by any of the two players

  \begin{center}
    \begin{tikzpicture}
      \everymath{\scriptstyle}
      \tikzset{noeud/.style={circle,draw=black,thick,fill=black!10,minimum
          height=6mm,inner sep=0pt}}

      \filldraw [green,rounded corners=3mm] (1,.5) -- (3.5,1) --
      (3.5,2) -- (2.5,2) -- (1,1.2) --cycle;
          
      \filldraw [green,rounded corners=3mm] (1,-.5) -- (3.5,-1) --
      (3.5,-2) -- (2.5,-2) -- (1,-1.2) --cycle;

      \filldraw [yellow,rounded corners=3mm] (.8,-.5) -- (3.5,-.5) --
      (3.5,.5) -- (.8,.5) --cycle;

      \draw (0,0) node [noeud] (A) {$v_0$};
      \draw (3,0) node [noeud] (B) {$v_1$} node [right=.5cm]
      {{\footnotesize winning for $A_1$, losing for $A_2$}};
      \draw (3,1.5) node [noeud] (C) {$v_2$};
      \draw (3,-1.5) node [noeud] (D) {$v_3$};
      \draw (6,2) node [noeud] (C1) {$v_4$} node [right=.5cm]
      {{\footnotesize winning for $A_1$, losing for $A_2$}};
      \draw (6,1) node [noeud] (C2) {$v_5$} node [right=.5cm]
      {{\footnotesize losing for $A_1$ winning for $A_2$}};
      \draw (6,-1) node [noeud] (D1) {$v_6$} node [right=.5cm]
      {{\footnotesize losing for $A_1$, winning for $A_2$}};
      \draw (6,-2) node [noeud] (D2) {$v_7$} node [right=.5cm]
      {{\footnotesize winning for $A_1$, losing for $A_2$}};

      \draw [latex'-] (A) -- +(-.8,0);

      \draw [-latex',red,line width=1.5pt] (A) -- (B) node [midway,above]
      {$\textcolor{red}{\tuple{a,a}}\textcolor{black}{,\tuple{b,a}}$};
      \draw [-latex'] (A) -- (C) node [midway,above,sloped]
      {$\tuple{a,b}$};
      \draw [-latex',red,line width=1.5pt] (C) -- (C1) node [midway,above,sloped]
      {$\tuple{a,-}$};
      \draw [-latex'] (C) -- (C2) node [midway,below,sloped]
      {$\tuple{b,-}$};
      \draw [-latex'] (A) -- (D) node [midway,below,sloped]
      {$\tuple{b,b}$};
      \draw [-latex'] (D) -- (D1) node [midway,above,sloped]
      {$\tuple{a,-}$};
      \draw [-latex',red,line width=1.5pt] (D) -- (D2) node [midway,below,sloped]
      {$\tuple{b,-}$};
      
    \end{tikzpicture}
  \end{center}
  The strategy profile $\sigma_\Agt$ represented in bold red, and
  defined by $\sigma_\Agt(v_0) = \tuple{a,a}$, $\sigma_\Agt(v_0 \cdot
  \tuple{a,b} \cdot v_2) = \tuple{a,-}$ and $\sigma_\Agt(v_0 \cdot
  \tuple{b,b} \cdot v_3) = \tuple{b,-}$, is a subgame-perfect
  equilibrium in our framework. Indeed, player $A_1$ distinguishes the
  two histories $v_0 \cdot \tuple{a,b} \cdot v_2$ and $v_0 \cdot
  \tuple{b,b} \cdot v_3$, since she can only be in $v_2$ (resp. $v_3$)
  if she played $a$ (resp. $b$). On the other hand, this is not a
  proper profile in the standard framework since player $A_1$ is not
  supposed to distinguish these histories. We believe this shows that
  standard assumptions do not properly model perfect recall.

\end{document}